\newcommand{\FF}{\mathcal{F}}
\newcommand{\MM}{\mathcal{M}}
\newcommand{\NN}{\mathcal{N}}
\newcommand{\EE}{\mathcal{E}}
\newcommand{\PP}{\mathcal{P}}
\newcommand{\BB}{\mathcal{B}}
\newcommand{\dd}{\partial}
\newcommand{\ra}{\rightarrow}
\newcommand{\hra}{\hookrightarrow}
\newcommand{\ola}{\overleftarrow}
\newcommand{\RR}{\mathbb{R}}
\newcommand{\be}{\begin{equation}}
\newcommand{\ee}{\end{equation}}
\newcommand{\id}{\mathrm{id}}
\newcommand{\HH}{\mathcal{H}}
\newcommand{\DD}{\mathcal{D}}
\newcommand{\Fun}{\mathrm{Fun}}
\newcommand{\ZZ}{\mathbb{Z}}
\newcommand{\bt}{\bullet}
\newcommand{\mr}{\mathrm}
\newcommand{\Maps}{\mathrm{Maps}}
\newcommand{\Map}{\mathrm{Map}}
\newcommand{\End}{\mathrm{End}}
\newcommand{\lifted}{\mathrm{lifted}}
\newcommand{\ad}{\mathrm{ad}}
\newcommand{\ev}{\mathrm{ev}}
\newcommand{\g}{\mathfrak{g}}
\newcommand{\h}{\mathfrak{h}}
\newcommand{\p}{\mathsf{p}}
\newcommand{\Vect}{\mathfrak{X}}
\newcommand{\ver}{\mathrm{vert}}
\newcommand{\hor}{\mathrm{hor}}
\newcommand{\aux}{\mathrm{aux}}
\newcommand{\kin}{\mathrm{kin}}
\newcommand{\target}{\mathrm{target}}
\newcommand{\Diff}{\mathrm{Diff}}
\newcommand{\BV}{\mathrm{BV}}
\newcommand{\A}{\mathcal{A}}
\newcommand{\z}{\mathrm{z}}
\newcommand{\tr}{\mathrm{tr}}
\newcommand{\Conf}{\mathrm{Conf}}
\newcommand{\Pexp}{\mathcal{P}\exp}
\newcommand{\mb}{\mathbf}
\newcommand{\ii}{\mathsf{i}}
\newcommand{\iii}{i}
\newcommand{\ddd}{\delta}
\newtheorem{thm}{Theorem}
\newtheorem{prop}{Proposition}
\newcommand{\LL}{\mathcal{L}}
\newcommand{\tot}{\mathrm{tot}}
\theoremstyle{remark}
\newtheorem{rem}{Remark}
\theoremstyle{definition}
\newtheorem{definition}{Definition}
\begin{document}
\title{A construction of observables for AKSZ sigma models}

\author{Pavel Mnev}
\address{Institut f\"ur Mathematik,
Universit\"at Z\"urich,
Winterthurerstrasse 190,
CH-8057, Z\"urich, Switzerland}

\email{pmnev@pdmi.ras.ru}

\maketitle

\begin{abstract}
A construction of gauge-invariant observables is suggested for a class of topological field theories, the AKSZ sigma-models. The observables are associated to extensions of the target $Q$-manifold of the sigma model to a $Q$-bundle over it with additional Hamiltonian structure in fibers.
\end{abstract}

\tableofcontents

\section{Introduction}

The interest in observables in topological field theories is largely due to the applications in algebraic topology, where expectation values of observables are known to yield (at least in some examples) invariants of knots and links under ambient isotopy or, more generally, cohomology classes of spaces of embeddings. The seminal example here was given in the work of Witten \cite{Witten_Jones}, where the expectation value of the Wilson loop observable in  Chern-Simons theory with gauge group $SU(2)$, associated to a knot in the 3-sphere, was found out to yield the Jones polynomial of the knot.

Given a diffeomorphism-invariant action $S_\Sigma$ of a topological field theory on a manifold $\Sigma$ with space of fields $F_\Sigma$, one is particularly interested in observables $O\in C^\infty(F_\Sigma)$ associated to embedded submanifolds $\ii:\gamma\hra \Sigma$ which depend on fields only via their pull-back from $\Sigma$ to $\gamma$:
\be O_{\gamma,\ii}(X)=O(\ii^*X) \label{O = O(i^* X)}\ee
where $X$ is the field. Such observables are automatically invariant with respect to diffeomorphisms $\phi: \Sigma\ra\Sigma$ in the sense that
\be O_{\gamma,\phi\circ \ii}((\phi^{-1})^*X)=O_{\gamma,\ii}(X) \label{O diff-invar}\ee
where $(\phi^{-1})^*X$ is the pull-back of the field by $\phi^{-1}$.

On a formal level, the expectation value of the observable
\be \langle O_{\gamma,\ii} \rangle = \int_{F_\Sigma} \DD X\; O_{\gamma,\ii}(X)\; e^{i S_\Sigma(X)} \label{O PI}\ee
is diffeomorphism-invariant
\be \langle O_{\gamma,\phi\circ \ii} \rangle = \langle O_{\gamma,\ii} \rangle \label{diff-invar of <O>}\ee
due to (\ref{O diff-invar}),  diffeomorphism-invariance of the action $S_\Sigma$ and of the path integral measure $\DD X$. This means in particular that the expectation value is an invariant of the embedding $\ii:\gamma\hra \Sigma$ under ambient isotopy.

Since topological field theories possess gauge symmetry, one also requires that the observable is gauge-invariant, so that in the path integral (\ref{O PI}) one could pass to integration over gauge equivalence classes of fields.

In this paper we employ Batalin-Vilkovisky (BV) formalism to treat systems with gauge symmetry (cf. e.g. \cite{Schwarz} for details). In particular the space of fields $F_\Sigma$ becomes extended to an odd-symlectic $Q$-manifold $\FF_\Sigma$ with the action $S_\Sigma$ extended to a function on $\FF_\Sigma$, satisfying the master equation\footnote{In this outline we do not make explicit distinction between classical and quantum master equation. We write the classical one which also coincides with the quantum one if $\Delta_\Sigma S=0$ for $\Delta_\Sigma$ the BV Laplacian. Likewise, we write the classical gauge-invariance condition on the observable $Q_\Sigma O=0$ which coincides with the quantum one if $\Delta_\Sigma O=0$, cf. section \ref{sec: pre-obs and obs} for details.} $\{S_\Sigma,S_\Sigma\}=0$ and generating the cohomological vector field $Q_\Sigma$ as its Hamiltonian vector field. In this formalism the gauge-invariance of the observable is expressed as
\be Q_\Sigma O=0 \label{QO=0}\ee
and the path integral (\ref{O PI})
for the expectation value is traded for an integral over a Lagrangian submanifold $\LL$ in $\FF_\Sigma$, with $\LL$ playing the role of the gauge-fixing condition.

One idea how to construct an observable for a gauge theory in BV formalism is to consider an extension of the BV theory $(\FF_\Sigma,S_\Sigma)$ to a BV theory on a larger space $(\FF_\Sigma\times \FF^\aux, S_\Sigma+S^\aux)$ with $\FF^\aux$ the space of ``auxiliary fields'' and $S^\aux$ the action for auxiliary fields $Y$ which is also allowed to depend on the ``ambient'' fields $X\in \FF_\Sigma$. Then one uses the BV push-forward construction
\be O(X)=\int_{\LL^\aux\subset \FF^\aux}\DD Y\; e^{i S^\aux(X,Y)} \label{BV push-forward}\ee
(propositions \ref{prop: pre-obs to obs}, \ref{prop: quantum pre-obs to obs} in section \ref{sec: pre-obs and obs}, cf. also \cite{DiscrBF}) to integrate out the auxiliary fields and produce an observable for the ambient theory $(\FF_\Sigma,S_\Sigma)$. In this paper we call such extensions by auxiliary fields ``pre-observables'' (to be more precise, we use a slightly better definition with 
equation (\ref{pre-observable CME}) on $S^\aux$ instead of the master equation on $S_\Sigma+S^\aux$, which still suffices to produce an observable, cf. definition \ref{def: class pre-observable} and remark \ref{rem: pre-obs to CME for S+S^aux}).

In this paper we consider observables obtained by the construction outlined above for topological sigma models coming from AKSZ construction \cite{AKSZ} (cf. a brief reminder in section \ref{sec: AKSZ reminder}), where $\FF_\Sigma$ is the mapping space from the tangent bundle of $\Sigma$ with shifted parity of fibers to a target $Q$-manifold $\MM$ with additional Hamiltonian structure (cf. definition \ref{def: Ham Q-mfd}). In this setting we give a construction of pre-observables (section \ref{sec: AKSZ pre-observable}), which associates a pre-observable for the AKSZ sigma model to an embedded submanifold $\ii:\gamma\hra \Sigma$ and an extension of the target $\MM$ to a $Q$-bundle over $\MM$ with additional Hamiltonian structure in fibers (the ``Hamiltonian $Q$-bundle'', cf. section \ref{sec: Ham Q-bun}).
For the ambient and auxiliary actions to have degree zero, degrees of Hamiltonian structures in the base and the fiber of the target $Q$-bundle have to match\footnote{More precisely, degrees of Hamiltonians (not of symplectic structures) have to match the dimensions.} dimensions of $\Sigma$ and $\gamma$ respectively.

Having a pre-observable for an AKSZ sigma model, we construct the corresponding observable by the BV push-forward (\ref{BV push-forward}). By construction, this observable is gauge-invariant in the sense of (\ref{QO=0}) and satisfies (\ref{O = O(i^* X)}), and so formally produces expectation values which are invariant under ambient isotopy. Of course, the problem with this definition of the observable is that (\ref{BV push-forward}) is generally a path integral. In section \ref{sec: from pre-obs to obs} we consider several simple cases when this integral can be made sense of and its expected properties can be rigorously checked:
\begin{itemize}
\item case when the fiber in the target is a point (which gives rise to observables given by an exponential of an integral of a local expression),
\item case of one-dimensional observables (when the path integral becomes a path integral of quantum mechanical type and can be regularized via geometric quantization, which gives rise to certain generalization of Wilson loops),
\item case when the integral (\ref{BV push-forward}) is Gaussian which gives rise to a class of observables which we call ``torsion-like'' for their similarity to Ray-Singer torsion.
\end{itemize}

In section \ref{sec: examples of observables} we give explicit examples of observables falling within one of the three classes above. In particular, we recover the usual Wilson loop observable in Chern-Simons theory, with the corresponding path integral expression being the Alekseev-Faddeev-Shatashvili path integral formula for the Wilson loop \cite{AFS}. We also recover the Cattaneo-Rossi ``Wilson surface'' observable for $BF$ theory \cite{Cattaneo-Rossi}.

In this paper we concentrate only on the construction of observables, we are not calculating their expectation values.

\subsection{Logic of the construction}
\begin{enumerate}[(i)]
\item \label{logic step 1} For an AKSZ sigma model on manifold $\Sigma$ with target $\MM$, find a Hamiltonian $Q$-bundle $\EE$ over $\MM$. Then for every embedded submanifold $\ii:\gamma\hra \Sigma$ of fixed dimension matching the degree of Hamiltonian structure in fibers of $\EE$, we construct a pre-observable.
\item \label{logic step 2}
Take the fiber BV integral (\ref{BV push-forward}) over the auxiliary fields of
the pre-observable constructed in step (\ref{logic step 1})  to obtain an observable for the AKSZ sigma model.
\end{enumerate}
Unlike step (\ref{logic step 1}), step (\ref{logic step 2}) is not canonical, in the sense that it is a quantization problem: the path integral has to be made sense of (and the gauge-invariance of the result has to be checked) which can be done for certain classes of target $Q$-bundles, but it is not clear whether it is possible to do in greater generality\footnote{One can indeed try to define the path integral perturbatively, as sum of Feynman diagrams represented by integrals over compactified configuration spaces of tuples of points on $\gamma$. However, for the formal argument of proposition \ref{prop: pre-obs to obs} that the result is gauge-invariant to become rigorous, one has to prove that the hidden boundary strata of the configuration spaces appearing in the calculation of $Q_\Sigma O$ (cf. the proof of proposition \ref{prop 6}) do not contribute, which is not true in general case.}.

\subsection{Plan of the paper}
The logical organization of the exposition is as follows.
\begin{itemize}
\item Sections \ref{sec: AKSZ reminder}, \ref{sec: Q-bun reminder}, bits of \ref{sec: pre-obs and obs} concerning BV observables --- background reminders.
\item Section \ref{sec: AKSZ pre-observable} --- step (\ref{logic step 1}) of the main construction.
\item Section \ref{sec: Ham Q-bun} --- auxiliary construction for step (\ref{logic step 1}) of the main construction.
\item Section \ref{sec: pre-obs and obs} --- motivation for step (\ref{logic step 2}) of the main construction.
\item Section \ref{sec: from pre-obs to obs} --- examples for step (\ref{logic step 2}) of the main construction.
\item Section \ref{sec: examples of observables} --- fully explicit examples.
\end{itemize}

Section \ref{sec: AKSZ reminder} is a short reminder of the AKSZ construction of topological sigma models in Batalin-Vilkovisky formalism. We also recall how some well-known topological field theories fit into the construction: Chern-Simons theory, $BF$ theory, Poisson sigma model.

In section \ref{sec: Ham Q-bun} we first briefly recall the standard notion of a $Q$-bundle and then define a ``Hamiltonian'' $Q$-bundle, preparing the grounds for the construction of pre-observables in section \ref{sec: AKSZ pre-observable}. We start with the definition \ref{def: trivial Ham Q-bundles} of a trivial Hamiltonian $Q$-bundle (``trivial'' here means trivial as a fiber bundle; the cohomological vector field is \textit{not} required to be a sum of a cohomological vector field on the base and another one on the fiber). All our examples are of this type, but for the completeness of the exposition, we also give a slightly more general definition \ref{def: general Ham Q-bundle}, which does not explicitly rely on the total space being a direct product. However in a local trivialization definition \ref{def: general Ham Q-bundle} boils down to definition \ref{def: trivial Ham Q-bundles}.

In section \ref{sec: pre-obs and obs} we recall the standard notions of classical and quantum observables in BV formalism and introduce the notion of a pre-observable, which comes in three modifications:
\begin{enumerate}[(i)]
\item Classical pre-observable, definition \ref{def: class pre-observable}: essentially, an extension of the action of the ambient classical BV theory to a solution of classical master equation on the space of fields extended by auxiliary fields. We give a technically more convenient definition with equation (\ref{pre-observable CME}) required
    instead of the classical master equation on extended space of fields (cf. remark \ref{rem: pre-obs to CME for S+S^aux} for the relation between the two).
\item Semi-quantum pre-observable, definition \ref{def: semi-quantum pre-observable}, suited for integrating out auxiliary fields to obtain an observable for the ambient theory (proposition \ref{prop: pre-obs to obs}).
\item Quantum pre-observable for a quantum ambient BV theory (i.e. one with the action satisfying the quantum master equation), definition \ref{def: quantum observable}: this is also an extension of the ambient theory by auxiliary fields, plus an extension of the action to a solution of quantum master equation on the extended space. From a quantum pre-observable one can induce a quantum observable for the ambient theory, by integrating out auxiliary fields (proposition \ref{prop: quantum pre-obs to obs}).
\end{enumerate}
In this section in the discussion of quantization we always work with spaces of fields as with finite dimensional spaces and the integrals are over finite-dimensional super-manifolds. In the context of local field theory, the BV push-forward becomes a path integral, so the proofs given in section \ref{sec: pre-obs and obs} stop to work and the propositions become conjectures that have to be proven by more delicate means (cf. the proofs of propositions \ref{prop 5}, \ref{prop 6} in section \ref{sec: from pre-obs to obs}).

Section \ref{sec: AKSZ pre-observable} is the logical core of the paper. Here we give the construction of a classical pre-observable for an AKSZ theory out of an extension of the target to a Hamiltonian $Q$-bundle.

In section \ref{sec: from pre-obs to obs} we treat several classes of situations when the BV push-forward yielding an observable out of a pre-observable constructed in section \ref{sec: AKSZ pre-observable} can be performed rigorously: case of target fiber being a point, case of 1-dimensional observables (via geometric quantization), case when the BV push-forward is given by a Gaussian integral.

In section \ref{sec: examples of observables} we specialize the constructions of section \ref{sec: from pre-obs to obs} to present several explicit examples, including the Wilson loop together with its path integral representation known from \cite{AFS}, Cattaneo-Rossi codimension 2 observable for $BF$ theory, torsion observables in Chern-Simons and $BF$ theories, etc.

\subsection{Acknowledgements} I wish to thank Anton Alekseev, Alberto Cattaneo, Andrei Losev and Nicolai Reshetikhin for inspiring discussions.
This work was partially supported by RFBR grant  11-01-00570-a and by SNF grant 200021\_137595.

\section*{Terminology, notations}
\textbf{Terminology.}
\begin{itemize}
\item $Q$-manifold (or bundle) = differential graded manifold (or bundle).
\item Ghost number = internal degree (to distinguish from de Rham degree of a differential form) = $\ZZ$-grading on functions on $\ZZ$-graded manifolds.
\item Observable = gauge-invariant functional on the space of fields (cf. section \ref{sec: pre-obs and obs} for definitions).
\item Expectation value = correlator.
\end{itemize}

\textbf{Conventions.}
We set the Planck's constant $\hbar=1$ (cf. remark \ref{rem: Planck's constant} on how to re-introduce $\hbar$).

\textbf{Notations.}
We use $\MM,\NN,\EE,\FF$ etc. for $\ZZ$-graded manifolds, 
$M,N$ etc. for ordinary (non-graded) manifolds; $\Sigma$ always denotes the source (spacetime) manifold of the sigma model, $\gamma$ is typically a submanifold of $\Sigma$ on which the observable is supported. We use $\LL$ for a Lagrangian submanifold of a degree -1 symplectic graded manifold.

We denote by $\Vect(\MM)$ the Lie algebra of vector fields on $\MM$. For a fiber bundle $\pi: \EE\ra\MM$, we denote by $\Vect^\ver(\EE)$ the Lie algebra of vertical vector fields on the total space $\EE$, i.e. the space of sections $\Gamma(\EE,T^\ver\EE)$ of the vertical distribution on $\EE$, $T^\ver\EE=\ker(d\pi)\subset T\EE$.

We denote the degree (the ghost number) of functions/differential forms/vector fields on a graded manifold by $|\cdots|$.

\section{AKSZ reminder}
\label{sec: AKSZ reminder}
What follows is a very short reminder of the AKSZ construction of topological sigma models in Batalin-Vilkovisky formalism, to fix the terminology and notation. We refer the reader to the original paper \cite{AKSZ} and the later expositions in \cite{CF_AKSZ}, \cite{Royt} for details.
\subsection{Target data.} Let $\MM$ be a degree $n$ sympectic $Q$-manifold, i.e. a $\ZZ$-graded manifold endowed with a degree 1 vector field $Q$ satisfying $Q^2=0$ (the cohomological vector field) and with
a degree\footnote{By ``degree'' here we mean the internal $\ZZ$-grading coming from the grading on $\MM$.} $n$ symplectic form $\omega\in\Omega^2(\MM)$ which is compatible with $Q$, i.e. $L_Q\omega=0$.

Assume\footnote{In fact (cf. \cite{Royt}), for $n\neq -1$ the symplectic property of the cohomological vector field ($L_{Q}\omega=0$) implies existence and uniqueness of the Hamiltonian $\Theta=\frac{1}{n+1}\,\iota_E \iota_Q\omega$ where $E$ is the Euler vector field. Maurer-Cartan equation (\ref{Theta MC}) follows from $Q^2=0$ for $n\neq -2$. Also, for $n\neq 0$, a closed form $\omega$ is automatically exact.}
that $Q$ has a Hamiltonian function $\Theta\in C^\infty(\MM)$ with  $|\Theta|=n+1$, $\{\Theta,\bt\}_{\omega}=Q$ and satisfying
\be\{\Theta,\Theta\}_{\omega}=0 \label{Theta MC}\ee
Also assume that $\omega$ is exact, with $\alpha\in \Omega^1(\MM)$ a primitive.

\begin{definition}\label{def: Ham Q-mfd}
We call the set of data $(\MM,Q,\omega=\delta\alpha,\Theta)$ a \textit{Hamiltonian $Q$-manifold of degree $n$}.
\end{definition}

\subsection{The AKSZ sigma model.} Fix a Hamiltonian $Q$-manifold $\MM$ of degree $n\geq -1$ and let $\Sigma$ be an oriented closed manifold, $\dim\Sigma=n+1$.
Then one constructs the space of fields as the space of graded maps between graded manifolds from the degree-shifted tangent bundle $T[1]\Sigma$ to $\MM$:
\be \FF_\Sigma=\Map(T[1]\Sigma,\MM) \label{mapping space}\ee
It is a $Q$-manifold with the cohomological vector field coming from the lifting of $Q$ on the target and of the de Rham operator $d_\Sigma$ on $\Sigma$ (viewed as a cohomological vector field on $T[1]\Sigma$) to the mapping space:
\be Q_\Sigma=(d_\Sigma)^{\lifted}+(Q)^{\lifted} \qquad \in \Vect(\FF_{\Sigma}) \label{Q AKSZ}\ee

\textit{Transgression map.} The following natural maps
\be
\begin{CD}
\FF_{\Sigma}\times T[1]\Sigma @>\ev>> \MM \\
@VpVV \\
\FF_{\Sigma}
\end{CD}
\label{ev-p diagram}
\ee
(where $p$ is the projection to the first factor) allow us to define the transgression map
\be \tau_{\Sigma}= p_*\ev^*:\quad \Omega^\bt(\MM)\ra \Omega^\bt(\FF_{\Sigma}) \label{AKSZ transgression}\ee
Here $p_*$ is the fiber integration over $T[1]\Sigma$ (with the canonical integration measure). Map $\tau_{\Sigma}$ preserves the de Rham degree of a form, but changes the internal grading (the ghost number) by $-\dim\Sigma$.

If $u\in \Vect(T[1]\Sigma)$, $v\in \Vect(\MM)$ are any vector fields on the source and the target and $\psi\in \Omega^\bt(\MM)$ is a form on the target, then for the Lie derivatives of the transgressed form along the lifted vector fields we have
\begin{eqnarray}
L_{u^\lifted}\tau_\Sigma(\psi) &=& 0 \label{L_v tau = 0} \\
L_{v^\lifted} \tau_\Sigma(\psi) &=& (-1)^{|v|\dim\Sigma} \tau_\Sigma(L_v\psi) \label{L_u tau = tau L_u}
\end{eqnarray}
The finite version of (\ref{L_v tau = 0}) is that for $\Phi:T[1]\Sigma\ra T[1]\Sigma$ a diffeomorphism of the source,
we have
\be (\Phi^*)^*\tau_\Sigma(\psi)= \tau_\Sigma(\psi) \label{Phi^** tau}\ee
where $\Phi^*: \FF_\Sigma\ra\FF_\Sigma$ is the lifting of $\Phi$ to the mapping space and $(\Phi^*)^*:\Omega^\bt(\FF_\Sigma)\ra \Omega^\bt(\FF_\Sigma)$ is the pull-back by $\Phi^*$.

\textit{The BV 2-form and the master action.} One obtains the degree $-1$ symplectic form (the ``BV 2-form'') on $\FF_{\Sigma}$ from the target by transgression\footnote{We introduce the sign $(-1)^{\dim\Sigma}$ in this definition to avoid signs in the formula for the action below. The reader may encounter different sign conventions for the AKSZ construction in the literature.}:
\be \Omega_{\Sigma}=(-1)^{\dim\Sigma}\tau_{\Sigma}(\omega) \quad \in \Omega^2(\FF_{\Sigma}) \label{AKSZ Omega}\ee
The Hamiltonian function for $Q_{\Sigma}$ (the master action) is constructed as
\be S_{\Sigma} = \underbrace{\iota_{d_\Sigma^{\lifted}}\;\tau_{\Sigma}(\alpha)}_{S_{\Sigma}^\mr{kin}} + \underbrace{\tau_{\Sigma}(\Theta)}_{S_{\Sigma}^\mr{target}}\qquad \in C^\infty (\FF_{\Sigma}) \label{AKSZ S}\ee
It automatically satisfies the classical master equation
$$\{S_{\Sigma},S_{\Sigma}\}_{\Omega_{\Sigma}}=0$$

One can summarize the construction above by saying that we have a degree $-1$ Hamiltonian $Q$-manifold structure on the mapping space (\ref{mapping space}): $(\FF_\Sigma,Q_\Sigma,\Omega_\Sigma,S_\Sigma)$. The primitive $1$-form for the BV 2-form can also be constructed by transgression as $\tau_\Sigma(\alpha)$.

Note that exact shifts of the target 1-form, $\alpha\mapsto \alpha + \delta f$, with $f\in C^\infty(M)$ leave $S_\Sigma$ unchanged.

\textit{Why AKSZ theory is topological.} For $\phi\in \Diff(\Sigma)$ a diffeomorphism of $\Sigma$, denote $\tilde\phi\in \Diff(T[1]\Sigma)$ the tangent lift of $\phi$ to $T[1]\Sigma$. Then
\be (\tilde\phi^*)^*S_\Sigma=S_\Sigma,\qquad (\tilde\phi^*)^*\Omega_\Sigma=\Omega_\Sigma \label{AKSZ action diff-invar}\ee
because of (\ref{Phi^** tau}) and because $d_\Sigma\in\Vect(T[1]\Sigma)$ commutes with $\tilde\phi$, since the latter is a tangent lift.

\textit{In coordinates.} Let $x^a$ be local homogeneous coordinates on the target $\MM$, let $u^\mu$ be local coordinates on $\Sigma$ and $\theta^\mu=du^\mu$ be the associated degree 1 fiber coordinates on $T[1]\Sigma$. Then locally an element of $\FF_\Sigma$ 
is parameterized by
\be X^a(u,\theta)=\sum_{k=0}^{\dim\Sigma}\;\underbrace{\sum_{1\leq \mu_1<\cdots<\mu_k\leq\dim\Sigma} X^a_{\mu_1\cdots\mu_k}(u)\;\theta^{\mu_1}\cdots\theta^{\mu_k} }_{X^a_{(k)}(u,\theta)} \label{superfield}\ee
Coefficient functions $X^a_{\mu_1\cdots\mu_k}(u)$ are local coordinates of degree $|x^a|-k$ on the mapping space $\FF_\Sigma$.  Expression (\ref{superfield}) is known as (the component of) the \textit{superfield}, and it can be regarded as a generating function for the coordinates on the mapping space $\FF_\Sigma$.

For any function $f\in C^\infty(\MM)$, we have
\begin{multline}
\ev^*f=f(X)=\\
=f(X_{(0)})+X^a_{\geq 1}\cdot (\dd_a f)(X_{(0)})+\frac{1}{2} X^a_{\geq 1} X^b_{\geq 1}\cdot (\dd_b \dd_a f)(X_{(0)}) +\cdots\\
\in C^\infty(\FF_\Sigma\times T[1]\Sigma)
\end{multline}
where we denoted $X^a_{\geq 1}=\sum_{k\geq 1}X^a_{(k)}$ the part of the superfield of positive de Rham degree with respect to $\Sigma$; $\ev$ is the horizontal arrow in (\ref{ev-p diagram}).

Let $\alpha$ and $\omega$ be locally given as
$\alpha=\alpha_a(x)\, \delta x^a$ and $\omega=\frac{1}{2}\,\omega_{ab}(x)\,\delta x^a \wedge \delta x^b$.
Then the BV 2-form and its primitive are: 
$$\Omega_\Sigma=(-1)^{\dim\Sigma}\int_\Sigma \frac{1}{2}\;\omega_{ab}(X)\;\delta X^a \wedge \delta X^b,\qquad \alpha_\Sigma=\int_\Sigma \alpha_a(X)\;\delta X^a $$
(Note that we use $\delta$ to denote the de Rham differential on the target $\MM$ and on the mapping space $\FF_\Sigma$. We reserve symbol $d$ for the de Rham differential on the source $\Sigma$.)

The master action is:
$$S_\Sigma(X)=\underbrace{\int_\Sigma \alpha_a(X)\; dX^a}_{S^\mr{kin}_\Sigma} + \underbrace{\int_\Sigma \Theta(X)}_{S^\mr{target}_\Sigma}$$

If the cohomological vector field on the target is locally written as $Q=Q^a(x)\,\frac{\dd}{\dd x^a}$ then the cohomological vector field (\ref{Q AKSZ}) is determined by its action on the components of the superfield:
$$Q_\Sigma X^a=d X^a + Q^a(X)$$

Critical points of $S_\Sigma$ are (with our sign conventions) $Q$-anti-morphisms between $T[1]\Sigma$ and $\MM$, i.e. $Q$-morphisms between $(T[1]\Sigma,d)$ and $(\MM,-Q)$.

\subsection{Examples}
\label{sec:AKSZ examples}
Here we recall some of the standard examples of the AKSZ construction.

\textit{Chern-Simons theory \cite{AKSZ}.} Let $\g$ be a quadratic Lie algebra, i.e. a Lie algebra with a non-degenerate invariant pairing $(,)$. Denote by $\psi: \g[1]\ra \g$ the degree $1$ $\g$-valued coordinate on $\g[1]$. We choose the target Hamiltonian $Q$-manifold of degree $2$ as
\begin{multline} \label{CS target}
\MM=\g[1],\quad Q=\left\langle \frac{1}{2}[\psi,\psi],\frac{\dd}{\dd\psi} \right\rangle,\\ \omega=\frac{1}{2}(\delta\psi,\delta\psi),\quad \alpha=\frac{1}{2}(\psi,\delta\psi),\quad
\Theta=\frac{1}{6} (\psi,[\psi,\psi])
\end{multline}
where $\langle,\rangle$ is the canonical pairing\footnote{We will generally be using notation $\langle,\rangle$ for the canonical pairing $V\otimes V^* \ra \RR$ between a vector space $V$ and its dual; sometimes we will indicate the respective $V$ as a subscript: $\langle,\rangle_V$.} between $\g$ and $\g^*$.
The associated AKSZ sigma model on a closed oriented 3-manifold $\Sigma$ has the space of fields
$$\FF_\Sigma=\Map(T[1]\Sigma,\g[1])\cong \g[1]\otimes \Omega^\bt(\Sigma)$$
The superfield is
\be A=A_{(0)}+A_{(1)}+A_{(2)}+A_{(3)} \label{CS superfield}\ee
with $A_{(k)}$ a coordinate on $\FF_\Sigma$ with values in $\g$-valued $k$-forms on $\Sigma$, with internal degree (ghost number) $1-k$, for $k=0,1,2,3$. The BV 2-form is
$$\Omega_\Sigma=-\frac{1}{2}\int_\Sigma (\delta A,\delta A)$$
and the action is
$$S_\Sigma=\int_\Sigma \frac{1}{2}(A,dA)+\frac{1}{6}(A,[A,A]) $$
This is the action of Chern-Simons theory in Batalin-Vilkovisky formalism.

In case $\g=\RR$ with abelian Lie algebra structure, we have $Q=\Theta=0$ on the target and
\be \FF_\Sigma=\Omega^\bt(\Sigma)[1],\quad \Omega_\Sigma=-\frac{1}{2}\int_\Sigma \delta A\wedge \delta A,\quad S_\Sigma=\frac{1}{2}\int_\Sigma A\wedge dA \label{abelian CS}\ee
This is the \textit{abelian} Chern-Simons theory in BV formalism.

\textit{$BF$ theory.} For $\g$ a Lie algebra (not necessarily quadratic\footnote{However, for the consistency of quantization, in particular for the quantum master equation (\ref{QME}), one has to require that $\g$ is unimodular.}) and $D$ a non-negative integer, we define the target Hamiltonian $Q$-manifold of degree $D-1$ as
\begin{multline} \label{BF target}
\MM=\g[1]\oplus \g^*[D-2],\quad Q=\left\langle \frac{1}{2}[\psi,\psi],\frac{\dd}{\dd\psi} \right\rangle+\left\langle \ad^*_\psi\xi, \frac{\dd}{\dd\xi}\right\rangle,\\ \omega=\langle\delta\xi,\delta\psi\rangle,\quad \alpha=\langle \xi,\delta\psi \rangle,\quad
\Theta=\frac{1}{2} \langle \xi,[\psi,\psi] \rangle
\end{multline}
Here $\psi: \g[1]\ra \g$ is as before and $\xi: \g^*[D-2]\ra \g^*$ is the $\g^*$-valued coordinate on $\g^*[D-2]$ of degree $D-2$; $\ad^*$ is the coadjoint action of $\g$ on $\g^*$.

The associated AKSZ sigma model on a closed oriented $D$-manifold $\Sigma$ has the space of fields
\be \FF_\Sigma=\g[1]\otimes \Omega^\bt(\Sigma) \oplus \g^*[D-2]\otimes \Omega^\bt(\Sigma) \label{BF F}\ee
The superfields associated to $\psi$ and $\xi$ are respectively
$$A=\sum_{k=0}^D A_{(k)},\quad B=\sum_{k=0}^D B_{(k)}$$
with $A_{(k)}$ a $\g$-valued $k$-form on $\Sigma$ of internal degree $1-k$; $B_{(k)}$ is a $\g^*$-valued $k$-form of internal degree $D-2-k$. The BV 2-form and the action are:
\be \Omega_\Sigma=(-1)^D \int_\Sigma \langle \delta B,\delta A \rangle,\quad S_\Sigma=\int_\Sigma \left\langle B, dA+\frac{1}{2}[A,A] \right\rangle  \label{BF Omega and S}\ee
This is the $BF$ theory in BV formalism.

In abelian case, $\g=\RR$, we have $Q=\Theta=0$ on the target and
\begin{multline}
\FF_\Sigma=\Omega^\bt(\Sigma)[1]\oplus \Omega^\bt(\Sigma)[D-2],\\
\Omega_\Sigma=(-1)^D \int_\Sigma \delta B\wedge \delta A,\quad S_\Sigma=\int_\Sigma B\wedge dA \label{abelian BF}
\end{multline}

\textit{Poisson sigma model \cite{CF_AKSZ}.} Let $M$ be a manifold endowed with a Poisson bivector $\pi\in \Gamma(M,\wedge^2TM)$. We construct the target Hamiltonian $Q$-manifold of degree $1$ as
\begin{multline}
\label{PSM target data}
\MM=T^*[1]M,\\
Q=\left\langle \pi(x), p\wedge \frac{\dd}{\dd x} \right\rangle_{\wedge^2 T_x M}+ \frac{1}{2}\left\langle \frac{\dd}{\dd x} \pi(x),(p\wedge p) \otimes \frac{\dd}{\dd p} \right\rangle_{(\wedge^2 T\otimes T^*)_x M}, \\
\omega=\langle\delta p,\delta x\rangle, \quad \alpha=\langle p,\delta x \rangle,\quad
\Theta=\frac{1}{2}\left\langle \pi(x), p\wedge p \right\rangle_{\wedge^2 T_x M}
\end{multline}
Here $x$ and $p$ stand for the local base and fiber coordinates on $T^*[1]M$ respectively. Note that all objects in (\ref{PSM target data}) are globally well defined.

The corresponding AKSZ sigma model on an oriented closed surface $\Sigma$ has the space of fields
$$\FF_\Sigma=\Map(T[1]\Sigma,T^*[1]M)$$
with superfields
$$X=X_{(0)}+X_{(1)}+X_{(2)},\quad \eta=\eta_{(0)}+\eta_{(1)}+\eta_{(2)}$$
associated to local coordinates $x$ and $p$ on the target, respectively. Here $X_{(k)}$ and $\eta_{(k)}$ are $k$-forms on $\Sigma$ with internal degrees $-k$ and $1-k$ respectively. The BV 2-form and the action are:
$$\Omega_\Sigma=\int_\Sigma \langle \delta \eta , \delta X \rangle, \quad
S_\Sigma=\int_\Sigma \langle \eta, dX\rangle+\frac{1}{2}\langle \pi(X),\eta\wedge\eta \rangle$$

\section{Hamiltonian $Q$-bundles}\label{sec: Ham Q-bun}
In this section we briefly recall the standard notion of a $Q$-bundle and then introduce ``Hamiltonian $Q$-bundles'', an auxiliary notion necessary for our construction of observables.
\subsection{$Q$-bundles reminder} \label{sec: Q-bun reminder}
Recall that a ``$Q$-bundle'' \cite{Kotov-Strobl} is a fiber bundle in the category of $Q$-manifolds.

In particular, a \textit{trivial} $Q$-bundle is trivial bundle of graded manifolds
$$\pi:\underbrace{\MM\times \NN}_{\EE} \ra \MM$$
where the base $\MM$ is endowed with a cohomological vector field $Q$ and the total space is endowed with a cohomological vector field $Q^{\tot}$ in such a way that $\pi$ is a $Q$-morphism, i.e. $d\pi(Q^\tot)=Q$. This implies the following ansatz for $Q^\tot$:
\be Q^\tot=Q+\A \label{Q_E decompostition}\ee
where $\A\in \Vect^\mr{vert}(\EE)\cong\Vect(\NN)\hat\otimes C^\infty(\MM)$ is the vertical part of $Q^\tot$. Cohomological property for $Q^\tot$ is equivalent to
\be Q\A+ \frac{1}{2}[\A,\A]=0 \label{A flatness}\ee
plus the cohomological property for $Q$. Note that in the first term on the l.h.s. of (\ref{A flatness}) we are thinking of $Q$ as acting on $C^\infty(\MM)$ part of $\A$. Equivalently, we can lift $Q$ to a horizontal vector field on $\EE$ and write the first term as $[Q,\A]$.

\subsection{Trivial Hamiltonian $Q$-bundles}
\label{sec: trivial Ham Q-bundles}
\begin{definition}\label{def: trivial Ham Q-bundles}
We define a trivial Hamiltonian $Q$-bundle of degree $n'\in\ZZ$ as the following collection of data.
\begin{enumerate}[(i)]
\item A trivial $Q$-bundle
$$\pi: \EE=\MM\times \NN\ra \MM$$
with $Q^\tot=Q+\A$ as in (\ref{Q_E decompostition}).
\item The fiber $\NN$ is endowed with a degree $n'$ exact symplectic form $\omega'=\delta\alpha'$ and a degree $(n'+1)$ Hamiltonian $\Theta'\in C^\infty(\EE)$ satisfying
$\A=\{\Theta',\bt\}_{\omega'}$ and
\be Q \Theta'+\frac{1}{2}\{\Theta',\Theta'\}_{\omega'}=0 \label{fiber CME}\ee
\end{enumerate}
\end{definition}

\begin{rem}(Roytenberg) 
\label{rem: symp form is exact}
Symplectic structure $\omega'$ is automatically exact for $n'\neq 0$, since $\omega'=\delta\left(\frac{1}{n'}\iota_{E}\omega'\right)$ where $E$ is the Euler vector field on $\NN$. 
\end{rem}

\begin{rem}\label{rem: B-S integrality}
If $n'=0$, then $\omega'$ is not automatically exact. In this situation, exactness condition may be relaxed\footnote{Cf. remark \ref{rem: S_kin for omega' integral} for the motivation: primitive $\alpha'$ (or connection $\nabla'$ in the relaxed version) will be needed in the construction of section \ref{sec: AKSZ pre-observable} to define the kinetic part of the auxiliary action.}
to the Bohr-Sommerfeld 
condition that $\omega'/2\pi$ defines an integral cohomology class in $H^2(\NN)$.
Then the role of primitive 1-form $\alpha'$ is taken by a Hermitian line bundle $L'$ over $\NN$ equipped with a $U(1)$-connection $\nabla'$ of curvature $\omega'$.

\end{rem}

\begin{rem} For a Hamiltonian $Q$-manifold of degree $n\neq -2$, equation $\{\Theta,\Theta\}=0$ follows from the fact that $\Theta$ is the Hamiltonian for a cohomological vector field (Jacobi identity implies $\{\frac{1}{2}\{\Theta,\Theta\},\bt\}=Q^2(\bt)=0$ which implies that $\{\Theta,\Theta\}$ is a constant of degree $n+2$ and thus vanishes unless $n=-2$).

Observe that this argument fails for the fiber $\NN$: applying the same reasoning to derive the equation (\ref{fiber CME}) from (\ref{A flatness}), we obtain that l.h.s. of (\ref{fiber CME}) is a pull-back of some function on the base of degree $n'+2$, which does not imply that it is zero.
\end{rem}

\subsection{General Hamiltonian $Q$-bundles}
\label{sec: general Ham Q-bundles}
The assumption of triviality of the bundle $\EE$, used in the definition \ref{def: trivial Ham Q-bundles}, can be relaxed as follows.
\begin{definition} \label{def: general Ham Q-bundle}
A Hamiltonian $Q$-bundle is the following set of data:
\begin{enumerate}[(i)]
\item A $Q$-bundle $\pi:\EE\ra\MM$.
\item A degree $n'$ exact pre-symplectic form $\omega'=\delta\alpha'$ on the total space $\EE$, with the property that the distribution $\ker\omega'\subset T\EE$ is transversal to the vertical distribution $T^\ver\EE$; thus $\ker\omega'$ defines a flat Ehresmann connection $\nabla_{\omega'}$ on $\EE$.
\item A Hamiltonian function $\Theta'\in C^\infty(\EE)$ satisfying
$$\iota_{Q^\tot}\omega'=\delta^\ver \Theta'$$
where $\delta^\ver=(i^\ver)^*\circ \delta$ is the vertical part of de Rham differential on $\EE$; we denoted $i^\ver: T^\ver\EE\hra T\EE$ the natural inclusion. We also require that
$$(Q^\hor+\frac{1}{2}Q^\ver)\Theta'=0$$
where the splitting $Q^\tot=Q^\hor+Q^\ver$ of the cohomological vector field on $\EE$ into the horizontal and the vertical part is defined by connection $\nabla_{\omega'}$.
\end{enumerate}
\end{definition}

\begin{rem} Note that one can introduce a local trivialization of $\EE$ consistent with the flat connection $\nabla_{\omega'}$
(i.e. where the horizontal distribution $\ker\omega'\subset T\EE$ is defined by the direct product structure on $\EE|_U$ coming from the local trivialization, where $U\subset \MM$ is a trivializing neighborhood).
In such a trivialization, we recover back the definition \ref{def: trivial Ham Q-bundles}. In this sense, definition \ref{def: general Ham Q-bundle} does not bring in a drastic increase of generality.
\end{rem}

\subsection{Examples}
\label{sec: examples of Ham Q-bun}
\begin{enumerate}[(i)]
\item \label{ex: quadr Lie algebra with a orthog module}Let $\g$ be a 
Lie algebra 
and let $\rho:\g\ra \mathfrak{so}(R)\subset \End(R)$ be a representation of $\g$ on by anti-symmetric matrices on a Euclidean vector space $R,(,)$. Then for any $k\in\ZZ$ we can define a degree $4k+2$ trivial Hamiltonian $Q$-bundle with base $\MM=\g[1]$ and fiber $\NN=R[2k+1]$. Let $\psi$ be the degree 1 $\g$-valued coordinate on $\g[1]$ and let $x$ be the degree $2k+1$ $R$-valued coordinate on $R[2k+1]$. Then we set
    \be \label{g[1] structure}
    Q=\frac{1}{2}\left\langle[\psi,\psi],\frac{\dd}{\dd \psi}\right\rangle 
    \ee
    -- the Chevalley-Eilenberg differential, and
    \begin{multline} \label{module structure}
    \A=\left\langle\rho(\psi) x, \frac{\dd}{\dd x}\right\rangle,\; \omega'=\frac{1}{2}(\delta x,\delta x),\\ \alpha'=\frac{1}{2} (x, \delta x),\; \Theta'=\frac{1}{2}(x,\rho(\psi) x)
    \end{multline}
    where $\langle,\rangle$ denotes the canonical pairing between a vector space and its dual.
    It is a straightforward check that this comprises the data of a trivial Hamiltonian $Q$-bundle of degree $4k+2$, according to definition \ref{def: trivial Ham Q-bundles}.

    As a variation of this example, we may require that the pairing $(,)$ is anti-symmetric (so that $R$ is a symplectic space instead of Euclidean) and that $\rho: \g\ra \mathfrak{sp}(R)\subset \End(R)$ is a symplectic representation. Then formulae (\ref{g[1] structure},\ref{module structure}) define a Hamiltonian $Q$-bundle structure of degree $4k$ on $\g[1]\times R[2k]$ (note that here we choose an even degree shift for the fiber).
\item \label{ex: Ham Q bun from cyc L_infty alg plus module}
Example above generalizes straightforwardly to the setting of $L_\infty$ algebras and $L_\infty$ modules.
Namely, let $\g$ be an $L_\infty$ algebra with operations $\{l_j: \wedge^j \g\ra\g \}_{j\geq 1}$. Also, let $R$ be graded vector space with graded symmetric pairing $(,)$ of degree $q$ and with the structure of an $L_\infty$ module over $\g$ with the module operations $\{\rho_j:\wedge^j\g\otimes R\ra R\}_{j\geq 0}$, where we additionally require that operations $(\bt,\rho_j(\cdots;\bt))$ are graded anti-symmetric w.r.t. the two entries in the module. For $\psi$ a $\g$-valued degree 1 coordinate on $\g[1]$ and $x$ a $R$-valued degree $2k+1$ coordinate on $R[2k+1]$, we construct a trivial Hamiltonian $Q$-bundle with
$\MM=\g[1]$, $\NN=R[2k+1]$ and the data are:
\be \label{L_infty Ham Q structure}
Q=\sum_{j\geq 1} \frac{1}{j!}\left\langle l_j(\psi,\ldots,\psi) , \frac{\dd}{\dd \psi}\right\rangle 
\ee
for the base and
\begin{multline} \label{L_infty module fiber Ham Q structure}
\A=\sum_{j\geq 0} \frac{1}{j!} \left\langle \rho_j(\psi,\ldots,\psi;x),\frac{\dd}{\dd x} \right\rangle,\; \omega'=\frac{1}{2}(\delta x,\delta x),\\
\alpha'=\frac{1}{2}(x,\delta x),\; \Theta'=\sum_{j\geq 0} \frac{1}{2j!}(x,\rho_j(\psi,\ldots,\psi;x))
\end{multline}
This comprises the data of a trivial Hamiltonian $Q$-bundle of degree $4k+2+q$.

\item Example (\ref{ex: Ham Q bun from cyc L_infty alg plus module}) admits the following variation. If $\h$ is an $L_\infty$ algebra with $I\subset \h$ an $L_\infty$ ideal\footnote{Recall that a subspace $I\subset \h$ is called an $L_\infty$ ideal if the $L_\infty$ operations on $\h$ take values in $I$ if at least one argument is in $I$.}, then there is a natural $Q$-bundle
    \be\pi:\h[1]\ra (\h/I)[1]\label{Q bundle from L_infty ideal}\ee
    with bundle projection being the quotient map and with the fiber $I$. If in addition  we fix a section of the quotient map $\h\ra\h/I$, i.e. a splitting of the short exact sequence
    \be I\ra\h\ra \h/I \label{h I sequence}\ee
    and if
    $I$ carries a degree $q$ cyclic\footnote{By cyclic property for the pairing on the ideal we mean that extending the pairing on $I$ to $\h$ by zero on $\h/I$, we obtain a degenerate cyclic pairing for the $L_\infty$ algebra $\h$.} inner product,
    then (\ref{Q bundle from L_infty ideal}) becomes a Hamiltonian $Q$-bundle of degree $2+q$. Explicitly, the structure on the base $\MM=(\h/I)[1]$ is again given by (\ref{L_infty Ham Q structure}) with $\{l_j\}$ the quotient $L_\infty$ operations on $\h/I$, $\psi$ the degree 1 $(\h/I)$-valued coordinate on $(\h/I)[1]$. The structure on the fiber $\NN=I[1]$ is:
    \begin{multline}
    \A=\sum_{j\geq 0,k\geq 0,j+k\geq 1} \frac{1}{j!k!}\left\langle P_I\lambda_{j+k}(\underbrace{\psi,\ldots,\psi}_j,\underbrace{x,\ldots,x}_k),\frac{\dd}{\dd x}\right\rangle,\\
    \omega'=\frac{1}{2}(\delta x,\delta x),\;
    \alpha'=\frac{1}{2}(x,\delta x),\\ \Theta'=\sum_{j\geq 0,k\geq 0,j+k\geq 1} \frac{1}{j!(k+1)!}(x,P_I\lambda_{j+k}(\underbrace{\psi,\ldots,\psi}_j,\underbrace{x,\ldots,x}_k))
    \end{multline}
    where $P_I:\h\ra I$ is the projection to the ideal fixed by the choice of splitting of (\ref{h I sequence}), $\{\lambda_j\}$ are the $L_\infty$ operations on $\h$, $x$ is the degree 1 $I$-valued coordinate on $I[1]$ and $(,)$ is the cyclic pairing on $I$.

\item \label{Ham Q-bun ex Ham action} Let $\g$ be a Lie algebra and suppose there is a Hamiltonian action of $\g$ on an exact symplectic manifold $(M,\omega_M=\delta\alpha_M)$ with equivariant moment map $\mu: M\ra \g^*$. Then we set $\MM=\g[1]$ as in (\ref{ex: quadr Lie algebra with a orthog module}), with the structure (\ref{g[1] structure}). For the fiber, we set
    $\NN=M$ and
    \be \A=\{\langle\psi,\mu\rangle,\bt\},\; \omega'=\omega_M,\; \alpha'=\alpha_M,\; \Theta'= \langle\psi, \mu\rangle \label{fiber data from Ham action of g}\ee
    This is the data of a trivial Hamiltonian $Q$-bundle of degree $0$ on $\g[1]\times M$. Exactness condition for $\omega_M$ can be relaxed to the Bohr-Sommerfeld integrality condition, cf. remark \ref{rem: B-S integrality}.
\item \label{Ham Q-bun ex point} For any $Q$-manifold $(\MM,Q)$, given a function $\theta\in C^\infty(\MM)$ satisfying
$$Q\theta=0,\qquad |\theta|=p$$
we can construct a trivial Hamiltonian $Q$-bundle over $\MM$ of degree $p-1$ with fiber $\NN$ a point, $\A=0$, $\omega'=0$, $\Theta'=\theta$ (and we formally prescribe degree $p-1$ to $\omega'$).
\end{enumerate}

\section{BV pre-observables and observables}\label{sec: pre-obs and obs}
In this section we recall some basic structures of BV formalism: classical and quantum BV theory (
cf. \cite{Schwarz}), observables in classical and quantum setting .
We also introduce the notion of a \textit{pre-observable}\footnote{This is not a part of the standard BV lore; the term is sometimes used in a totally different sense in the literature.}
: an extension of a BV theory by auxiliary fields endowed with an odd-symplectic structure and an action (which is also dependent on the fields of the ambient theory). Given a pre-observable, one can construct an observable by integrating out the auxiliary fields (propositions \ref{prop: pre-obs to obs}, \ref{prop: quantum pre-obs to obs}). Pre-observables come in three versions: purely classical, \textit{semi-quantum} -- suited for integrating out auxiliary fields to produce a classical observable for the ambient theory, and \textit{quantum} -- suited for integrating out auxiliary fields to produce a quantum observable for the ambient theory.

Throughout this section in our treatment of quantum objects we are assuming that the spaces of fields are finite-dimensional. This is almost never the case in the interesting examples. In the setting of local quantum field theory, when the spaces of fields are infinite-dimensional, measures on fields are problematic to define; however the BV Laplacians and the integrals over auxiliary fields (which are perturbative path integrals now) can be made sense of with an appropriate regularization/renormalization procedure (one systematic approach is via Wilson's renormalization flow, cf. \cite{Co_renorm}). Proofs of propositions \ref{prop: pre-obs to obs}, \ref{prop: quantum pre-obs to obs} cannot be taken for granted in the infinite-dimensional setting, and should be redone within the framework of perturbative path integrals (cf. e.g. the proof of proposition \ref{prop 6} in section \ref{sec: torsion observables}).

In view of this, our treatment of quantum objects in this section should be viewed as a simplified discussion, to motivate the interest in pre-observables and argue that given a pre-observable one can expect a path integral expression for an observable.

\subsection{Pre-observables and observables for a classical ambient BV theory}
\begin{definition} \label{def: class BV theory}
We will call a \textit{(classical) BV theory} a Hamiltonian $Q$-manifold of degree $-1$, i.e. a quadruple $(\FF,Q,\Omega,S)$ consisting of a $Q$-manifold $(\FF,Q)$ (the \textit{space of fields} with the \textit{BRST operator}), the degree $-1$ symplectic structure (the \textit{BV 2-form}) $\Omega$ and the degree $0$ action $S$ satisfying
$\{S,\bt\}_\Omega=Q$ and the \textit{classical master equation}:
\be \{S,S\}_\Omega=0 \label{CME}\ee
\end{definition}

\begin{definition}\label{def: class observable}
A \textit{(classical) observable} for a classical BV theory $(\FF,Q,\Omega,S)$ is a degree $0$ function\footnote{We will be assuming degree $0$ condition for observables by default, but sometimes it is interesting to relax it. In such cases we will indicate the degree explicitly.} $O\in C^\infty(\FF)$ satisfying
$$Q(O)=0$$
Observables $O$ and $O'$ are called \textit{equivalent} if $O'-O=Q(\Psi)$ for some $\Psi\in C^\infty(\FF)$, i.e. $O'$ and $O$ give the same class in $Q$-cohomology.
\end{definition}

\begin{definition}\label{def: class pre-observable}
For a classical BV theory $(\FF,Q,\Omega,S)$ (which we will call the \textit{ambient theory}), a \textit{pre-observable} is a degree $-1$ Hamiltonian $Q$-bundle over $\FF$. We will call the fiber the \textit{space of auxiliary fields} $\FF^\aux$, which comes with its own degree $-1$ symplectic structure $\Omega^\aux$ and the \textit{action for auxiliary fields} $S^\aux\in C^\infty(\FF\times \FF^\aux)$ satisfying
\be Q S^\aux+\frac{1}{2}\{S^\aux,S^\aux\}_{\Omega^\aux}=0 \label{pre-observable CME}\ee
\end{definition}

\begin{rem}\label{rem: pre-obs to CME for S+S^aux}
Note that if in addition to (\ref{pre-observable CME}) we have the property
$$\{S^\aux,S^\aux\}_{\Omega}=0$$
then the pre-observable gives a new classical BV theory
$$(\FF\times \FF^\aux,Q+\{S^\aux,\bt\}_{\Omega+\Omega^\aux},\Omega+\Omega^\aux,S+S^\aux)$$
Note also that this cohomological vector field on $\FF\times\FF^\aux$ differs from the one arising from the $Q$-bundle structure by the term $\{S^\aux,\bt\}_\Omega$ and is in general not projectable to $\FF$.
\end{rem}

In the case when $\FF^\aux$ is finite-dimensional, it makes sense to introduce the following notion.
\begin{definition} \label{def: semi-quantum pre-observable}
For a classical BV-theory $(\FF,Q,\Omega,S)$, a \textit{semi-quantum} pre-observable is a quadruple $(\FF^\aux,\Omega^\aux,\mu^\aux,S^\aux)$ consisting of a $\ZZ$-graded manifold $\FF^\aux$ and a degree $-1$ symplectic form $\Omega^\aux$ on it. Further,
\begin{itemize}
\item $\FF^\aux$ is endowed with a volume element $\mu^\aux$ compatible with $\Omega^\aux$ in the sense that the associated BV Laplacian on $C^\infty(\FF^\aux)$,
$$\Delta^\aux: f\mapsto \frac{1}{2}\;\mr{div}_{\mu^\aux}\{f,\bt\}_{\Omega^\aux}$$
satisfies
$$(\Delta^\aux)^2=0$$
\item Degree $0$ action $S^\aux\in C^\infty(\FF\times \FF^\aux)$ satisfies
\be Q S^\aux+\frac{1}{2}\{S^\aux,S^\aux\}_{\Omega^\aux}-i\Delta^\aux S^\aux=0 \label{pre-observable QME}\ee
or, equivalently,
\be (Q-i\Delta^\aux)\;e^{i S^\aux}=0 \label{pre-observable QME exp form}\ee
\end{itemize}

We call two semi-quantum pre-observables $(\FF^\aux,\Omega^\aux,\mu^\aux,S^\aux)$ and $(\FF^\aux,\Omega^\aux,\mu^\aux,\tilde S^\aux)$ \textit{equivalent}, if there exists a degree $-1$ function $R^\aux\in C^\infty(\FF\times\FF^\aux)$ such that
\be e^{i \tilde S^\aux}-e^{i S^\aux}=(Q-i\Delta^\aux) \left(e^{i S^\aux} R^\aux\right) \label{equivalence of sq pre-obs, exp form}\ee
Infinitesimally, this equivalence condition can be written as
\be \tilde S^\aux-S^\aux=Q R^\aux + \{S^\aux,R^\aux\}_{\Omega^\aux}-i \Delta^\aux R^\aux+\mathcal{O}(R^2) \label{equivalence of sq pre-obs, infinitesimal}\ee
\end{definition}

In particular, a classical pre-observable can be promoted to a semi-quantum pre-observable if there exists a volume element $\mu^\aux$ on $\FF^\aux$ compatible with $\Omega^\aux$, such that in addition to (\ref{pre-observable CME}) we have $\Delta^\aux S^\aux=0$.

Given a semi-quantum pre-observable, one can construct an observable for the ambient classical BV theory using fiber BV integrals \cite{DiscrBF} as follows.


\begin{prop}\label{prop: pre-obs to obs} Given a semi-quantum pre-observable $(\FF^\aux,\Omega^\aux,\mu^\aux,S^\aux)$, set
\be O_\LL=\int_{\LL\subset \FF^\aux} e^{i S^\aux} \sqrt{\mu^\aux}|_\LL \in C^\infty(\FF) \label{obs via fiber BV integral}\ee
for $\LL\subset \FF^\aux$ a Lagrangian submanifold. Then:
\begin{enumerate}[(i)]
\item \label{prop 1 item 1} $O_\LL$ is an observable, i.e. $Q(O_\LL)=0$.
\item \label{prop 1 item 2} If Lagrangian submanifolds $\LL$ and $\LL'$ can be connected by a Lagrangian homotopy, then the difference $O_{\LL'}-O_{\LL}$ is $Q$-exact, i.e. observables $O_\LL$ and $O_{\LL'}$ are equivalent.
\item \label{prop 1 item 3} Given two equivalent semi-quantum pre-observables $S^\aux$ and $\tilde S^\aux$, the associated observables $O_\LL$ and $\tilde O_\LL$ are equivalent.
\end{enumerate}
\end{prop}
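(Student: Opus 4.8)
The plan is to reduce all three assertions to two standard facts about finite-dimensional BV integration over Lagrangians (cf. \cite{Schwarz}, \cite{DiscrBF}): (a) the integral over any Lagrangian submanifold $\LL\subset\FF^\aux$ of a $\Delta^\aux$-exact half-density vanishes, $\int_\LL(\Delta^\aux g)\sqrt{\mu^\aux}|_\LL=0$ for $g\in C^\infty(\FF\times\FF^\aux)$; and (b) for a Hamiltonian vector field $V_h=\{h,\bt\}_{\Omega^\aux}$ on $\FF^\aux$, the Lie derivative along $V_h$ on half-densities obeys the BV Cartan identity $L_{V_h}\sigma=\Delta^\aux(h\sigma)-(-1)^{|h|}h\,\Delta^\aux\sigma$. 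Both use that the $\Omega^\aux$-compatibility of $\mu^\aux$ is exactly what makes the natural extension of $\Delta^\aux$ to half-densities annihilate $\sqrt{\mu^\aux}$, so that it acts on $g\,\sqrt{\mu^\aux}$ by $g\mapsto\Delta^\aux g$. Before using (a), (b) I would record two observations. First, writing $\sigma:=e^{iS^\aux}\sqrt{\mu^\aux}$, the exponential form (\ref{pre-observable QME exp form}) of the pre-observable equation gives $\Delta^\aux\sigma=-iQ\sigma$, where here (and below) $Q$ acts only on the dependence of $S^\aux$ on the ambient fields. Second, $Q$ is a vector field on $\FF$ and neither $\LL$ nor $\mu^\aux$ depends on the ambient fields, so $Q$ commutes with $\int_\LL$: one may freely differentiate under the integral sign.

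With these in place, parts (\ref{prop 1 item 1}) and (\ref{prop 1 item 3}) are quick. For (\ref{prop 1 item 1}): $Q(O_\LL)=\int_\LL Q\sigma=i\int_\LL\Delta^\aux\sigma=0$ by fact (a). For (\ref{prop 1 item 3}): equation (\ref{equivalence of sq pre-obs, exp form}) reads $e^{i\tilde S^\aux}-e^{iS^\aux}=(Q-i\Delta^\aux)(e^{iS^\aux}R^\aux)$; multiplying by $\sqrt{\mu^\aux}$ and integrating over $\LL$, the $\Delta^\aux$-term drops by fact (a) and the $Q$-term factors out, so
\[
\tilde O_\LL-O_\LL=Q\bigl(\textstyle\int_\LL e^{iS^\aux}R^\aux\,\sqrt{\mu^\aux}|_\LL\bigr),
\]
which is $Q$-exact.

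Part (\ref{prop 1 item 2}) is the substantial step. I would present a Lagrangian homotopy $\LL_t$, $t\in[0,1]$, from $\LL$ to $\LL'$ as the flow of a possibly $t$-dependent Hamiltonian vector field $V_{h_t}$, with $h_t\in C^\infty(\FF^\aux)$ of internal degree $-1$ (so that $V_{h_t}$ has degree $0$) and independent of the ambient fields. Then fact (b) gives $L_{V_{h_t}}\sigma=\Delta^\aux(h_t\sigma)+h_t\,\Delta^\aux\sigma$, so by fact (a) and the first observation,
\[
\frac{d}{dt}\,O_{\LL_t}=\int_{\LL_t}L_{V_{h_t}}\sigma=\int_{\LL_t}h_t\,\Delta^\aux\sigma=-i\int_{\LL_t}h_t\,Q\sigma .
\]
Since $Qh_t=0$ and $h_t$ is ambient-field independent, $h_t\,Q\sigma=-Q(h_t\sigma)$, hence
\[
\frac{d}{dt}\,O_{\LL_t}=iQ\!\int_{\LL_t}h_t\,e^{iS^\aux}\,\sqrt{\mu^\aux}|_{\LL_t},
\]
and integrating in $t$ yields $O_{\LL'}-O_\LL=iQ\bigl(\int_0^1\!dt\int_{\LL_t}h_t\,e^{iS^\aux}\sqrt{\mu^\aux}|_{\LL_t}\bigr)$, which is $Q$-exact.

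The step I expect to be the main obstacle is (\ref{prop 1 item 2}): making the computation rigorous requires the BV Cartan identity on half-densities with careful bookkeeping of Koszul signs, and a justification that a Lagrangian homotopy is, up to homotopy, generated by a Hamiltonian flow --- infinitesimal Lagrangian deformations correspond to closed $1$-forms on $\LL$, and one should either check that the non-exact part does not affect $Q$-exactness of $O_{\LL'}-O_\LL$ or simply restrict to Hamiltonian homotopies, as is customary. By comparison (\ref{prop 1 item 1}) and (\ref{prop 1 item 3}) are essentially formal once the statements about $\sqrt{\mu^\aux}$ and about differentiating under the integral sign are set up; a small additional item worth recording is that $O_\LL$ has internal degree $0$, which follows from $|S^\aux|=0$ and the degree count for the half-density $\sqrt{\mu^\aux}|_\LL$ in a degree $-1$ symplectic space.
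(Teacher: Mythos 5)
Your proposal is correct and follows essentially the same route as the paper: parts (i) and (iii) are the same formal application of the BV Stokes theorem, and your Cartan-identity computation for (ii) is the paper's variation-of-the-Lagrangian argument in different packaging (graph of a Hamiltonian vector field plus the $\Delta^\aux(\Psi_t)$ measure term). The one point you leave open --- whether a general Lagrangian homotopy is Hamiltonian --- is settled in the paper by a degree count: the generator is a closed $1$-form of degree $-1\neq 0$, hence automatically exact (contract with the Euler vector field, cf.\ remark \ref{rem: symp form is exact}), so no restriction to Hamiltonian homotopies is needed.
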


\begin{proof} Applying $Q$ to the integral (\ref{obs via fiber BV integral}), we obtain
\begin{equation}\label{prop 1 eq1}
Q(O_\LL)=Q\int_{\LL}  e^{i S^\aux}
=\int_{\LL} (Q-i\Delta^\aux) e^{i S^\aux}=0
\end{equation}
where we used (\ref{pre-observable QME exp form}) and the Stokes' theorem for BV integrals \cite{Schwarz}, that the BV integral of a $\Delta$-coboundary vanishes. We suppress the measure in the short-hand notation for integrals here. This proves (\ref{prop 1 item 1}).

Now let $\{\LL_t\}_{t\in [0,1]}$ be a smooth family of Lagrangian submanifolds of $(\FF^\aux,\Omega^\aux)$ connecting $\LL=\LL_0$ and $\LL'=\LL_1$. An infinitesimal deformation of a Lagrangian submanifold $\LL_t$ is given by a graph of 
a Hamiltonian vector field\footnote{A general infinitesimal Lagrangian deformation is given by $\mr{graph}(\iota_{(\Omega^{\aux})^{-1}}\chi_t)$ for any degree $-1$ \textit{closed} 1-form $\chi_t\in\Omega^1(\LL_t)$. Since in non-zero degree a closed form is automatically \textit{exact} (cf. remark \ref{rem: symp form is exact}), we have (\ref{inf transf of L}) with the generator $\Psi_t=\iota_E \chi_t$.}
\be \mbox{``} \frac{d}{dt}\LL_t \mbox{''} =\mr{graph}(\{\bt,\Psi_t\}_{\Omega^\aux})\in \Gamma(\LL_t,N\LL_t) \label{inf transf of L}\ee
viewed as a section of the normal bundle of $\LL_t$,
with $\Psi_t\in C^\infty(\LL_t)$, $|\Psi_t|=-1$ the generator. Hence,
\begin{multline*}
\frac{d}{dt}\; O_{\LL_t}= \frac{d}{dt}\; \int_{\LL_t} e^{i S^\aux|_{\LL_t}} \\
=\int_{\LL_t} \{e^{i S^\aux},\Psi_t\}_{\Omega^\aux} +
e^{i S^\aux}\cdot \Delta^\aux(\Psi_t) \\
=\int_{\LL_t} \Delta^\aux \left(e^{i S^\aux}\Psi_t\right) -\Delta^\aux \left(e^{i S^\aux}\right) \cdot \Psi_t \\
= \int_{\LL_t} \Delta^\aux \left(e^{i S^\aux}\Psi_t\right) +i Q \left(e^{i S^\aux}\right) \cdot \Psi_t \\
= Q \left(i \int_{\LL_t} e^{i S^\aux} \Psi_t \right)
\end{multline*}
The term $\Delta^\aux(\Psi_t)$ in the second line comes from the transformation of measure $\sqrt{\mu^\aux}|_{\LL_t}$ by the infinitesimal shift of Lagrangian (\ref{inf transf of L}). This calculation implies that
$$ O_{\LL'}-O_\LL = Q\left(\int_0^1 dt\; i \int_{\LL_t} e^{i S^\aux} \Psi_t  \right)$$
and thus proves (\ref{prop 1 item 2}).

Item (\ref{prop 1 item 3}) follows immediately from the construction (\ref{obs via fiber BV integral}), our definition of equivalence (\ref{equivalence of sq pre-obs, exp form}) and the Stokes' theorem for BV integrals:
\begin{multline*}
\tilde O_\LL - O_\LL=\int_\LL e^{i \tilde S^\aux} - e^{i S^\aux}\\
=\int_\LL (Q-i \Delta^\aux)\left(e^{i S^\aux}R^\aux\right)= Q\left(\int_\LL e^{i S^\aux}R^\aux\right)
\end{multline*}

\end{proof}
Here we are implicitly assuming convergence of the integral (\ref{obs via fiber BV integral}), and for the proof of (\ref{prop 1 item 2}) the integral over $\LL_t$ should converge for every $t\in [0,1]$.

The following construction allows one to produce new semi-quantum pre-observables out of old ones by partially integrating out the auxiliary fields.
\begin{prop}
\label{prop: pushing pre-obs to zero-modes}
Given a semi-quantum pre-observable $(\FF^\aux,\Omega^\aux,\mu^\aux,S^\aux)$, assume that $\FF^\aux$ is given as a product\footnote{Subscript ``z'' refers to auxiliary zero-modes.}, $\FF^\aux=\FF^\aux_\z\times \tilde\FF^\aux$, so that $\Omega^\aux$ and $\mu^\aux$ split:
$$\Omega^\aux=\Omega^\aux_\z+\tilde\Omega^\aux,\quad \mu^\aux=\mu^\aux_\z\times \tilde\mu^\aux$$
Define $S^\aux_\z\in C^\infty(\FF\times \FF^\aux_\z)$ by
$$e^{i S^\aux_\z}=\int_{\tilde\LL\subset\tilde\FF^\aux} e^{i S^\aux} \sqrt{\tilde\mu^\aux}|_{\LL}$$
with $\tilde\LL$ a Lagrangian submanifold of $(\tilde\FF^\aux,\tilde\Omega^\aux)$.
Then
\begin{enumerate}[(i)]
\item $(\FF^\aux_\z, \Omega^\aux_\z,\mu^\aux_\z, S^\aux_\z)$ is a semi-quantum pre-observable for the same ambient classical BV theory.
\label{prop 1.5 item 1}
\item The observable for the ambient theory induced from $S^\aux_z$ using (\ref{obs via fiber BV integral}) with Lagrangian $\LL_\z\subset \FF^\aux_\z$ is equivalent to the one induced directly from $S^\aux$ using Lagrangian $\LL\subset \FF^\aux$, provided that $\LL$ can be connected with $\LL_\z\times\tilde\LL$ by a Lagrangian homotopy in $\FF^\aux$.
    \label{prop 1.5 item 2}
\end{enumerate}
\end{prop}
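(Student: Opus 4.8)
The plan is to prove part~(\ref{prop 1.5 item 1}) by commuting $Q-i\Delta^\aux_\z$ through the partial fiber BV integral defining $e^{iS^\aux_\z}$, thereby reducing it to the semi-quantum master equation (\ref{pre-observable QME exp form}) already known for $S^\aux$ together with the Stokes theorem for BV integrals; and to prove part~(\ref{prop 1.5 item 2}) by recognizing, via Fubini, that the observable built from $S^\aux_\z$ coincides with the one built from $S^\aux$ and the product Lagrangian $\LL_\z\times\tilde\LL$, and then invoking part~(\ref{prop 1 item 2}) of Proposition~\ref{prop: pre-obs to obs}.

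For part~(\ref{prop 1.5 item 1}): since $\Omega^\aux=\Omega^\aux_\z+\tilde\Omega^\aux$ and $\mu^\aux=\mu^\aux_\z\times\tilde\mu^\aux$ are split compatibly with $\FF^\aux=\FF^\aux_\z\times\tilde\FF^\aux$, the BV Laplacian splits as $\Delta^\aux=\Delta^\aux_\z+\tilde\Delta^\aux$, where $\Delta^\aux_\z$ and $\tilde\Delta^\aux$ are the BV Laplacians of $(\FF^\aux_\z,\Omega^\aux_\z,\mu^\aux_\z)$ and $(\tilde\FF^\aux,\tilde\Omega^\aux,\tilde\mu^\aux)$, acting on the respective variables only; as $\Delta^\aux_\z$ is the restriction of $\Delta^\aux$ to functions independent of $\tilde\FF^\aux$, it follows from $(\Delta^\aux)^2=0$ that $(\Delta^\aux_\z)^2=0$, i.e.\ $\mu^\aux_\z$ is compatible with $\Omega^\aux_\z$. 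Now apply $Q-i\Delta^\aux_\z$ to $e^{iS^\aux_\z}=\int_{\tilde\LL}e^{iS^\aux}\sqrt{\tilde\mu^\aux}|_{\tilde\LL}$: both operators act only on variables that are parameters for the $\tilde\LL$-integration, so they pass inside, and writing $Q-i\Delta^\aux_\z=(Q-i\Delta^\aux)+i\tilde\Delta^\aux$ and using (\ref{pre-observable QME exp form}) for $S^\aux$ one is left with
\[(Q-i\Delta^\aux_\z)\,e^{iS^\aux_\z}=i\int_{\tilde\LL}\tilde\Delta^\aux\bigl(e^{iS^\aux}\bigr)\,\sqrt{\tilde\mu^\aux}|_{\tilde\LL}=0\]
by the Stokes theorem for BV integrals, the integrand being a $\tilde\Delta$-coboundary. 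That $S^\aux_\z$ has degree $0$ is seen as for $O_\LL$ in Proposition~\ref{prop: pre-obs to obs}. This establishes part~(\ref{prop 1.5 item 1}).

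For part~(\ref{prop 1.5 item 2}): substituting the definition of $e^{iS^\aux_\z}$ and using Fubini together with the factorization of $\mu^\aux$, $\Omega^\aux$ and the half-density $\sqrt{\mu^\aux}$ over the product, one gets
\[O_{\LL_\z}=\int_{\LL_\z}e^{iS^\aux_\z}\sqrt{\mu^\aux_\z}|_{\LL_\z}=\int_{\LL_\z\times\tilde\LL}e^{iS^\aux}\sqrt{\mu^\aux}|_{\LL_\z\times\tilde\LL}.\]
Since $\Omega^\aux$ vanishes on $\LL_\z\times\tilde\LL$ and the dimensions add up, $\LL_\z\times\tilde\LL$ is a Lagrangian submanifold of $(\FF^\aux,\Omega^\aux)$, so the right-hand side is exactly the observable produced from $S^\aux$ by (\ref{obs via fiber BV integral}) with this Lagrangian. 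By hypothesis $\LL$ and $\LL_\z\times\tilde\LL$ are joined by a Lagrangian homotopy in $\FF^\aux$, so part~(\ref{prop 1 item 2}) of Proposition~\ref{prop: pre-obs to obs} gives that $O_\LL-O_{\LL_\z\times\tilde\LL}=O_\LL-O_{\LL_\z}$ is $Q$-exact; the two observables are equivalent.

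The points requiring care are exactly those already flagged for Proposition~\ref{prop: pre-obs to obs} and are unproblematic in the finite-dimensional setting assumed here: the partial integral defining $e^{iS^\aux_\z}$ must be nowhere vanishing so that its logarithm $S^\aux_\z$ is a well-defined smooth function, all integrals must converge (including those along the Lagrangian homotopy used in part~(\ref{prop 1.5 item 2})), and one uses freely the interchange of the order of integration and the commuting of $Q$ and $\Delta^\aux_\z$ past the $\tilde\LL$-integral. The only genuinely new bookkeeping is the compatible splitting $\Delta^\aux=\Delta^\aux_\z+\tilde\Delta^\aux$ of the BV Laplacian over the product $\FF^\aux=\FF^\aux_\z\times\tilde\FF^\aux$; this, together with the fiberwise Stokes theorem, is the crux of part~(\ref{prop 1.5 item 1}), and I expect it to be the one step worth spelling out in full, the remainder being a direct recombination of Proposition~\ref{prop: pre-obs to obs}.
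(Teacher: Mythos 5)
Your proof is correct and follows essentially the same route as the paper: for (i) the paper likewise commutes $Q-i\Delta^\aux_\z$ into the partial BV integral, inserts $-i\tilde\Delta^\aux$ at no cost via the BV Stokes theorem, and invokes the semi-quantum master equation $(Q-i\Delta^\aux)e^{iS^\aux}=0$; for (ii) it likewise reduces to item (\ref{prop 1 item 2}) of Proposition \ref{prop: pre-obs to obs} after identifying $O_{\LL_\z}$ with the integral over $\LL_\z\times\tilde\LL$. You merely spell out details (the splitting of the Laplacian, the Fubini step) that the paper leaves implicit.
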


\begin{proof} Similarly to (\ref{prop 1 eq1}), we have
$$ (Q-i \Delta^\aux_\z)\; e^{i S^\aux_\z}= \int_{\tilde\LL} (Q\underbrace{-i \Delta^\aux_\z-i \tilde\Delta^\aux}_{-i\Delta^\aux})\; e^{i S^\aux}= 0 $$
which proves (\ref{prop 1.5 item 1}). Item (\ref{prop 1.5 item 2}) is an immediate corollary of (\ref{prop 1 item 2}) of proposition \ref{prop: pre-obs to obs}.
\end{proof}

\subsection{Pre-observables and observables for a quantum ambient BV theory}
In the case when the space of fields of the ambient theory $\FF$ is finite-dimensional, it makes sense to introduce the following notions.
\begin{definition}[cf. \cite{Schwarz}] \label{def: quantum BV theory}
A \textit{quantum BV theory} is a quadruple $(\FF,\Omega,\mu,S)$ where $(\FF,\Omega)$ is a $\ZZ$-graded manifold with a degree $-1$ symplectic form $\Omega$;
$\mu$ is a volume element on $\FF$, compatible with $\Omega$ as in definition \ref{def: semi-quantum pre-observable}, i.e. the BV Laplacian $\Delta: f\mapsto \frac{1}{2}\mr{div}_\mu\{f,\bt\}_\Omega$ satisfies $\Delta^2=0$; the degree $0$ action $S\in C^\infty(\FF)$
is required to satisfy the \textit{quantum master equation}
\be \frac{1}{2}\{S,S\}_\Omega-i\Delta S=0 \label{QME}\ee
or, equivalently,
$$\Delta\, e^{i S}=0$$
\end{definition}
For a quantum BV theory it is convenient to introduce a degree $1$ second order differential operator
$$\delta_\BV=\{S,\bt\}_\Omega-i \Delta = e^{-i S}(-i \Delta) e^{i S} $$
which satisfies $(\delta_\BV)^2=0$.

\begin{definition} \label{def: quantum observable}
A \textit{quantum observable}\footnote{Here we define a quantum observable in the framework of Lagrangian field theory in BV formalism, as something that can be averaged over the space of fields to yield a correlator (\ref{correlator}). In other contexts one has quite different definitions of quantum observables: e.g. in the framework of Atiyah's topological quantum field theory \cite{LosevTQFT}, an observable is a pair consisting of a submanifold $\gamma\subset\Sigma$ of the spacetime manifold and a vector $\hat O_\gamma$ in the space of states associated to the boundary of a tubular neighborhood of $\gamma$ in $\Sigma$. The passage from the first picture to the second one goes through path integral quantization of the ambient topological theory on the tubular neighborhood (as on a manifold with boundary) with the insertion of observable.}
for a quantum BV theory is a degree $0$ function $O\in C^\infty(\FF)$ satisfying
$$\delta_\BV O=0$$
which is equivalent to
$$\Delta\left(O\, e^{i S}\right)=0$$
Observables $O$ and $O'$ are called \textit{equivalent} if $O'-O=\delta_\BV (\Psi)$ for some $\Psi\in C^\infty(\FF)$, i.e. $O'$ and $O$ give the same class in $\delta_\BV$-cohomology.

\end{definition}

Given a quantum observable, one can define its \textit{correlator} as
\be \langle O\rangle_\LL= \frac{\int_{\LL\subset \FF} O\; e^{i S} \sqrt{\mu}|_\LL}{\int_{\LL\subset \FF} e^{i S} \sqrt{\mu}|_\LL}\in \mathbb{C} \label{correlator}\ee
where $\LL$ is a Lagrangian submanifold of $(\FF,\Omega)$ and we are assuming convergence. By the Stokes' theorem for BV integrals, this expression does not change with Lagrangian homotopy of $\LL$: $\langle O\rangle_{\LL}=\langle O\rangle_{\LL'}$, and also correlators of equivalent observables coincide: $\langle O\rangle_{\LL}=\langle O'\rangle_{\LL}$, cf. \cite{Schwarz}.

Finally, if both $\FF$ and $\FF^\aux$ are finite-dimensional, the following definition makes sense.
\begin{definition} \label{def: quantum pre-observable}
A \textit{quantum pre-observable} for a quantum BV theory is a semi-quantum pre-observable $(\FF^\aux, \Omega^\aux,\mu^\aux, S^\aux)$  where instead of equation (\ref{pre-observable QME}) we require that $S+S^\aux$ satisfies the quantum master equation on $\FF\times \FF^\aux$:
$$(\Delta+\Delta^\aux)\,e^{i(S+S^\aux)}=0$$
\end{definition}

The analog of proposition \ref{prop: pre-obs to obs} in the setting of quantum BV theory is as follows.
\begin{prop}  \label{prop: quantum pre-obs to obs}
Given a quantum pre-observable $(\FF^\aux,\Omega^\aux,\mu^\aux,S^\aux)$,
define again $O_\LL$ 
as  $$O_\LL=\int_{\LL\subset \FF^\aux} e^{i S^\aux} \sqrt{\mu^\aux}|_\LL \in C^\infty(\FF)$$
for $\LL\subset \FF^\aux$ a Lagrangian submanifold.
Then:
\begin{enumerate}[(i)]
\item $O_\LL$ is a quantum observable, i.e. $\delta_\BV O_\LL=0$.
\item If Lagrangian submanifolds $\LL$ and $\LL'$ can be connected by a Lagrangian homotopy, then the difference $O_{\LL'}-O_{\LL}$ is $\delta_\BV$-exact.
\end{enumerate}
\end{prop}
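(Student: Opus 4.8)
The plan is to follow the proof of Proposition~\ref{prop: pre-obs to obs} essentially line by line, with two substitutions: the ambient BRST operator $Q$ is replaced throughout by the quantum differential $\delta_\BV=\{S,\bt\}_\Omega-i\Delta$ of the quantum ambient theory, and the semi-quantum equation (\ref{pre-observable QME exp form}) is replaced by the quantum master equation $(\Delta+\Delta^\aux)\,e^{i(S+S^\aux)}=0$ on $\FF\times\FF^\aux$. The only genuinely new step is to repackage this product-space master equation into the form $(\delta_\BV-i\Delta^\aux)\,e^{iS^\aux}=0$, which is the precise analogue of (\ref{pre-observable QME exp form}).

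First I would derive that identity. Since $e^{iS}$ is a function of the ambient fields only it commutes past $\Delta^\aux$ and past the fiber integral, while $\Delta\,e^{iS}=0$ by the ambient quantum master equation (\ref{QME}). Expanding $\Delta(e^{iS}e^{iS^\aux})$ by the second-order Leibniz rule $\Delta(fg)=(\Delta f)g+(-1)^{|f|}f\,\Delta g+(-1)^{|f|}\{f,g\}_\Omega$ with $|e^{iS}|=0$, substituting into $(\Delta+\Delta^\aux)(e^{iS}e^{iS^\aux})=0$ and dividing by the invertible factor $e^{iS}$, one gets $\Delta^\aux e^{iS^\aux}=-\Delta e^{iS^\aux}-i\{S,e^{iS^\aux}\}_\Omega$, equivalently $\delta_\BV e^{iS^\aux}=i\,\Delta^\aux e^{iS^\aux}$. (This is just the observation $\delta_\BV=e^{-iS}(-i\Delta)e^{iS}$ applied to $e^{iS^\aux}$ on the product space.)

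Part (i) is then immediate: $\delta_\BV O_\LL=\delta_\BV\int_\LL e^{iS^\aux}=\int_\LL\delta_\BV e^{iS^\aux}=i\int_\LL\Delta^\aux e^{iS^\aux}=0$ by the Stokes' theorem for BV integrals (\cite{Schwarz}), exactly as in (\ref{prop 1 eq1}); here I used that $\delta_\BV$ is a differential operator in the ambient fields and so passes under the integral over $\LL\subset\FF^\aux$. For part (ii), I would reuse the Lagrangian-homotopy computation from the proof of Proposition~\ref{prop: pre-obs to obs}: with the infinitesimal deformation of $\LL_t$ generated by a degree $-1$ function $\Psi_t\in C^\infty(\LL_t)$ as in (\ref{inf transf of L}), the $t$-derivative of $O_{\LL_t}$ equals $\int_{\LL_t}\big(\Delta^\aux(e^{iS^\aux}\Psi_t)-(\Delta^\aux e^{iS^\aux})\Psi_t\big)$; the first term dies by Stokes, and in the second I substitute $\Delta^\aux e^{iS^\aux}=-i\,\delta_\BV e^{iS^\aux}$, use $\delta_\BV\Psi_t=0$ and $\{e^{iS^\aux},\Psi_t\}_\Omega=0$ (both because $\Psi_t$ has no dependence on the ambient fields) to collapse it to $\delta_\BV\big(i\int_{\LL_t}e^{iS^\aux}\Psi_t\big)$, and integrate in $t$ to exhibit $O_{\LL'}-O_\LL$ as $\delta_\BV$-exact.

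I do not expect a real obstacle here --- the argument is routine once the identity of the second paragraph is in place. The only points demanding care are the sign bookkeeping in the second-order Leibniz rule for $\Delta$, and the observation that the would-be Leibniz-defect term $-i\{e^{iS^\aux},\Psi_t\}_\Omega$ appearing in $\delta_\BV(e^{iS^\aux}\Psi_t)$ vanishes because $\Psi_t$ is a function on $\LL_t\subset\FF^\aux$ alone; this is the same mechanism that made the corresponding term drop out in the proof of Proposition~\ref{prop: pre-obs to obs}. As there, convergence of all fiber integrals (for every $t\in[0,1]$ in part (ii)) is assumed throughout.
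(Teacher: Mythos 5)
Your proposal is correct and is essentially the paper's own proof: the paper simply instructs the reader to rerun the proof of Proposition \ref{prop: pre-obs to obs} with $Q\mapsto -i\Delta$, $S^\aux\mapsto S+S^\aux$ and $O\mapsto O\,e^{iS}$, which is exactly what your identity $\delta_\BV\,e^{iS^\aux}=i\Delta^\aux e^{iS^\aux}$ (via $\delta_\BV=e^{-iS}(-i\Delta)e^{iS}$ and the product-space quantum master equation) packages. You have merely made explicit the substitution the paper leaves implicit.
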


\begin{proof}
The proof is obtained from the proof of proposition \ref{prop: pre-obs to obs} by replacing $Q$ by $-i\Delta$, $S^\aux$ by $S+S^\aux$ and $O$ by $O\, e^{i S}$ everywhere.
\end{proof}

Similarly, proposition \ref{prop: pushing pre-obs to zero-modes} holds in the context of quantum pre-observables.


\begin{rem}\label{rem: Planck's constant} It is customary for perturbative quantization to introduce a formal parameter (the ``Planck's constant'') $\hbar$ by making a rescaling
\be \Omega\ra \hbar^{-1}\Omega,\quad S\ra \hbar^{-1} S \label{Omega and S rescaling}\ee
(which means $\Omega^\mr{old}=\hbar^{-1}\Omega^\mr{new}$ etc.) where the new action is a formal power series in $\hbar$, $S\in C^\infty(\FF)[[\hbar]]$. Hamiltonian vector field $\{S,\bt\}_\Omega$ does not rescale and the BV Laplacian rescales as $\Delta\ra \hbar\Delta$. For the structure on auxiliary fields, one does the same:
\be \Omega^\aux\ra\hbar^{-1}\Omega^\aux,\quad S^\aux\ra \hbar^{-1}S^\aux \label{Omega and S aux rescaling}\ee
With these redefinitions the equation (\ref{pre-observable QME}) on semi-quantum pre-observables becomes
$$Q S^\aux+\frac{1}{2}\{S^\aux,S^\aux\}_{\Omega^\aux}-i\hbar \Delta^\aux S^\aux=0$$
and can be solved by obstruction theory order by order in $\hbar$ for $S^\aux=S^\aux_{(0)}+\hbar\, S^\aux_{(1)}+\hbar^2\, S^\aux_{(2)}+\cdots$ starting from $S^\aux_{(0)}$ a solution of (\ref{pre-observable CME}). Likewise, the BV push-forward formula (\ref{obs via fiber BV integral}) becomes
$$ O=\int_{\LL\subset \FF^\aux} e^{\frac{i}{\hbar}S^\aux} \sqrt{\mu^\aux}|_\LL  \in C^\infty(\FF)[[\hbar]]$$
and can be evaluated by the stationary phase formula.

Note that one can also introduce two independent Planck's constants $\hbar$, $\hbar^\aux$ for the ambient (\ref{Omega and S rescaling})  and auxiliary (\ref{Omega and S aux rescaling}) theory respectively.
\end{rem}

\section{AKSZ pre-observable associated to a Hamiltonian $Q$-bundle}
\label{sec: AKSZ pre-observable}
Let $(\MM,Q,\omega=\delta\alpha,\Theta)$ be a Hamiltonian $Q$-manifold and $\Sigma$ a closed oriented spacetime manifold, $\dim\Sigma=|\omega|+1$. Then we have the AKSZ theory with \begin{multline}\label{AKSZ theory}
\FF_\Sigma=\Map(T[1]\Sigma,\MM),\qquad Q_\Sigma=(d_\Sigma)^{\lifted}+Q^{\lifted}, \\ \Omega_\Sigma=(-1)^{\dim\Sigma}\tau_\Sigma(\omega),\qquad S_\Sigma=\iota_{d_\Sigma^{\lifted}}\tau_\Sigma(\alpha)+\tau_\Sigma(\Theta)
\end{multline}
as in section \ref{sec: AKSZ reminder}.

Given a trivial Hamiltonian $Q$-bundle over $\MM$ with fiber data $(\NN,\A,\omega'=\delta\alpha',\Theta')$, a closed oriented manifold $\gamma$ of dimension $\dim\gamma=|\omega'|+1$ and a smooth map\footnote{By default, we assume that $\ii$ is an embedding, cf. remark \ref{rem: i embedding} below.} $\ii:\gamma\ra \Sigma$, we can construct a pre-observable for the AKSZ theory (\ref{AKSZ theory}) with target $\MM$ as follows. Set
\begin{multline}\label{AKSZ pre-observable}
\FF_\gamma=\Map(T[1]\gamma,\NN),\qquad \A_\gamma=(d_\gamma)^\lifted+ p^*\A^\lifted,\\
\Omega_\gamma=(-1)^{\dim\gamma}\tau_\gamma(\omega'),\qquad S_\gamma= \iota_{(d_\gamma)^\lifted}\tau_\gamma(\alpha')+p^*\tau_\gamma^\tot(\Theta')
\end{multline}
Here  the notations are as follows.
\begin{itemize}
\item $\tau_\gamma: \Omega^\bt(\NN)\ra \Omega^\bt(\FF_\gamma)$ and $\tau_\gamma^\tot:\Omega^\bt(\EE)\ra \Omega^\bt(\Map(T[1]\gamma,\EE))$ are the transgression maps, defined as in (\ref{AKSZ transgression}); $\EE=\MM\times\NN$ is the total space of the target Hamiltonian $Q$-bundle.
\item Map $p=\ii^*: \FF_\Sigma\ra \Map(T[1]\gamma,\MM)$ is the pull-back of ambient fields by
$\ii:\gamma\ra\Sigma$
and $p^*: C^\infty(\Map(T[1]\gamma,\MM))\ra C^\infty(\FF_\Sigma)$ is the pull-back by $p$. By the trivial extension to auxiliary fields, $p=p\times \id_{\FF_\gamma}$ also maps $\FF_\Sigma\times \FF_\gamma$ to $\Map(T[1]\gamma,\EE)$.
\item
$(d_\gamma)^\lifted\in \Vect(\FF_\gamma)\subset \Vect^\ver(\FF_\Sigma\times \FF_\gamma)$ is the lifting of the de Rham differential on $\gamma$ to a vector field on the mapping space $\Map(T[1]\gamma,\NN)$.
\item $\A^\lifted$ is the lifting of the vertical vector field $\A\in\Vect^\ver(\EE)$ to a vertical vector field on the mapping space $\Map(T[1]\gamma,\EE)$, so that $p^*\A^\lifted$ becomes a vertical vector field on $\FF_\Sigma\times \FF_\gamma$.
\end{itemize}

\begin{prop} \label{prop 3}
The data (\ref{AKSZ pre-observable}) defines a classical pre-observable for the AKSZ theory (\ref{AKSZ theory}), in particular
\be Q_\Sigma S_\gamma + \frac{1}{2}\{S_\gamma,S_\gamma\}_{\Omega_\gamma}=0 \label{prop 3 CME}\ee
\end{prop}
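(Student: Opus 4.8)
The plan is to verify equation (\ref{prop 3 CME}) by transgressing the fiber master equation (\ref{fiber CME}) from the Hamiltonian $Q$-bundle $\EE \to \MM$ to the mapping space, using the same bookkeeping that makes the AKSZ master action satisfy $\{S_\Sigma,S_\Sigma\}=0$. Concretely, I would split $S_\gamma$ into its kinetic piece $S_\gamma^{\kin}=\iota_{(d_\gamma)^\lifted}\tau_\gamma(\alpha')$ and its target piece $S_\gamma^{\target}=p^*\tau_\gamma^\tot(\Theta')$, and split $Q_\Sigma = (d_\Sigma)^\lifted + Q^\lifted$ as well, so that (\ref{prop 3 CME}) becomes a sum of terms that I group according to whether they are "purely source de Rham", "source--target mixed", or "purely target". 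The claim is that the source-de-Rham terms cancel by a Stokes/Cartan-calculus argument (exactly as in the standard AKSZ computation, using that $(d_\gamma)^\lifted$ is a lifted vector field so Lie derivatives of transgressed forms along it vanish, cf. (\ref{L_v tau = 0})), while the remaining terms assemble into the transgression of the left-hand side of (\ref{fiber CME}), which vanishes by hypothesis.

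\textbf{Key steps.} First I would record the Poisson-bracket behavior of transgression: for the degree $-1$ symplectic form $\Omega_\gamma = (-1)^{\dim\gamma}\tau_\gamma(\omega')$ on $\FF_\gamma$, one has $\{\tau_\gamma(f),\tau_\gamma(g)\}_{\Omega_\gamma} = \pm\,\tau_\gamma(\{f,g\}_{\omega'})$ for $f,g \in C^\infty(\NN)$ (and the analogous statement on $\Map(T[1]\gamma,\EE)$ for $\tau_\gamma^\tot$ and $\omega'$ pulled back to $\EE$), with the sign determined by the dimension shift; this is the mapping-space avatar of the fact that transgression is a morphism of the relevant Poisson structures. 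Second, I would compute $\{S_\gamma,S_\gamma\}_{\Omega_\gamma}$: the $\{S_\gamma^{\target},S_\gamma^{\target}\}$ term gives $p^*\tau_\gamma^\tot(\{\Theta',\Theta'\}_{\omega'})$ up to sign, the cross term $\{S_\gamma^{\kin},S_\gamma^{\target}\}$ produces (via the Hamiltonian vector field of $\alpha'$ paired against $d_\gamma$) a term equal to $2\,\iota_{(d_\gamma)^\lifted}\tau_\gamma^\tot(\delta^\ver\Theta')$-type expression, i.e. $2\,p^*\tau_\gamma^\tot$ of the fiber de Rham derivative of $\Theta'$ contracted with $d_\gamma$, and $\{S_\gamma^{\kin},S_\gamma^{\kin}\}$ vanishes because $\alpha'$ is a $1$-form and $(d_\gamma)^\lifted$ squares to zero. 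Third, I would compute $Q_\Sigma S_\gamma$: here $(d_\Sigma)^\lifted$ acting on $p^*\tau_\gamma^\tot(\Theta')$ passes through $p=\ii^*$ and becomes $p^*$ of $(d_\gamma)^\lifted$ applied to $\tau_\gamma^\tot(\Theta')$ (using $\ii^*d_\Sigma = d_\gamma \ii^*$), which by (\ref{L_u tau = tau L_u}) equals $\pm p^*\tau_\gamma^\tot(\iota_{d_\gamma}\delta\Theta')$ and is then seen to cancel the cross term from step two; meanwhile $(d_\Sigma)^\lifted$ on $S_\gamma^{\kin}$ and $Q^\lifted$ on $S_\gamma^{\kin}$ produce the source-de-Rham cancellations and a term $p^*\tau_\gamma^\tot(Q\Theta')$ respectively, where I use that $\A^\lifted$ is the Hamiltonian vector field of $\tau_\gamma^\tot(\alpha')$-induced data and that $Q$ lifted horizontally acts on the $C^\infty(\MM)$-coefficients. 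Assembling, the surviving terms are $p^*\tau_\gamma^\tot$ applied to $Q\Theta' + \tfrac12\{\Theta',\Theta'\}_{\omega'}$, which is zero by (\ref{fiber CME}).

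\textbf{Main obstacle.} The genuinely delicate point is the careful handling of signs and of the interplay between the three gradings in play — the internal (ghost) degree, the de Rham degree on $\gamma$ (absorbed by fiber integration $p_*$ over $T[1]\gamma$), and the odd-symplectic degree $-1$ on $\FF_\gamma$ — together with the fact that $Q^\tot = Q + \A$ on $\EE$ is \emph{not} a direct sum, so that $\A$ carries $C^\infty(\MM)$-dependence that $(d_\Sigma)^\lifted$ and $Q^\lifted$ can differentiate. In particular one must check that when $Q^\lifted$ (the lift of the target cohomological vector field on $\MM$) hits the $\MM$-coefficients sitting inside $p^*\tau_\gamma^\tot(\Theta')$ and inside $p^*\A^\lifted$, the result is exactly $p^*\tau_\gamma^\tot(Q\Theta')$ with no leftover — this is where (\ref{fiber CME}), rather than a naive fiberwise master equation, is exactly what is needed, and it is the reason the first term of (\ref{fiber CME}) appears with coefficient $1$ and not $\tfrac12$. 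I would organize the computation so that (\ref{fiber CME}) is plainly the only input beyond formal properties of transgression, AKSZ lifts, and the functoriality $\ii^* d_\Sigma = d_\gamma \ii^*$.
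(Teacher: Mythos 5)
Your strategy for (\ref{prop 3 CME}) is essentially the paper's: split $S_\gamma$ and $Q_\Sigma$ into kinetic and target pieces, dispose of $\{S_\gamma^\kin,S_\gamma^\kin\}_{\Omega_\gamma}$ via the nilpotence of $(d_\gamma)^\lifted$, and recognize the surviving target terms as $\pm\, p^*\tau_\gamma^\tot\bigl(Q\Theta'+\tfrac12\{\Theta',\Theta'\}_{\omega'}\bigr)$, which vanishes by (\ref{fiber CME}); your observation about why $Q\Theta'$ enters with coefficient $1$ and the bracket with $\tfrac12$ is exactly the point. Two caveats. First, the proposition asserts that (\ref{AKSZ pre-observable}) is a classical pre-observable, i.e.\ a degree $-1$ Hamiltonian $Q$-bundle over $\FF_\Sigma$ in the sense of definitions \ref{def: class pre-observable} and \ref{def: trivial Ham Q-bundles}; besides (\ref{prop 3 CME}) this requires checking that $\A_\gamma=\{S_\gamma,\bt\}_{\Omega_\gamma}$. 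The paper spends the first half of its proof on this (showing $\iota_{\A_\gamma^\kin}\Omega_\gamma=\delta S_\gamma^\kin$ by Cartan's formula together with (\ref{L_v tau = 0}), and $\iota_{\A_\gamma^\target}\Omega_\gamma=\delta^\ver S_\gamma^\target$ from $\iota_\A\omega'=\delta^\ver\Theta'$), whereas you only gesture at it; it needs to be written out, since the subsequent identification of $\{S_\gamma^\target,S_\gamma^\target\}_{\Omega_\gamma}$ with the transgression of $\A\Theta'=\{\Theta',\Theta'\}_{\omega'}$ rests on it.

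Second, your handling of the mixed terms is sound in spirit but garbled in the formulas: the expression $\tau_\gamma^\tot(\iota_{d_\gamma}\delta\Theta')$ is type-incorrect ($d_\gamma$ lives on the source, $\delta\Theta'$ on the target, so the contraction is meaningless), and (\ref{L_u tau = tau L_u}) applies to lifts of \emph{target} vector fields, not to $(d_\gamma)^\lifted$. What is true, and what you correctly intuit, is that $Q_\Sigma^\kin S_\gamma^\target$ and $\{S_\gamma^\kin,S_\gamma^\target\}_{\Omega_\gamma}=\A_\gamma^\kin S_\gamma^\target$ together reconstitute the Lie derivative of $\tau_\gamma^\tot(\Theta')$ along the \emph{full} lift of $d_\gamma$ to $\Map(T[1]\gamma,\EE)$, which vanishes by (\ref{L_v tau = 0}) (equivalently, by Stokes on the closed manifold $\gamma$, since the integrand assembles into $d(\Theta'(\ii^*X,Y))$). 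The paper instead asserts that each of the two terms vanishes separately by (\ref{L_v tau = 0}); since each is only a partial lift of $d_\gamma$ (acting on the $\MM$-valued, respectively $\NN$-valued, components alone), that statement is only literally justified for their sum, so your pairing of the two terms is, if anything, the cleaner bookkeeping. With the Hamiltonian check added and the mixed-term formula repaired as above, the proof goes through.
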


\begin{proof}
First let us check that $\A_\gamma$ as defined in (\ref{AKSZ pre-observable}) is the Hamiltonian vector field for $S_\gamma$. Introduce the notations for the source (kinetic) and target parts of $\A_\gamma$ and $S_\gamma$:
$$\A_\gamma=\underbrace{(d_\gamma)^\lifted}_{\A_\gamma^\kin}+ \underbrace{p^*\A^\lifted}_{\A_\gamma^\target},\qquad S_\gamma= \underbrace{\iota_{(d_\gamma)^\lifted}\tau_\gamma(\alpha')}_{S_\gamma^\kin}+\underbrace{p^*\tau_\gamma^\tot(\Theta')}_{S_\gamma^\target}$$
Then we have
\begin{multline} \label{prop 3 eq1}
\iota_{\A^\kin_\gamma}\Omega_\gamma = \iota_{(d_\gamma)^\lifted} (-1)^{\dim\gamma}\tau_\gamma(\omega')= \iota_{(d_\gamma)^\lifted}\delta\tau_\gamma(\alpha')= \\
=\underbrace{L_{(d_\gamma)^\lifted}\tau_\gamma(\alpha')}_{=0\;\mathrm{by}\; (\ref{L_v tau = 0})}+
\delta \iota_{(d_\gamma)^\lifted} \tau_\gamma(\alpha')=\delta S^\kin_\gamma=\delta^\ver S^\kin_\gamma
\end{multline}
and
\begin{multline} \label{prop 3 eq2}
\iota_{\A^\target_\gamma}\Omega_\gamma=p^*\iota_{\A^\lifted}(-1)^{\dim\gamma}\tau^\tot_\gamma(\omega')=\\
=(-1)^{\dim\gamma}p^*\tau^\tot_\gamma(\underbrace{\iota_\A\omega'}_{=\delta^\ver \Theta'})=\delta^\ver p^*\tau^\tot_\gamma(\Theta')=\delta^\ver S_\gamma^\target
\end{multline}
Here $\delta^\ver$ stands for the de Rham differential in the fiber direction, 
as in section \ref{sec: general Ham Q-bundles}.
Collecting (\ref{prop 3 eq1},\ref{prop 3 eq2}), we get
$$\{S_\gamma,\bt\}_{\Omega_\gamma}=\A_\gamma$$

Next, let us prove (\ref{prop 3 CME}). Note that
$$Q_\Sigma S_\gamma^\kin=0$$
since $S_\gamma^\kin$ does not depend on the ambient fields (i.e. is a function on $\FF_\Sigma\times\FF_\gamma$, constant in the direction of $\FF_\Sigma$); also
$$Q_\Sigma^\kin S_\gamma^\target=p^*(Q_\Sigma^\kin|_\gamma \tau^\tot_\gamma(\Theta'))=0$$
as in (\ref{L_v tau = 0}), where $Q_\Sigma^\kin|_\gamma$ is the lifting of $d_\gamma$ to $\Map(T[1]\gamma,\MM)$, extended to a horizontal vector field on $\Map(T[1]\gamma,\EE)$.
Thus
\begin{multline} \label{prop 3 CME calculation}
Q_\Sigma S_\gamma + \frac{1}{2}\{S_\gamma,S_\gamma\}_{\Omega_\gamma}=\\
=Q_\Sigma^\target S_\gamma^\target +\frac{1}{2}\{S_\gamma^\kin,S_\gamma^\kin\}_{\Omega_\gamma}+\{S_\gamma^\kin,S_\gamma^\target\}+\frac{1}{2} \{S_\gamma^\target,S_\gamma^\target\}_{\Omega_\gamma} \\
=\frac{1}{2}\{S_\gamma^\kin,S_\gamma^\kin\}_{\Omega_\gamma} + \underbrace{\A_\gamma^\kin S_\gamma^\target}_{0\;\mathrm{by}\;(\ref{L_v tau = 0})} + (Q_\Sigma^\target+\frac{1}{2}\A_\gamma^\target)S_\gamma^\target
\end{multline}
Here the first term on the r.h.s. has $((d_\gamma)^\lifted)^2=0$ as its Hamiltonian vector field on $\FF_\gamma$, thus it is a degree 1 constant function on $\FF_\gamma$, and hence vanishes. Therefore, continuing the calculation (\ref{prop 3 CME calculation}) and using (\ref{L_u tau = tau L_u}), we have
\begin{multline}
Q_\Sigma S_\gamma + \frac{1}{2}\{S_\gamma,S_\gamma\}_{\Omega_\gamma}=
(-1)^{\dim\gamma} p^*\tau^\tot_\gamma(Q\Theta'+\frac{1}{2}\A\Theta')=\\
=(-1)^{\dim\gamma} p^* \tau^\tot_\gamma(Q\Theta'+\frac{1}{2}\{\Theta',\Theta'\}_{\omega'}) = 0
\end{multline}
since the target is a Hamiltonian $Q$-bundle and equation (\ref{fiber CME}) holds there.
This finishes the proof.
\end{proof}

\textit{In coordinates.}
Alongside with the superfield (\ref{superfield}) for the ambient theory on $\Sigma$, we now have a superfield for the auxiliary theory on $\gamma$:
\begin{multline}\label{aux superfield}
Y^i(v^1,\ldots,v^{\dim\gamma},\xi^1,\ldots,\xi^{\dim\gamma})= \\
=\sum_{k=1}^{\dim\gamma}\;\underbrace{\sum_{1\leq \nu_1<\cdots < \nu_k\leq \dim\gamma} Y^i_{\nu_1\cdots\nu_k}(v^1,\ldots,v^{\dim\gamma})\;\xi^{\nu_1}\cdots\xi^{\nu_k}}_{Y^i_{(k)}(v,\xi)}
\end{multline}
associated to local homogeneous coordinates $y^i$ on $\NN$.
Here $v^1,\ldots,v^{\dim\gamma}$ are local coordinates on $\gamma$ and
$\xi^\nu$ stands for $dv^\nu$, $\nu=1,\ldots,\dim\gamma$.

Let $\alpha'$ and $\omega'$ be locally given on $\NN$ as $\alpha'=\alpha'_i(y)\, \delta y^i$, $\omega'=\frac{1}{2} \omega'_{ij}(y)\, \delta y^i \wedge \delta y^j$.
Then the BV 2-form for auxiliary fields is:
$$\Omega_\gamma=(-1)^{\dim\gamma}\int_\gamma\frac{1}{2}\, \omega'_{ij}(Y)\, \delta Y^i\wedge \delta Y^j$$
Its primitive is:
$$\alpha_\gamma= \int_\gamma \alpha'_i(Y)\, \delta Y^i$$
The action for auxiliary fields is:
$$S_\gamma=\int_\gamma \alpha'_i(Y)\, d Y^i+ \int_\gamma\Theta'(\ii^*X,Y)$$
Vertical part of the cohomological vector field $\A_\gamma$ acts on the auxiliary superfield by
$$\A_\gamma Y^i= d Y^i+ \A^i(\ii^*X,Y)$$
if on the target we have $\A=\A^i(x,y)\frac{\dd}{\dd y^i}$.

\begin{rem}\label{rem: S_Sigma+S_gamma CME} In the case when $i(\gamma)\subset\Sigma$ is an embedded submanifold of positive codimension, the expression
\be \{S_\gamma,S_\gamma\}_{\Omega_\Sigma} \label{(S_gamma,S_gamma) wrt Omega_Sigma}\ee
is ill-defined, because it formally contains $\int_\Sigma (\delta_\gamma)^2\cdots$ where $\delta_\gamma$ is the delta function supported on on $i(\gamma)\subset\Sigma$. On the other hand, if we use adapted local coordinates $u^1,\ldots,u^{\dim\Sigma}$ on $\Sigma$ in which $i(\gamma)$ is given by $u^{\dim\gamma+1}=\cdots=u^{\dim\Sigma}=0$, the auxiliary action $S_\gamma$ only depends on the components of the ambient superfield $X^a_{\mu_1\cdots\mu_k}$ with $\mu_1,\ldots,\mu_k\leq \dim\gamma$, whereas the BV 2-form $\Omega_\Sigma$ couples the components $X^a_{\mu_1\cdots\mu_k}$ and $X^b_{\nu_1\cdots\nu_l}$ if and only if $\{\mu_1,\ldots,\mu_k\}\sqcup\{\nu_1,\ldots,\nu_l\}=\{1,\ldots,\dim\Sigma\}$ (for the sake of this argument we choose local Darboux coordinates on $\MM$ in which $\omega_{ab}(x)$ does not depend on $x$). So, formally (\ref{(S_gamma,S_gamma) wrt Omega_Sigma}) is $\infty\cdot 0$, and we naively regularize it to zero. Thus, by remark \ref{rem: pre-obs to CME for S+S^aux}, we have (in the sense of our regularization) the classical master equation on $\FF_\Sigma\times\FF_\gamma$:
\be \{S_\Sigma+S_\gamma,S_\Sigma+S_\gamma\}_{\Omega_\Sigma+\Omega_\gamma}=0 \label{S_Sigma+S_gamma CME}\ee

In the case when $\ii:\gamma\ra\Sigma$ is a diffeomorphism, we have 
$$ \{S_\gamma,S_\gamma\}_{\Omega_\Sigma}=  \{S_\gamma^\target,S_\gamma^\target\}_{\Omega_\Sigma}=(-1)^{\dim\Sigma} p^*\tau_\gamma^\tot(\{\Theta',\Theta'\}_\omega)$$
Thus, (\ref{S_Sigma+S_gamma CME})  holds if and only if on the target we have
\be \{\Theta',\Theta'\}_\omega=0 \label{(Theta',Theta')_omega=0}\ee
in addition to (\ref{fiber CME}).
\end{rem}

\begin{rem} \label{rem: S_kin for omega' integral}
If $\dim\gamma=1$, $\omega'$ is not automatically exact. The interesting case is when $\omega'$ is not exact but satisfies Bohr-Sommerfeld integrality condition, see remark \ref{rem: B-S integrality}. We assume that $\gamma$ is a circle; the discussion extends to a disjoint union of circles trivially. The kinetic part of the action for auxiliary fields $S_\gamma^\kin$ cannot be defined as in (\ref{AKSZ pre-observable}), because the primitive $\alpha'$ does not exist globally on $\NN$. Instead one can define
\be e^{i S_\gamma^\kin(Y)} := \mr{Hol}_{Y_{(0)}^*\nabla'} (\gamma)\in U(1) \label{exp i S_gamma^kin}\ee
--  the holonomy around $\gamma$ of the pull-back of the connection $\nabla'$ (cf. remark \ref{rem: B-S integrality}) by $Y_{(0)}:\gamma\ra\NN$. Here the l.h.s. should be viewed as one symbol. Target part of the action is defined as before, so we have
$$e^{i S_\gamma(X,Y)}= e^{i\oint_\gamma \Theta'(\ii^*X,Y)}\mr{Hol}_{Y_{(0)}^*\nabla'} (\gamma)$$
Otherwise, using Stokes' theorem one can rewrite the kinetic part of the auxiliary action as
\be S_\gamma^\kin (Y) = \int_D \tilde Y_{(0)}^*\omega' 
\label{S_gamma^kin via Stokes}\ee
for $D$ a disc 
bounded by $\gamma$ and $\tilde Y_{(0)}$ an arbitrary extension of $Y_{(0)}$ from $\gamma$ to $D$. Due to integrality of $\omega'/2\pi$, arbitrariness of 
$\tilde Y_{(0)}$ results in (\ref{S_gamma^kin via Stokes}) being defined modulo multiples of $2\pi$:
$$S_\gamma^\kin (Y)\in \RR/2\pi\ZZ$$
From this point of view, the l.h.s. of (\ref{exp i S_gamma^kin}) is well-defined.

It is easy to see that the proof of proposition \ref{prop 3} still works in this case (e.g. because one can find a primitive of $\omega'$ in the tubular neighborhood of the loop $Y_{(0)}(\gamma)\subset \NN$).

\end{rem}

\begin{rem}
Pre-observable (\ref{AKSZ pre-observable}) is invariant w.r.t. $\Diff(\gamma)$ (reparametrizations) and $\Diff(\Sigma)$ (ambient diffeomorphisms) in the following sense. Let us denote explicitly the dependence of $S_\gamma$ on $\ii:\gamma\ra\Sigma$ (through $p=\ii^*$ in (\ref{AKSZ pre-observable})) as $S_\gamma(X,Y;\ii)$. Then for an ambient diffeomorphism $\Phi\in\Diff(\Sigma)$ and a reparametrization $\phi\in\Diff(\gamma)$ we have
\be S_\gamma(X,Y;\Phi\circ \ii\circ \phi) = S_\gamma(\Phi^* X,(\phi^{-1})^* Y;\ii) \label{pre-obs diffeo}\ee
which can be seen directly from the definition (\ref{AKSZ pre-observable}).
\end{rem}


\section{From pre-observables to observables}
\label{sec: from pre-obs to obs}
The general idea of passage from pre-observables constructed in section \ref{sec: AKSZ pre-observable} to observables for the underlying AKSZ theory is by means of integrating out the auxiliary fields as in proposition \ref{prop: pre-obs to obs}, i.e.
\be O_\gamma=\int_{\LL\subset \FF_\gamma} e^{i S_\gamma}\in C^\infty(\FF_\Sigma) \label{AKSZ observable via integral over aux}\ee
where $\LL$ is a Lagrangian submanifold of $(\FF_\gamma,\Omega_\gamma)$. When we want to emphasize the dependence of $O_\gamma$ on the map $\ii:\gamma\ra\Sigma$, we will write $O_{\gamma,\ii}$.

Assuming that we can make sense of the path integral (\ref{AKSZ observable via integral over aux}),
the observable $O_\gamma$ is expected to have the following properties:
\begin{enumerate}[(i)]
\item \label{obs expected i} $O_\gamma$ depends on the fields of the ambient AKSZ theory only via pull-back by $\ii:\gamma\ra\Sigma$.
\item \label{obs expected ii} Gauge invariance: $Q_\Sigma O_\gamma=0$ (which is, indeed, our definition of an observable).
\item  \label{obs expected iii} The class of $O_\gamma$ in $Q_\Sigma$-cohomology is independent of deformations of the gauge fixing $\LL$ (as a Lagrangian submanifold of $\FF_\gamma$).
\item \label{obs expected iv} The class of $O_\gamma$ in $Q_\Sigma$-cohomology is invariant under isotopy of $\gamma$ (reparametrizations of $\gamma$ homotopic to identity): for $\phi\in \mr{Diff}_0(\gamma)$, we have
    $$O_{\gamma,\ii\circ\phi}= O_{\gamma,\ii}+Q_\Sigma(\cdots)$$
\item  \label{obs expected v} Invariance under ambient diffeomorphisms: for $\Phi\in\Diff(\Sigma)$, we have
$$O_{\gamma,\Phi\circ \ii}(X)=O_{\gamma,\ii}(\Phi^*X)$$
\item \label{obs expected vi} Invariance of the correlator $\langle O_{\gamma} \rangle $ under ambient isotopy: for $\Phi\in\Diff_0(\Sigma)$, we have
$$\langle O_{\gamma,\Phi\circ \ii} \rangle = \langle O_{\gamma,\ii} \rangle$$
\end{enumerate}
Here (\ref{obs expected i}) follows from our construction of the pre-observable (\ref{AKSZ pre-observable}) and (\ref{obs expected v}) follows from (\ref{obs expected i}). Properties (\ref{obs expected ii}) and (\ref{obs expected iii}) are expected to hold in view of the proposition \ref{prop: pre-obs to obs} for finite-dimensional fiber BV integrals. Property (\ref{obs expected iv}) formally follows from (\ref{pre-obs diffeo}), (\ref{obs expected iii}) and $\Diff(\gamma)$-invariance of the measure on $\FF_\gamma$:
\begin{multline}
O_{\gamma,\ii\circ\phi}=\int_{\LL\subset \FF_\gamma} \DD Y\; e^{i S_{\gamma}(X,Y;\ii\circ\phi)}= \int_\LL \DD Y\; e^{i S_{\gamma}(X,(\phi^{-1})^*Y;\ii)}\\
=\int_{(\phi^{-1})^*\LL} (\phi^{-1*})_*(\DD Y)\; e^{i S_{\gamma}(X,Y;\ii)}
=\int_{(\phi^{-1})^*\LL} \DD Y\; e^{i S_{\gamma}(X,Y;\ii)} \\
= \int_{\LL} \DD Y\; e^{i S_{\gamma}(X,Y;\ii)}+Q_\Sigma(\cdots)
=O_{\gamma,\ii}+Q_\Sigma(\cdots)
\end{multline}
Here $\int_\LL \DD Y\cdots$ should be understood as $\int_\LL \sqrt{\mu}|_\LL\cdots$ where $\mu$ is the path integral measure on $\FF_\gamma$.
Note that we do use the fact that $\phi$ is an isotopy rather than a general diffeomorphism, since otherwise Lagrangians $\LL$ and $(\phi^{-1})^*\LL$ are not guaranteed to be homotopic.
Property (\ref{obs expected vi}) formally follows from (\ref{obs expected v}), $\Diff(\Sigma)$-invariance of the AKSZ action (\ref{AKSZ action diff-invar}) and of the path integral measure on $\FF_\Sigma$, and from the BV-Stokes' theorem:
\begin{multline}
\langle O_{\gamma,\Phi\circ \ii}\rangle = \int_{\LL\subset \FF_\Sigma} \DD X\; O_{\gamma,\Phi\circ\ii}(X)\; e^{i S_\Sigma(X)} =
\int_{\LL} \DD X\; O_{\gamma,\ii}(\Phi^*X)\; e^{i S_\Sigma(X)} \\
=\int_{\Phi^*\LL} (\Phi^*)_*(\DD X)\; O_{\gamma,\ii}(X)\; e^{i S_\Sigma((\Phi^{-1})^*X)}= \int_{\Phi^*\LL} \DD X\; O_{\gamma,\ii}(X)\; e^{i S_\Sigma(X)}\\
=\int_{\LL} \DD X\; O_{\gamma,\ii}(X)\; e^{i S_\Sigma(X)} = \langle O_{\gamma,\ii} \rangle
\end{multline}

\begin{rem}\label{rem: i embedding}
For our construction of observable $O_\gamma$ it is not a priori necessary to impose any restrictions on the map $\ii:\gamma\ra\Sigma$. However, if we want to make sense of the correlator $\langle O_\gamma \rangle$ via perturbation theory for the path integral (which is outside of the scope of this paper), we have to require that $\ii$ is an embedding. The technical reason is that if $\ii:\gamma\ra\Sigma$ is not an embedding, it does not lift to a map between compactified configuration spaces of pairs of points on $\gamma$ and $\Sigma$, $\mr{Conf}_2(\gamma)\ra \mr{Conf}_2(\Sigma)$, which prevents one from defining the pull-back of the propagator of the ambient theory to $\gamma$, which in turn leads to Feynman diagrams for the correlator being ill-defined.
\end{rem}

We will not try to give meaning to the path integral (\ref{AKSZ observable via integral over aux}) in the most general situation here, but rather will discuss several special cases.

\subsection{Case $\NN=\mathrm{point}$}
\label{sec: obs with N=pt}
In the case when $\NN$ is a point the target Hamiltonian $Q$-bundle is described in example (\ref{Ham Q-bun ex point}) in section \ref{sec: examples of Ham Q-bun}: $\A=0$, $\omega'=0$, $\Theta'=\theta\in C^\infty(\MM)$ --- a $Q$-cocycle of degree $p$ for the target $\MM$. The associated pre-observable (\ref{AKSZ pre-observable}) for a $p$-dimensional closed manifold $\gamma$ and a map $\ii:\gamma\ra\Sigma$
is:
$$\FF_\gamma=\mathrm{point},\quad \A_\gamma=0,\quad \Omega_\gamma=0,\quad S_\gamma(X)=\int_\gamma \theta(\ii^*X)$$
The passage to the observable (\ref{AKSZ observable via integral over aux}) is trivial in this setting, since there are no auxiliary fields:
\be O_\gamma=e^{i \int_\gamma \theta(\ii^*X)} \label{N=pt obs}\ee
This observable obviously satisfies properties (\ref{obs expected i}--\ref{obs expected v}) above.

\subsection{Case of 1-dimensional observables.}
\label{sec: 1D obs}
Let $\ii:\gamma=S^1 \ra \Sigma$ be 
a circle mapped
into $\Sigma$ and suppose we have a Hamiltonian $Q$-bundle of degree $0$ over the AKSZ target $(\MM,Q,\omega=\delta\alpha,\Theta)$ with fiber data $(\NN,\A,\omega'=\delta\alpha',\Theta')$. We will assume for simplicity that $\NN$ is concentrated in degree $0$, i.e. $(\NN,\omega')$ is an ordinary (non-graded) symplectic manifold.

In local coordinates, the auxiliary superfield (\ref{aux superfield}) is:
$$Y^i(u,du)= Y^i_{(0)}(u)+Y^i_{(1)}(u)\,du $$
where $u$ is the coordinate on $\gamma= S^1$ and index $i$ corresponds to local coordinates on $\NN$. For the pre-observable (\ref{AKSZ pre-observable}) we have
\begin{multline}
\FF_\gamma=\Map(T[1]\gamma,\NN)\\
=\left\{(Y_{(0)},Y_{(1)}) \;|\; Y_{(0)}:\gamma\ra \NN,\; Y_{(1)}\in\Gamma(\gamma, T^*\gamma\otimes Y_{(0)}^*T\NN )[-1]\right\},\\
\Omega_\gamma=-\oint_\gamma \omega'_{ij}(Y_{(0)})\; \delta Y_{(0)}^i \wedge \delta Y_{(1)}^j+ \frac{1}{2} Y_{(1)}^k\;\dd_k\omega'_{ij}(Y_{(0)})\; \delta Y_{(0)}^i \wedge \delta Y_{(0)}^j,\\
S_\gamma=\oint_\gamma \alpha'_{i}(Y_{(0)}) \; dY_{(0)}^i + \Theta'(\ii^*X,Y)
\end{multline}

One natural choice of Lagrangian in (\ref{AKSZ observable via integral over aux}) is to set
\be \LL=\Map(\gamma,\NN)\subset \Map(T[1]\gamma,\NN) \label{L for 1d obs}\ee
I.e. the 1-form component $Y_{(1)}$ of the auxiliary superfield vanishes on $\LL$. With this choice (\ref{AKSZ observable via integral over aux}) becomes
\begin{multline}\label{1D obs via PI}
O_\gamma=\int_{\Map(\gamma,\NN)} \mathcal DY_{(0)} \;\; e^{i \oint_\gamma  \alpha'_{i}(Y_{(0)}) \; dY_{(0)}^i + \Theta'(\ii^*X,Y_{(0)})} \\
=\int_{\Map(\gamma,\NN)} \mathcal DY_{(0)} \;\; e^{i \oint_\gamma  \alpha'_{i}(Y_{(0)}) \; dY_{(0)}^i + \ii^*X_{(1)}^a\;\dd_a\Theta'(\ii^*X_{(0)},Y_{(0)})}
\end{multline}
Note that with this gauge fixing, one formally has a strict version of property (\ref{obs expected iv}), namely that $O_\gamma$ is invariant under isotopy of $\gamma$ (instead of only the class of $O_\gamma$ in $Q_\Sigma$-cohomology being invariant). This follows from invariance of $\LL$ under $\Diff(\gamma)$, together with $\Diff(\gamma)$-invariance of the path integral measure $\DD Y_{(0)}$.

The quantum mechanical path integral (\ref{1D obs via PI}) can be understood in the Hamiltonian formalism, using the fiber geometric quantization of the target Hamiltonian $Q$-bundle, as follows.
\begin{prop} \label{prop 5}
Assume that the symplectic manifold $(\NN,\omega')$ can be geometrically quantized to a complex vector space of states
$\HH$ and the Hamiltonian $\Theta'\in C^\infty(\MM\times\NN)$ can be quantized (viewing coordinates on $\MM$ as external parameters) to an operator-valued function on $\MM$, $\hat\Theta'\in C^\infty(\MM)\otimes \End(\HH)$ satisfying
\be Q\hat\Theta' +i\, (\hat\Theta')^2=0 \label{hat Theta' equation}\ee
(the quantum counterpart of the equation (\ref{fiber CME})).
Then define $O_\gamma$ as the trace of a path ordered exponential\footnote{To define the path ordered exponential, we need to choose a starting point $p\in\gamma$, but the dependence on $p$ is cancelled by taking the trace.}
\be O_\gamma= \tr_\HH\; \Pexp\left( i\oint_\gamma \hat\Theta'(\ii^*X)\right) \label{1D obs Pexp}\ee
Then
$$Q_\Sigma O_\gamma=0$$
i.e. $O_\gamma$ is an observable.
\end{prop}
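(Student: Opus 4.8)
The plan is to show that $Q_\Sigma$ acts on the $\End(\HH)$-valued connection $1$-form extracted from $\hat\Theta'(\ii^*X)$ as an infinitesimal gauge transformation, so that $O_\gamma$, being the trace of the associated holonomy, is $Q_\Sigma$-closed for exactly the reason that makes ordinary Wilson loops gauge invariant. Since the path integral (\ref{1D obs via PI}) is only formally defined, we take (\ref{1D obs Pexp}) as the definition of $O_\gamma$ and verify $Q_\Sigma O_\gamma=0$ directly.

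First I would decompose $\hat\Theta'(\ii^*X)$ by de Rham degree on $\gamma=S^1$. As $\gamma$ is one-dimensional and $|\hat\Theta'|=1$, this yields $\hat\Theta'(\ii^*X)=\epsilon+\mathsf{A}$, with $\epsilon:=\hat\Theta'((\ii^*X)_{(0)})$ of ghost number $1$ and $\mathsf{A}$ a genuine $\End(\HH)$-valued $1$-form on $\gamma$ of ghost number $0$, so that $O_\gamma=\tr_\HH\Pexp(i\oint_\gamma\mathsf{A})$. The heart of the argument is the computation of $Q_\Sigma\mathsf{A}$. Here I would use that $p=\ii^*\colon\FF_\Sigma\ra\Map(T[1]\gamma,\MM)$ is a $Q$-morphism (pull-back intertwines $d_\Sigma$ with $d_\gamma$, and $Q$ on the target is unchanged), so that by the chain rule $Q_\Sigma(\hat\Theta'(\ii^*X))=d_\gamma(\hat\Theta'(\ii^*X))+(Q\hat\Theta')(\ii^*X)$. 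Substituting (\ref{hat Theta' equation}) in the form $(Q\hat\Theta')(\ii^*X)=-i(\hat\Theta'(\ii^*X))^2=-i(\epsilon+\mathsf{A})^2$ and using $\mathsf{A}^2=0$ and $d_\gamma\mathsf{A}=0$ (a $2$-form and a top form on a curve), the de Rham degree $1$ component of this identity reads
\[
Q_\Sigma\mathsf{A}=d_\gamma\epsilon-i(\epsilon\mathsf{A}+\mathsf{A}\epsilon),
\]
i.e.\ $Q_\Sigma$ acts on the connection $i\mathsf{A}$ as the gauge transformation with parameter $-i\epsilon$; the degree $0$ component merely records $Q_\Sigma\epsilon=-i\epsilon^2$ and plays no role since $O_\gamma$ does not depend on $\epsilon$.

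I would then feed this into the variation formula for the trace of a holonomy: writing $H_u$ for the holonomy of $i\mathsf{A}$ around $\gamma$ based at $u\in\gamma$, graded cyclicity of $\tr_\HH$ and the basepoint independence noted in the proposition give $Q_\Sigma O_\gamma=\pm\, i\oint_\gamma\tr_\HH\!\big(H_u\,(Q_\Sigma\mathsf{A})(u)\big)$. Inserting $Q_\Sigma\mathsf{A}=d_\gamma\epsilon-i(\epsilon\mathsf{A}+\mathsf{A}\epsilon)$, integrating the $d_\gamma\epsilon$ term by parts (legitimate since $\gamma$ is closed), and using $\partial_u H_u=i[\mathsf{A}(u),H_u]$ (periodicity of $\mathsf{A}$) together with cyclicity of $\tr_\HH$, the two resulting contributions cancel, giving $Q_\Sigma O_\gamma=0$. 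For $\gamma$ a disjoint union of circles one argues componentwise; if one drops the simplifying assumption that $\NN$ sits in degree $0$, ordinary traces and commutators are replaced by their graded analogues.

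The main obstacle is bookkeeping rather than conceptual: keeping track of the Koszul signs consistently throughout (the odd vector field $Q_\Sigma$, the odd $1$-form $\mathsf{A}$, the ghost-number-$1$ quantity $\epsilon$, graded cyclicity of the trace), and making the holonomy-variation identity precise in this $\ZZ$-graded, operator-valued setting. One also uses, as a hypothesis of the proposition, that $(\NN,\omega')$ and $\Theta'$ admit the stated geometric quantization obeying (\ref{hat Theta' equation}) --- concrete instances appear in section \ref{sec: examples of observables} --- and, at a foundational level, one must be content with (\ref{1D obs Pexp}) as the definition of $O_\gamma$ rather than as a rigorous evaluation of the path integral (\ref{1D obs via PI}).
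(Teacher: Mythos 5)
Your proof is correct and follows essentially the same route as the paper: both rest on the identity $Q_\Sigma\hat\Theta'(\ii^*X)=d_\gamma\hat\Theta'(\ii^*X)-i\,(\hat\Theta'(\ii^*X))^2$, whose $1$-form component says that $Q_\Sigma$ acts on the connection as an infinitesimal gauge transformation with parameter the $0$-form component, and then on graded cyclicity of $\tr_\HH$. The only difference is presentational: the paper derives the resulting variation of the holonomy, $Q_\Sigma W_j=-i\left(\psi_{(0)}(1)W_j-W_j\psi_{(0)}(0)\right)$, by explicitly telescoping the discretized path-ordered product, whereas you invoke the holonomy-variation formula and integrate by parts --- which is the same computation.
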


\begin{proof} Let $j: [0,1]\ra \Sigma$ be a path in $\Sigma$ parameterized by $t\in [0,1]$. Let us denote $\psi=\hat\Theta'(j^*X) \in \Omega^\bt([0,1])\hat\otimes C^\infty(\FF_\Sigma)\otimes \End(\HH)$, also let $\psi_{(0)}(t)$ and $\psi_{(1)}(t,dt)$ be the 0- and 1-form components of $\psi$. Consider the path ordered exponential
\begin{multline} W_j = \Pexp \left(i\int_0^1 \psi_{(1)}\right) \\
=  \lim_{N\ra\infty} \overleftarrow{\prod_{0\leq k< N}}
\left(\id_\HH+i\, \iota_{\frac{1}{N}\dd_t}\psi_{(1)}(k/N,dt)\right)
\in C^\infty(\FF_\Sigma)\otimes \End(\HH)
\end{multline}
Then
{
\begin{multline} \label{Q_Sigma W_j}
Q_\Sigma W_j = -i\int_0^1  \Pexp \left(i\int_t^1 \psi_{(1)}\right) \cdot Q_\Sigma \psi (t,dt) \cdot \Pexp \left(i\int_0^t \psi_{(1)}\right)\\
=-i\int_0^1  \Pexp \left(i\int_t^1 \psi_{(1)}\right)
\cdot \left(dt\, \frac{\dd}{\dd t}\psi_{(0)}(t)-i \left[\psi_{(0)}(t),\psi_{(1)}(t,dt)\right]\right)
\cdot\Pexp \left(i\int_0^t \psi_{(1)}\right)\\
= -i \lim_{N\ra\infty} \sum_{l=0}^{N-1}  \overleftarrow{\prod_{l< k< N}}
\left(\id_\HH+i\, \iota_{\frac{1}{N}\dd_t}\psi_{(1)}\left(\frac{k}{N},dt\right)\right)\cdot \\
\cdot \left(\psi_{(0)}\left(\frac{l+1}{N}\right) \left(\id_\HH+i\, \iota_{\frac{1}{N}\dd_t}\psi_{(1)}\left(\frac{l}{N},dt\right)\right) -
\left(\id_\HH+i\, \iota_{\frac{1}{N}\dd_t}\psi_{(1)}\left(\frac{l}{N},dt\right)\right) \psi_{(0)}\left(\frac{l}{N}\right)  \right) \cdot \\
\cdot \overleftarrow{\prod_{0\leq k< l}}
\left(\id_\HH+i\, \iota_{\frac{1}{N}\dd_t}\psi_{(1)}\left(\frac{k}{N},dt\right)\right)  \\
=-i\left(\psi_{(0)}(1)\cdot W_j- W_j\cdot \psi_{(0)}(0)\right)
\end{multline}
}
where we used that due to (\ref{hat Theta' equation}), we have
$Q_\Sigma\psi=d\psi-\frac{i}{2}[\psi,\psi]$.
For $j=i: S^1\ra\Sigma$, we have $j(0)=j(1)$ and applying (\ref{Q_Sigma W_j}), we get
$$Q_\Sigma O_\gamma = \tr_\HH\; Q_\Sigma W_j= -i\, \tr_\HH\; [\hat\Theta'(X_{(0)}(j(0))), W_j] = 0 $$
\end{proof}
Note that the construction of proposition \ref{prop 5} does not require $\omega'$ to be exact, it is enough to require that $\omega'$ satisfies Bohr-Sommerfeld integrality condition.

\subsection{Torsion-like observables}
\label{sec: torsion observables}
Now let again $\ii:\gamma\ra\Sigma$ be an oriented closed manifold of arbitrary dimension mapped into $\Sigma$ and suppose that $\NN$ is a finite-dimensional graded vector space, $\omega'$ is a graded-antisymmetric pairing on $\NN$ of degree $\dim\Sigma-1$ and $\Theta'$ is quadratic in $\NN$ directions:
$$\Theta'(x,y)=\frac{1}{2}\,\omega'(y,\theta(x)y)$$
with $\theta\in C^\infty(\MM)\otimes \End(\NN)$ of degree $1$. Equation (\ref{fiber CME}) is then equivalent to
\be Q\theta-\theta^2=0  \label{torsion obs target CME}\ee
We also assume that
\be \mr{Str}_\NN\theta(x)=0 \label{Str theta=0}\ee
where $\mr{Str}_\NN$ stands for the super-trace over $\NN$.

One way to make sense of the expression (\ref{AKSZ observable via integral over aux}) is to choose a Riemannian metric on $\gamma$.
Then we have the Hodge decomposition for differential forms on $\gamma$ with values in $\NN$
\be \underbrace{\NN\otimes \Omega^\bt(\gamma)}_{\FF_\gamma}= \underbrace{\NN\otimes \Omega_\mr{harm}^\bt(\gamma)}_{\FF_\gamma^\z}\oplus \underbrace{\NN\otimes \Omega^\bt_\mr{exact}(\gamma)\oplus \overbrace{\NN\otimes \Omega^\bt_\mr{co-exact}(\gamma)}^{\tilde\LL}}_{\tilde\FF_\gamma} \label{Hodge decomp}\ee
which can be used to first take the BV integral over the complement of harmonic forms, as in proposition \ref{prop: pushing pre-obs to zero-modes}, an then take the BV integral over the zero-modes (harmonic forms):
\be O_\gamma= \int_{\LL_\z\subset \FF_\gamma^\z}\DD Y_\z\int_{\tilde \LL\subset \tilde\FF_\gamma}\DD \tilde Y \; e^{\frac{i}{2}\int_\gamma \omega'(Y,dY+\theta(\ii^*X) Y)} \label{torsion obs integral}\ee
where $Y=Y_z+\tilde Y$ is the splitting of the auxiliary superfield into the harmonic part and the part in the complement; $\FF_\gamma^\z$ comes with the degree $-1$ symplectic structure $\Omega_\gamma^\z$ 
induced from $\omega'$ and the Poincar\'e pairing on cohomology of $\gamma$;
$\LL_\z\subset \FF_\gamma^\z$ is a Lagrangian vector subspace w.r.t. $\Omega_\gamma^\z$.

The internal integral in (\ref{torsion obs integral}) is a path integral and can be made sense of using perturbation theory\footnote{We are working modulo constants here. We are also implicitly assuming convergence of the sum over $k$ in (\ref{torsion obs on 0-modes}).}:
\begin{multline} \label{torsion obs on 0-modes}
O_\gamma^\z(X,Y_\z)
=\exp \left(\frac{i}{2} \int_\gamma\omega'(Y_\z,\theta(\ii^*X)\; (\id+ G\; \theta(\ii^*X))^{-1} Y_\z)+\right.\\
\left.+\mr{Str}_\NN\sum_{k\geq 2}\frac{(-1)^k}{2k} \int_{\mr{Conf}_k(\gamma)}\ola{\prod_{1\leq j\leq k}} \pi_{j+1,j}^*\eta \cdot \pi_j^*\theta(\ii^*X)\right)
\end{multline}
The notations here are:
\begin{itemize}
\item $\mr{Conf}_k(\gamma)$ is the Fulton-Macpherson-Axelrod-Singer compactification of the configuration space of $k$ points on $\gamma$, cf. \cite{AS}.
\item $G=d^*(\Delta+\PP_\mr{harm})^{-1}:\Omega^\bt(\gamma)\ra \Omega^{\bt-1}(\gamma)$ is the Hodge-theoretic inverse of $d$; its integral kernel extends to a smooth $(\dim\gamma-1)$-form $\eta$ on $\mr{Conf}_2(\gamma)$ (the \textit{Hodge propagator}), cf. \cite{AS}. By $\PP_\mr{harm}$ we denote the orthogonal projection to harmonic forms, using the Hodge inner product.
\item $\pi_{j+1,j}: \Conf_k \ra \Conf_2 $ is the map associated to forgetting all points except points $j$ and $j+1$; also $\pi_j: \Conf_k\ra \Conf_1=\gamma$ is forgetting all points except the point $j$. By convention, $\pi_{k+1,k}=\pi_{1,k}$.
\item In the first term in the exponential in (\ref{torsion obs on 0-modes}), $\theta(\ii^*X)$ is understood as a multiplication operator on $\Omega^\bt(\gamma)$ and an endomorphism of $\NN$, depending on ambient fields. Expression $(\id+ G\; \theta(\ii^*X))^{-1}$ is understood as
    $\sum_{k=0}^\infty (-1)^k (G\; \theta(\ii^*X))^k$.
\end{itemize}

The BV integral over zero-modes in (\ref{torsion obs integral}) then is: 
%
\begin{multline}\label{torsion obs}
O_\gamma=\int_{\LL_\z\subset \FF_\gamma^\z} e^{\frac{i}{2}\int_\gamma\omega'(Y_\z,\theta(\ii^*X)\;(\id+G\;\theta(\ii^*X))^{-1}Y_\z)}\sqrt{\mu_\z}|_{\LL_\z}\cdot \\
\cdot\exp\left(\mr{Str}_\NN\sum_{k\geq 2}\frac{(-1)^k}{2k} \int_{\mr{Conf}_k(\gamma)}\ola{\prod_{1\leq j\leq k}} \pi_{j+1,j}^*\eta \cdot \pi_j^*\theta(\ii^*X)\right)
\end{multline}
where $\mu_\z$ is some fixed translation-invariant volume element on $\FF_\z$.

\begin{rem}\label{rem: well-definedness of int over 0-modes} The requirement that the integral over zero-modes in (\ref{torsion obs}) should converge imposes the following restriction on $\LL_\z$. If $\{Y_{\LL_\z}^I\}$ are coordinates on $\LL_\z$ and if we write
$$\int_\gamma\omega'(Y_\z,\theta(\ii^*X)\;(\id+G\;\theta(\ii^*X))^{-1}Y_\z)=Y_{\LL_\z}^I H_{IJ}(\ii^*X) Y_{\LL_\z}^J$$
then we require the even-even block (i.e. the block pairing coordinates $Y_{\LL_\z}^I$ of even degree with coordinates $Y_{\LL_\z}^J$ of even degree) of the matrix $(H_{IJ})$ to be invertible.
One universal way to satisfy this condition is to choose $\LL_\z\subset \FF_\gamma^\z$ by setting all even coordinates on $\FF_\gamma^\z$ to zero. With this choice, the integral over $\LL_\z$ in (\ref{torsion obs}) is (up to normalization) the Pfaffian of $(H_{IJ})$.
\end{rem}

\begin{prop} \label{prop 6}
\begin{enumerate}[(i)]
\item \label{prop 6 item 1} Expression (\ref{torsion obs}) is an observable, i.e.
$$Q_\Sigma O_\gamma=0$$
\item \label{prop 6 item 2} Dependence of $O_\gamma$ on deformations of Riemannian metric on $\gamma$ and on deformations of $\LL_\z$ is $Q_\Sigma$-exact.
\end{enumerate}
\end{prop}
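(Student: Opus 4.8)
The plan is to deduce both parts from the finite-dimensional propositions \ref{prop: pre-obs to obs} and \ref{prop: pushing pre-obs to zero-modes}, re-justified in the perturbative framework. Write $O_\gamma=\int_{\LL_\z\subset\FF_\gamma^\z} e^{iS_\gamma^\z}\sqrt{\mu_\z}|_{\LL_\z}$, where $e^{iS_\gamma^\z}$ is the fluctuation integral $O_\gamma^\z$ of (\ref{torsion obs on 0-modes}): this outer integral is an honest finite-dimensional BV push-forward over the zero-mode Lagrangian $\LL_\z$, to which proposition \ref{prop: pre-obs to obs} literally applies. So, \emph{granted} that $(\FF_\gamma^\z,\Omega_\gamma^\z,\mu_\z,S_\gamma^\z)$ is a genuine semi-quantum pre-observable for the ambient AKSZ theory, i.e.
\[ (Q_\Sigma-i\Delta_\z)\,e^{iS_\gamma^\z}=0, \]
part (\ref{prop 6 item 1}) follows from proposition \ref{prop: pre-obs to obs}(\ref{prop 1 item 1}) (differentiate under the integral, then use BV-Stokes over $\LL_\z$), and the $\LL_\z$-independence asserted in part (\ref{prop 6 item 2}) follows from proposition \ref{prop: pre-obs to obs}(\ref{prop 1 item 2}). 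Thus the whole content is to establish that the perturbatively defined effective action $S_\gamma^\z$ of (\ref{torsion obs on 0-modes}) solves the semi-quantum master equation $Q_\Sigma S_\gamma^\z+\frac12\{S_\gamma^\z,S_\gamma^\z\}_{\Omega_\gamma^\z}-i\Delta_\z S_\gamma^\z=0$ --- which is exactly proposition \ref{prop: pushing pre-obs to zero-modes} in the (here infinite-dimensional) fluctuation sector.

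To verify this master equation I would apply $Q_\Sigma$ directly to the Feynman expansion (\ref{torsion obs on 0-modes}). On functions of the pulled-back field $\ii^*X$ the operator $Q_\Sigma$ acts through $d_\gamma+(Q\,\cdot\,)$, and by (\ref{torsion obs target CME}) one has $Q\theta=\theta^2$, so each insertion $\pi_j^*\theta(\ii^*X)$ is differentiated into a $d_\gamma$-term plus a $\theta^2$-term. The $d_\gamma$-terms are integrated by parts over $\Conf_k(\gamma)$; transferring $d$ onto the Hodge propagators and using $dG+Gd=\id-\PP_\mr{harm}$ (so that on the open configuration space $d\eta$ is the pullback of the harmonic Poincar\'e kernel) produces three kinds of terms: (a) harmonic-kernel contractions which, resummed against the tree-level term $\frac{i}{2}\int_\gamma\omega'(Y_\z,\theta(\id+G\theta)^{-1}Y_\z)$ and against the $\Conf_2$ wheel, telescope into the bracket term $\frac12\{S_\gamma^\z,S_\gamma^\z\}_{\Omega_\gamma^\z}$; (b) boundary contributions from the principal faces of $\Conf_k(\gamma)$, where two adjacent points collide, which must cancel the $\theta^2$-terms coming from $Q\theta$; (c) the one-loop piece $-i\Delta_\z S_\gamma^\z$ together with tadpole-type pieces, which are dealt with by the supertrace condition (\ref{Str theta=0}) --- the same condition that makes the classical pre-observable of proposition \ref{prop 3} semi-quantum ($\Delta_\gamma S_\gamma=0$ in the regularized sense).

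Part (\ref{prop 6 item 2}): $\LL_\z$-independence was obtained above, and metric independence is proved the same way. A variation $\delta g$ of the metric on $\gamma$ varies the propagator, $\eta\mapsto\eta+\delta\eta$, and the harmonic projector; from $d(\delta G)+(\delta G)d=-\delta\PP_\mr{harm}$ one gets that $\delta\eta$ is $d$-exact up to a harmonic correction. Substituting into (\ref{torsion obs on 0-modes}) and integrating by parts on the configuration spaces as before, the bulk terms assemble into $Q_\Sigma$ of something, while the harmonic corrections are absorbed by the induced deformation of $\Omega_\gamma^\z$ and $\mu_\z$ (up to a Lagrangian homotopy of $\LL_\z$, harmless by proposition \ref{prop: pre-obs to obs}(\ref{prop 1 item 2})); hence $\delta O_\gamma=Q_\Sigma(\cdots)$.

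The main obstacle is the boundary analysis of the configuration-space integrals hidden in steps (b)--(c). One must check that the principal faces (two adjacent points colliding) contribute precisely the $\theta^2=Q\theta$ terms needed --- a short-distance normalization statement for the Hodge propagator $\eta$, amounting to the fact that it restricts on the unit normal sphere bundle of the diagonal in $\Conf_2(\gamma)$ to the standard fibrewise volume generator --- and, crucially, that every \emph{hidden} face (three or more points colliding, or non-adjacent collisions) contributes nothing. The vanishing of hidden-face contributions is exactly the delicate point flagged in the introduction; for the torsion-like observables it should follow from the rigid cyclic ``wheel'' structure of the diagrams --- a single $\mr{Str}_\NN$ of a string of propagators $\eta$ and degree-$1$ insertions $\theta(\ii^*X)$ --- via the standard vanishing lemmas for integrals of products of the Euclidean propagator over $\Conf_m(\RR^{\dim\gamma})$ modulo translations and scalings for $m\ge 3$, together with (\ref{Str theta=0}) to kill the would-be anomaly at $m=2$. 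Making these vanishings rigorous is the technical heart of the proof.
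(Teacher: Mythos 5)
Your proposal is correct and follows essentially the same route as the paper: reduce both parts to the semi-quantum master equation for the effective action $S_\gamma^\z$ on zero-modes (so that proposition \ref{prop: pre-obs to obs} handles the finite-dimensional outer integral), and verify that equation by applying $Q_\Sigma$ to the wheel expansion, using $Q\theta=\theta^2$, Stokes on $\Conf_k(\gamma)$ with $d\eta$ producing the harmonic kernel, principal faces cancelling the $\theta^2$ insertions, and hidden faces vanishing by the standard valence-2/isolated-vertex lemmas for the cyclic wheel (the paper cites \cite{BT} for exactly this). The only presentational difference is in part (\ref{prop 6 item 2}): rather than absorbing harmonic corrections into deformations of $\Omega_\gamma^\z$ and $\mu_\z$, the paper keeps the zero-mode data fixed by realizing $\FF_\gamma^\z$ as $\NN\otimes H^\bt(\gamma)$ with a metric-dependent embedding $\iii_t$ into forms, packages $(G_t,\eta_t,\iii_t)$ into $dt$-extended objects, and reruns the master-equation computation to exhibit the $t$-family of pre-observables as equivalent in the sense of (\ref{equivalence of sq pre-obs, infinitesimal}), then invokes item (\ref{prop 1 item 3}) of proposition \ref{prop: pre-obs to obs} --- a cleaner bookkeeping of the same idea you describe.
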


\begin{proof}
Define $S_\gamma^\z\in C^\infty(\FF_\Sigma\times \FF_\gamma^\z)$ by (\ref{torsion obs on 0-modes}) and $O^\z_\gamma=e^{iS^\z_\gamma}$:
\begin{multline}\label{S_gamma^z}
S_\gamma^\z=\frac{1}{2} \int_\gamma\omega'(Y_\z,\theta\; (\id+ G\; \theta)^{-1} Y_\z)-\\
-i\;\mr{Str}_\NN\sum_{k\geq 2}\frac{(-1)^k}{2k} \int_{\mr{Conf}_k(\gamma)}\ola{\prod_{1\leq j\leq k}} \pi_{j+1,j}^*\eta \cdot \pi_j^*\theta
\end{multline}
where $\theta$ actually means $\theta(\ii^*X)$.

First let us check that $S_\gamma^\z$
is a semi-quantum pre-observable, i.e. satisfies the equation (\ref{pre-observable QME}):
\be Q_\Sigma S_\gamma^\z+ \frac{1}{2}\{S_\gamma^\z,S_\gamma^\z\}_{\Omega_\gamma^\z}-i\Delta_\z S_\gamma^\z=0 \label{S_gamma^z QME}\ee
Indeed, we have
\begin{equation} \label{(S_gamma^z,S_gamma^z)}
\frac{1}{2}\{S_\gamma^\z,S_\gamma^\z\}_{\Omega_\gamma^\z}=\frac{1}{2}\int_\gamma\omega'(Y_\z, (\id+ \theta\; G)^{-1}\theta\;\PP_\mr{harm}\;\theta\; (\id+ G\; \theta)^{-1} Y_\z)
\end{equation}
\begin{equation}\label{Delta S_gamma^z}
-i\Delta_\z S_\gamma^\z=-\frac{i}{2}\; \mr{Str}_{\FF_\gamma^\z}\left( \PP_\mr{harm}\;\theta\; (\id+ G\; \theta)^{-1}\right)
\end{equation}
\begin{multline} \label{Q_Sigma S_gamma^z}
Q_\Sigma S_\gamma^\z=-\frac{1}{2}\int_\gamma \omega'(Y_\z,  (\id+\theta\, G)^{-1}\underbrace{(d\theta+\theta^2)}_{Q_\Sigma\theta(\ii^*X)}\,(\id+ G\,\theta)^{-1}Y_\z)-\\
+\frac{i}{2} \; \mr{Str}_\NN \sum_{k\geq 2}(-1)^{k} \int_{\mr{Conf}_k(\gamma)} \left(\ola{\prod_{2\leq j\leq k}} \pi_{j+1,j}^*\eta\cdot \pi_j^* \theta \right)\cdot \pi_{2,1}^*\eta\cdot\pi_1^* (d\theta+\theta^2)
\end{multline}
Using Stokes' theorem, (\ref{Q_Sigma S_gamma^z}) can be written as
\begin{multline} \label{Q_Sigma S_gamma^z continued}
Q_\Sigma S_\gamma^\z=\frac{1}{2}\int_\gamma \omega'(Y_\z, (\id+\theta\, G)^{-1}\theta\,([d,G]-\id)\,\theta\, (\id+G\,\theta)^{-1}  Y_\z)+\\
+\frac{i}{2}\;\mr{Str}_\NN\sum_{k\geq 2}(-1)^{k+\dim\gamma}\int_{\mr{Conf}_k(\gamma)}  \left(\ola{\prod_{2\leq j\leq k}} \pi_{j+1,j}^*\eta\cdot \pi_j^* \theta\right)
\cdot \pi_{2,1}^* d\eta\cdot \pi_1^*\theta+\\
+\frac{i}{2}\;\mr{Str}_\NN\sum_{k\geq 2}(-1)^k\int_{\mr{Conf}_k(\gamma)} \left(\ola{\prod_{2\leq j\leq k}} \pi_{j+1,j}^*\eta\cdot \pi_j^* \theta\right)\cdot \pi_{2,1}^*\eta\cdot \pi_1^*\theta^2 -\\
-\frac{i}{2}\;\mr{Str}_\NN\sum_{k\geq 2}\frac{(-1)^{k+k\dim\gamma}}{k}\int_{\dd\mr{Conf}_k(\gamma)}  \ola{\prod_{1\leq j\leq k}} \pi_{j+1,j}^*\eta\cdot \pi_j^* \theta
\end{multline}
Two last terms here cancel due to the contribution of principal boundary strata of $\mr{Conf}_k(\gamma)$, where points $j$ and $j+1$ collapse (and we use the property of the propagator that the fiber integral of $\eta$ over fibers of $\dd \mr{Conf}_2(\gamma)\ra\gamma$ is the constant function $-(-1)^{\dim\gamma}\mathbf{1}\in C^\infty(\gamma)$). Hidden boundary strata (i.e. those corresponding to the collapse of $\geq 3$ points) do not contribute due to the fact that the collapsed subgraph of the ``wheel'' graph either has a vertex of valence $2$, and then the integral vanishes by (anti-) symmetry of $\eta$ on $\dd \mr{Conf}_2(\gamma)$ w.r.t. the antipodal involution, or the collapsed subgraph has an isolated vertex, and then the integral vanishes for degree reasons, cf. \cite{BT}.

Since $[d,G]=\id-\PP_\mr{harm}$ and since $d\eta=-(-1)^{\dim\gamma}\mathbf{P}_\mr{harm}$ where $\mathbf{P}_\mr{harm}\in \Omega^{\dim\gamma}(\gamma\times\gamma)$ is the integral kernel of $\PP_\mr{harm}$, collecting (\ref{(S_gamma^z,S_gamma^z)},\ref{Delta S_gamma^z},\ref{Q_Sigma S_gamma^z continued}) we get the semi-quantum master equation (\ref{S_gamma^z QME}).

Next, knowing that (\ref{S_gamma^z QME}) holds, we infer by proposition \ref{prop: pre-obs to obs} that $O_\gamma=\int_{\LL_\z\subset \FF_\gamma^\z}e^{i S_\gamma^\z}$ given by (\ref{torsion obs}) satisfies $Q_\Sigma O_\gamma=0$. Thus we proved (\ref{prop 6 item 1}).

To prove (\ref{prop 6 item 2}) we first note that deformations of $\LL_\z$ induce $Q_\Sigma$-exact shifts in $O_\gamma$ by proposition \ref{prop: pre-obs to obs}.

Next we check the dependence of $S_\gamma^\z$ on metric on $\gamma$.
Let us now be more careful with the space of zero-modes and define it as $\FF_\gamma^\z=\NN\otimes H^\bt(\gamma)$ -- the $\NN$-valued de Rham cohomology of $\gamma$; it comes with a metric-dependent embedding of cohomology into differential forms $\iii:H^\bt(\gamma)\hra  \Omega^\bt(\gamma)$ which realizes cohomology classes by harmonic forms. In view of this redefinition we should now write $\iii(Y_\z)$ instead of $Y_\z$ in (\ref{S_gamma^z}).

Now assume that we have a path in the space of Riemannian metrics on $\gamma$, $g_t\in \mr{Met}(\gamma)$ with $t\in [0,1]$ the parameter. Then we have the $t$-dependent Hodge-theoretic inverse $G_t$ for de Rham operator, the $t$-dependent propagator $\eta_t$ and the $t$-dependent embedding of cohomology into differential forms $\iii_t$ with exact time-derivatives:
$$\frac{d}{dt} G_t= [d, H_t], \quad \frac{\dd}{\dd t} \eta_t = d\zeta_t,\quad \frac{\dd}{\dd t}\iii_t=d j_t$$
for some linear operator $H_t:\Omega^\bt(\gamma)\ra \Omega^{\bt-2}(\gamma)$ with smooth integral kernel $\zeta_t\in \Omega^{\dim\gamma-2}(\gamma\times\gamma)$ and some $j_t: H^\bt(\gamma)\ra \Omega^{\bt-1}(\gamma)$; $H_t$, $\zeta_t$ and $j_t$ depend on $g_t$ and $\frac{\dd}{\dd t}g_t$. Then one can introduce the ``extended'' versions of $G$, $\eta$ and $\iii$:
\begin{eqnarray*}
\tilde G = G_t+ dt\; H_t &\in& \Omega^\bt([0,1])\hat\otimes \End(\Omega^\bt(\gamma))\\
\tilde \eta = \eta_t+ dt\; \zeta_t &\in& \Omega^{\dim\Sigma-1}([0,1]\times \mr{Conf}_2(\gamma))\\
\tilde {\iii} = \iii_t+dt\; j_t &\in& \Omega^\bt([0,1])\hat\otimes \mr{Hom}(H^\bt(\gamma),\Omega^\bt(\gamma))
\end{eqnarray*}
They satisfy respectively
\begin{eqnarray*}
\left(dt\frac{\dd}{\dd t}+[d_\gamma,\bt]\right)\tilde G &=& \id-\PP_\mr{harm}\\
\left(dt\frac{\dd}{\dd t}+d_{\mr{Conf}_2(\gamma)}\right)\tilde\eta &=& -(-1)^{\dim\gamma}\mathbf{P}_\mr{harm}\\
\left(dt\frac{\dd}{\dd t}+d_\gamma\right)\tilde {\iii} &=& 0
\end{eqnarray*}
Next, we define the extended version of $S_\gamma^z$,  $\tilde S_\gamma^\z\in \Omega^\bt([0,1])\hat\otimes C^\infty(\FF_\Sigma\times \FF_\gamma^\z)$ by substituting $\tilde G$ instead of $G$, $\tilde\eta$ instead of $\eta$ and $\tilde {\iii}(Y_\z)$ instead of $Y_\z$ into (\ref{S_gamma^z}). Repeating our proof of (\ref{S_gamma^z QME}) in the extended case we find that
$$\left(dt\frac{\dd}{\dd t}+Q_\Sigma\right)\tilde S_\gamma^\z+\frac{1}{2}\{\tilde S_\gamma^\z,\tilde S_\gamma^\z\}_{\Omega_\gamma^\z}-i\Delta_\z \tilde S_\gamma^\z=0$$
which, when we decompose $\tilde S_\gamma^z$ into the 0-form and 1-form parts w.r.t. $t$, $\tilde S_\gamma^\z=S_\gamma^\z(t)+dt\, R_\gamma^\z(t)$,  is equivalent to semi-quantum master equation (\ref{S_gamma^z QME}) for $S_\gamma^\z(t)$ plus the equation
\be \frac{\dd}{\dd t} S_\gamma^\z(t)= Q_\Sigma R_\gamma^\z(t)+\{S_\gamma^\z(t),R_\gamma^\z(t)\}_{\Omega_\gamma^\z}-i \Delta_\z R_\gamma^\z(t) \ee
Therefore (cf. (\ref{equivalence of sq pre-obs, infinitesimal})) semi-quantum pre-observables $S_\gamma^\z(t)$ are equivalent for all values of the parameter $t$ and so by item (\ref{prop 1 item 3}) of proposition \ref{prop: pre-obs to obs}, dependence of the induced observable $O_\gamma(t)=\int_{\LL_\z\subset \FF_\gamma^\z} e^{i S_\gamma^\z(t)}$ on $t$ is $Q_\Sigma$-exact. This finishes the proof of (\ref{prop 6 item 2}).
\end{proof}


\section{Examples of observables}\label{sec: examples of observables}
Here we will illustrate the construction of AKSZ pre-observables/observables of sections \ref{sec: AKSZ pre-observable}, \ref{sec: from pre-obs to obs} by several explicit examples.

\subsection{Observables with $\NN$ a point.}
First we give some examples of the situation described in section \ref{sec: obs with N=pt}.
\begin{itemize}
\item For abelian Chern-Simons theory (\ref{abelian CS}) we may take $\theta=c \psi \in C^\infty(\RR[1])$ with $c\in\RR$ any constant. This yields by (\ref{N=pt obs}) a 1-dimensional observable for $\ii:\gamma=S^1\hra \Sigma$
    $$O_\gamma=e^{i c \oint_\gamma \ii^*A_{(1)}}$$
    --- the abelian Wilson loop; parameter $c$ can be viewed as the weight of a one-dimensional representation of the gauge group $\RR$. Note that this is also an observable for the abelian $BF$ theory (\ref{abelian BF}).
\item For $D$-dimensional abelian $BF$ theory, take $\theta=c\xi$ with $c\in \RR$ a parameter. The associated observable for a closed oriented submanifold $\ii:\gamma\hra\Sigma$ of codimension 2 is
    \be O_\gamma=e^{i c \int_\gamma \ii^*B_{(D-2)}} \label{Wilson B surface in abBF}\ee
\item For 2-dimensional $BF$ theory assume that $\g$ is quadratic so that we can identify $\g^*$ with $\g$ and let $\rho:\g\ra\End(R)$ be a representation. Set $\theta=-i\log\tr_{R} f(\rho(\xi))$ for $f\in C^\infty(\RR)$ a function.
    The corresponding 0-dimensional observable for $\gamma\in \Sigma$ a point is
    \be O_\gamma=\tr_R f(\rho(B_{(0)}|_\gamma)) \label{obs ex: B-obs for 2d BF}\ee
\item For 2-dimensional abelian $BF$ theory take $\theta=f(\xi)\psi$ for $f\in C^\infty(\RR)$. The corresponding 1-dimensional observable for $\ii:\gamma=S^1\hra \Sigma$ is
    \be O_\gamma=e^{i\oint_\gamma \ii^* (A \cdot f(B))} \label{obs ex: 1d obs for 2d ab BF}\ee
    Note that for $f$ non-constant, $O_\gamma$ depends non-trivially on components of the superfields of ghost number $\neq 0$, namely on $A_{(0)}$ and  $B_{(1)}$:
    $$ O_\gamma=e^{i \oint_\gamma \ii^*(A_{(1)}\cdot f(B_{(0)})+ A_{(0)}\cdot B_{(1)}\cdot f'(B_{(0)})) } $$
\item For Poisson sigma model and $\theta=f\in C^\infty(M)$ a Casimir of the Poisson structure (i.e. $[\pi,f]=0$, with $[,]$ the Schouten-Nijenhuis bracket on polyvector fields on $M$), we get a 0-dimensional observable for $\gamma$ a point in $\Sigma$:
    \be O_\gamma= e^{i\, f(X_{(0)}|_\gamma)} \label{obs ex: PSM 0d obs}\ee
\item For Poisson sigma model and $v\in \Vect(M)$ a Poisson vector field (i.e. $[\pi,v]=0$), set $\theta=\langle p, v(x)\rangle$. The associated 1-dimensional observable is:
    \be O_\gamma=e^{i\oint_\gamma \ii^*\langle \eta, v(X)\rangle} \label{obs ex: PSM 1d obs}\ee
    Note that if $\tilde v=v+[\pi,f]$ -- another Poisson vector field giving the same class in Poisson cohomology of $M$ as $v$, then the observable $\tilde O_\gamma$ constructed from $\tilde v$ by (\ref{obs ex: PSM 1d obs}) is equivalent to $O_\gamma$:
    $\langle p,\tilde v(x) \rangle - \langle p,v(x)\rangle= Q f(x)$ on the target implies $\oint_\gamma \ii^*\langle\eta ,\tilde v (X)\rangle - \oint_\gamma \ii^*\langle\eta ,v (X)\rangle=- Q_\Sigma \oint_\gamma \ii^*f(X)$ on the mapping space $\FF_\Sigma$. This implies in turn
    $$\tilde O_\gamma= O_\gamma+ Q_\Sigma \left( -i \int_0^1 dt\; e^{i\oint_\gamma \ii^*\langle \eta, (1-t) v(X)+ t \tilde v(X) \rangle} \oint_\gamma \ii^*f(X)\right)$$
    Thus (\ref{obs ex: PSM 1d obs}) defines a map from Poisson cohomology of $M$ in degree one to $Q_\Sigma$-cohomology in degree zero.
\end{itemize}

Note that the observable (\ref{obs ex: B-obs for 2d BF}) is a special case of (\ref{obs ex: PSM 0d obs}) when we take $M$ to be $\g^*$ with Kirillov-Kostant Poisson structure. Also, (\ref{obs ex: 1d obs for 2d ab BF}) is a special case of (\ref{obs ex: PSM 1d obs}) for $M=\RR$ with $\pi=0$.

\subsection{Wilson loop observable in Chern-Simons theory}
Here we will combine the discussion of section \ref{sec: 1D obs} with the example (\ref{Ham Q-bun ex Ham action}) of section \ref{sec: examples of Ham Q-bun}.

Let $\g$ be a quadratic Lie algebra and let $(M,\omega_M)$ be a symplectic manifold with a Hamiltonian action of $\g$ with equivariant moment map $\mu: M\ra \g^*$.
Further, we assume that $\omega_M$ satisfies Bohr-Sommerfeld integrality condition and is the curvature of a $U(1)$-connection $\nabla$ on a line bundle $L$ over $M$.

Then we have a degree $0$ Hamiltonian $Q$-bundle structure on $\g[1]\times M\ra \g[1]$, with base structure given by (\ref{CS target}) and fiber structure (\ref{fiber data from Ham action of g}), with the correction that now we relaxed the exactness condition on $\omega_M$. Applying the construction of section \ref{sec: AKSZ pre-observable} to this target data we get the following $1$-dimensional pre-observable for Chern-Simons theory:
\begin{multline} \label{Wilson loop pre-obs}
\FF_\gamma= \Map(T[1]\gamma,M)\\
=\{(Y_{(0)},Y_{(1)})\;|\; Y_{(0)}: \gamma\ra M,\; Y_{(1)}\in \Gamma(\gamma,T^*\gamma\otimes Y_{(0)}^*TM)[-1]\},\\
\Omega_\gamma=-\oint_\gamma \omega_M (Y)=
-\oint_\gamma \omega_{ij}(Y_{(0)}) \delta Y_{(0)}^i\wedge \delta Y_{(1)}^j+\frac{1}{2} Y_{(1)}^k \dd_k \omega_{ij}(Y_{(0)}) \delta Y_{(0)}^i\wedge \delta Y_{(0)}^j,\\
e^{i S_\gamma}= \mr{Hol}_{Y_{(0)}^*\nabla}(\gamma)\; e^{i\oint_\gamma \langle \ii^*A,\mu(Y)\rangle}=
 \mr{Hol}_{Y_{(0)}^*\nabla}(\gamma)\; e^{i\oint_\gamma \langle \ii^* A_{(1)},\mu(Y_{(0)})\rangle + \langle \ii^*A_{(0)}, Y_{(1)}^i\dd_i\mu(Y_{(0)})\rangle}
\end{multline}
where $\ii:\gamma=S^1\hra \Sigma$ is a circle embedded into 3-manifold $\Sigma$, $A=\sum_{k=0}^3 A_{(k)}$ is the superfield of the ambient Chern-Simons theory, $Y=Y_{(0)}+Y_{(1)}$ is the auxiliary superfield (\ref{CS superfield}), with components $Y_{(0)}$, $Y_{(1)}$ having internal degrees $0$ and $-1$ respectively; indices $i,j,k$ refer to local coordinates on $M$.

To construct an observable out of the pre-observable (\ref{Wilson loop pre-obs}), we impose the gauge fixing (\ref{L for 1d obs}) which consists in setting $Y_{(1)}=0$. The path integral (\ref{1D obs via PI}) reads
\be O_\gamma= \int_{\Map(\gamma,M)}\DD Y_{(0)}\; \mr{Hol}_{Y_{(0)}^*\nabla}(\gamma)\; e^{i\oint_\gamma \langle \ii^*A_{(1)},\mu(Y_{(0)})\rangle}  \label{Wilson loop PI}\ee

Assume that there exists a Lagrangian polarization $P$ of $(M,\omega_M)$ such that the geometric quantization of $(M,\omega,L,\nabla,P)$ yields a space of states $\HH$ and components $\mu_a\in C^\infty(M)$ of the moment map are quantized into operators $\hat\mu_a\in \End(\HH)$ such that the map $\hat\mu: \g\ra \End(\HH)$ sending generator $T_a\in\g$ into $\hat\mu_a$ is a Lie algebra homomorphism (i.e. $\hat\mu$ is a representation of $\g$ on $\HH$). Then
$$\hat\Theta'=-i\langle \psi, \hat\mu\rangle \in C^\infty(\g[1])\otimes \End(\HH)$$
(where we view $\hat\mu$ as an element of $\g^*\otimes\End(\HH)$) is the quantization of $\Theta'=\langle \psi,\mu(x) \rangle$ satisfying (\ref{hat Theta' equation}). Then the construction of proposition \ref{prop 5} produces out of the pre-observable (\ref{Wilson loop pre-obs}) the usual Wilson loop of the connection component $A_{(1)}$ of the ambient superfield in representation $\hat\mu$ along $\gamma$:
\be O_\gamma=\tr_{\HH} \Pexp \oint_\gamma \hat\mu(\ii^*A_{(1)}) \label{Wilson loop Pexp}\ee
which is indeed a gauge invariant observable (i.e. $Q_\Sigma O_\gamma=0$) for Chern-Simons theory on $\Sigma$.

Expression (\ref{Wilson loop Pexp}) can be viewed as a regularization of
the path integral (\ref{Wilson loop PI}).

\textit{Coadjoint orbit case.}
In case when $(M,\omega_M)$ is an integral coadjoint orbit of a compact Lie group $G$ such that $\g=\mr{Lie}(G)$, with $\omega_M$ the Kirillov symplectic structure and $\mu:M\hra \g^*$ the embedding of $M$ into $\g^*$ (which corresponds to the coadjoint Hamiltonian action of $G$ on $M$), formula (\ref{Wilson loop PI}) becomes the Alekseev-Faddeev-Shatashvili path integral formula for the Wilson loop \cite{AFS}. The respective $\hat\mu$ is the irreducible representation of $G$ associated by Kirillov's orbit method to the coadjoint orbit $M$, and (\ref{Wilson loop Pexp}) is just the Wilson loop in this irreducible representation.

\subsection{``Wilson loops'' in Poisson sigma model} \label{sec: Wilson loop in PSM}
Let $(M,\pi)$ be a Poisson manifold, $(N,\omega_N=\delta \alpha_N)$ an exact symplectic manifold and $F: N\ra \Vect(M)$ a map\footnote{Alternatively, one can view $F$ as a vertical vector field on the trivial fiber bundle $N\times M\ra N$.} with the property
\be \frac{1}{2}\{F,F\}_{\omega_N}+ [\pi,F]=0 \label{PSM Wilson loop eq on F}\ee
where $[,]$ is the Schouten-Nijenhuis bracket on polyvector fields on $M$; expression $\frac{1}{2}\{F,F\}$ is a map from $N$ to bivectors on $M$, given in local coordinates as $\frac{1}{2}(\omega^{-1}(y))^{ab}\,\frac{\dd}{\dd y^a} F^i(x,y)\, \frac{\dd}{\dd y^b} F^j(x,y)\; \frac{\dd}{\dd x^i} \wedge \frac{\dd}{\dd x^j}$ where $\{x^i\}$ are local coordinates on $M$ and $\{y^a\}$ are local coordinates on $N$. Then we have a degree $0$ Hamiltonian $Q$-bundle structure on $T^*[1]M\times N \ra T^*[1]M$ with base structure (\ref{PSM target data}) and fiber structure
\begin{multline}
\NN=N,\quad \A=\langle p, \{F,\bt\}_{\omega_N}  \rangle,  \\ \omega'=\omega_N,\quad\alpha'=\alpha_N,\quad  \Theta'= \langle p, F \rangle \in C^\infty(T^*[1]M\times N)
\end{multline}

The corresponding 1-dimensional pre-observable for Poisson sigma model associated to $(M,\pi)$ is given by $\FF_\gamma$, $\Omega_\gamma$ as in (\ref{Wilson loop pre-obs}), changing $M$ to $N$, and the auxiliary action is
$$S_\gamma= \oint_\gamma  \alpha_a(Y)\, dY^a+\langle \ii^*\eta, F(\ii^*X,Y) \rangle$$

The corresponding observable is formally given (in the gauge (\ref{L for 1d obs})) by the path integral (\ref{1D obs via PI}):
\begin{equation}\label{PSM Wilson loop PI}
O_\gamma=\int_{\Map(\gamma,N)} \DD Y_{(0)}\; e^{i\oint_\gamma \alpha_a(Y_{(0)})\, dY_{(0)}^a+\langle \ii^*\eta, F(\ii^*X,Y_{(0)})\rangle}
\end{equation}

Assume that $(N,\omega_N)$ can be geometrically quantized to a space of states $\HH$ and $F$ is quantized to an operator-valued vector field $\hat{F}\in \End(\HH)\hat\otimes \Vect(M)$ satisfying
\be [\pi,\hat F]+ i \hat F\wedge \hat F=0 \label{PSM Wilson loop eq on F hat}\ee
Then (\ref{1D obs Pexp}) gives
\be O_\gamma= \tr_\HH \Pexp \left(i\oint_\gamma \langle \ii^*\eta ,\hat{F}(\ii^*X)\rangle \right) \label{PSM Wilson loop Pexp}\ee
which can be regarded as the evaluation of the path integral (\ref{PSM Wilson loop PI}) via Hamiltonian formalism. By proposition \ref{prop 5}, $O_\gamma$ is indeed an observable for the Poisson sigma model.

\begin{rem} \label{rem: PSM Wilson loop}
Note that in case $N=$ point we get back the observable (\ref{obs ex: PSM 1d obs}). Also, in case $M=\g^*$ with Kirillov-Kostant Poisson structure and requiring that $F$ takes values in constant vector fields on $F$, equation (\ref{PSM Wilson loop eq on F}) becomes the equation on equivariant moment map and the corresponding observable (\ref{PSM Wilson loop Pexp}) is the usual Wilson loop (\ref{Wilson loop Pexp}) for 2-dimensional $BF$ theory.
\end{rem}

\subsection{Torsion observables in Chern-Simons theory}
Fix $n'\in \{-1,0,1,2\}$ and $m\in \ZZ$ (only the parity of $m$ will matter for our discussion).
Let again $\g$ be a quadratic Lie algebra and let $\rho: \g\ra \End(R)$ be a representation on a vector space $R$ with values in traceless matrices (cf. (\ref{Str theta=0})). Then we have a degree $n'$ Hamiltonian $Q$-bundle structure on
$\g[1]\oplus R[m]\oplus R^*[n'-m]\ra \g[1]$ with base structure (\ref{CS target}) and fiber structure
\begin{multline}
\NN=R[m]\oplus R^*[n'-m],\quad
\A=\left\langle \rho(\psi)q,\frac{\dd}{\dd q}\right\rangle+ \left\langle \rho^*(\psi)p,\frac{\dd}{\dd p} \right\rangle,\\
\omega'=\langle \delta p,\delta q \rangle,\quad \alpha'=\langle p,\delta q\rangle,\quad  \Theta'=\langle p,\rho(\psi)q \rangle
\end{multline}
where $\rho^*: \g\ra \End(R^*)$ is the representation dual to $\rho$; $q$ is the $R$-valued coordinate of degree $m$ on $R[m]$ and $p$ is the $R^*$-valued coordinate of degree $n'-m$ on $R^*[n'-m]$.

The corresponding pre-observable for Chern-Simons theory on a closed oriented 3-manifold $\Sigma$, for $\ii:\gamma\hra \Sigma$ an embedded closed oriented $(n'+1)$-manifold, reads:
\begin{multline}\label{CS torsion preobs}
\FF_\gamma = R[m]\otimes \Omega^\bt(\gamma) \oplus R^*[n'-m]\otimes \Omega^\bt(\gamma),\quad
\Omega_\gamma = \int_\gamma \langle \delta \mb{p}, \delta\mb{q} \rangle,\\
S_\gamma = \int_\gamma \langle \mb{p} , d\mb{q}\rangle +\langle \mb{p}, \rho(\ii^*A)\mb{q} \rangle
\end{multline}
where $\mb{q}=\sum_{k=0}^{n'+1}\mb{q}_{(k)}$ and $\mb{p}=\sum_{k=0}^{n'+1} \mb{p}_{(k)}$ are the auxiliary superfields corresponding to $q$ and $p$; component $\mb{q}_{(k)}$ is a  $R$-valued $k$-form on $\gamma$ with internal degree $m-k$ and $\mb{p}_{(k)}$ is a $R^*$-valued $k$-form on $\gamma$ of internal degree $n'-m-k$.

Pre-observable (\ref{CS torsion preobs}) can be pushed forward to zero-modes (in the sense of proposition \ref{prop: pushing pre-obs to zero-modes}) as in section \ref{sec: torsion observables}:
\begin{multline} \label{CS torsion zero-modes}
\FF_\gamma^\z= R[m]\otimes H^\bt(\gamma) \oplus R^*[n'-m]\otimes H^\bt(\gamma), \quad
\Omega_\gamma^\z = \int_\gamma \langle \delta \mb{p}_\z, \delta \mb{q}_\z \rangle,\\
S_\gamma^\z = \int_\gamma \langle \mb{p}_\z, \rho(\ii^*A) (\id+G\; \rho(\ii^*A))^{-1} \mb{q}_\z \rangle - i \log \mr{tor}(\gamma,\ii^*A,\rho)
\end{multline}
where we introduced the notation
\be \mr{tor}(\gamma,\ii^*A,\rho)=\exp \tr_R \sum_{k\geq 2}\frac{(-1)^k}{k} \int_{\mr{Conf}_k(\gamma)} \ola{\prod_{1\leq j\leq k}} \pi_{j+1,j}^*\eta\cdot \pi_j^* \rho(\ii^*A)  \label{tor}\ee
Notations $G$, $\eta$, $\pi$ are the same as in section \ref{sec: torsion observables}. We use harmonic representatives for zero-modes $\mb{q}_\z$, $\mb{p}_\z$ in $\FF_\gamma$.

Taking the BV integral over zero-modes in (\ref{CS torsion zero-modes}) we obtain the observable for Chern-Simons theory on $\Sigma$:
\be O_\gamma=\int_{\LL_\z\subset \FF_\gamma^\z} \sqrt{\mu_\z}|_{\LL_\z}\; e^{i\int_\gamma \langle \mb{p}_\z, \rho(\ii^*A) (\id+G\; \rho(\ii^*A))^{-1} \mb{q}_\z \rangle}\cdot \mr{tor}(\gamma,\ii^*A,\rho) \label{CS torsion obs}\ee
with $\LL_\z\subset \FF_\gamma^\z$ a Lagrangian subspace and $\mu_\z$ a translation invariant volume element on $\FF_\gamma^\z$. Proposition \ref{prop 6} implies that $O_\gamma$ is $Q_\Sigma$-closed and that deformations of gauge-fixing induce $Q_\Sigma$-exact shifts in $O_\gamma$.

In case $n'=0$, $m=1$, with $\gamma= S^1$, (\ref{CS torsion obs}) can be evaluated explicitly\footnote{One trick to simplify the calculation is to use gauge-invariance of $O_\gamma$ to pick the constant representative of the gauge equivalence class of connection $\ii^* A$.}:
\begin{multline} \label{CS torsion obs 1D}
O_\gamma= \int_{R[1]\oplus R^*[-1]} d\mb{q}_{(0)}^\z\; d\mb{p}_{(0)}^\z\; e^{i\oint_\gamma\langle \mb{p}_{(0)}^\z, \rho(\overline A) \mb{q}_{(0)}^\z  \rangle} \cdot {\det}_R\left(\frac{\sinh\rho(\overline A/2)}{\rho(\overline A/2)}\right)\\
=\mr{det}_R \left(\rho(W)^{1/2}-\rho(W)^{-1/2}\right)=\mr{det}_R (\rho(W)-\id_R)
\end{multline}
where $W=\Pexp \oint_\gamma \ii^*A_{(1)}$ is the holonomy of the ambient connection around $\gamma$ (since this requires the choice of a starting point on $\gamma$, one can think of $W$ as a group element defined modulo conjugation). We also denote $\overline A= \log W$.

In (\ref{CS torsion obs 1D}) we used the gauge-fixing Lagrangian (\ref{L for 1d obs}). In particular, the gauge-fixing for the integral over zero-modes is
$$\underbrace{(R[1]\oplus R^*[-1])\otimes H^0(\gamma)}_{\LL_\z}\subset \underbrace{(R[1]\oplus R^*[-1])\otimes H^\bt(\gamma)}_{\FF^\z_\gamma} $$

\begin{rem}\label{rem: CS 1D obs condition on rho} Observable (\ref{CS torsion obs 1D}) is only interesting (not identically zero) for representations $\rho$ such that $\det \rho\not\equiv 0 $, i.e. $\rho(x)$ does not have zero eigenvalue for $x$ in an open dense subset of $\g$. In particular, (\ref{CS torsion obs 1D}) is identically zero for $\rho=\ad$ the adjoint representation.
\end{rem}

\begin{rem} Observable (\ref{CS torsion obs 1D}) has the property that it only depends on $\ii^*A_{(1)}$ and not on the other components of the ambient field. Also, it is the only representative of its class in $Q_\Sigma$-cohomology which depends only on $\ii^*A$, for degree reasons (since $C^\infty(\FF_\Sigma|_\gamma)$ is non-negatively graded; we denote $\FF_\Sigma|_\gamma=\Map(T[1]\gamma,\MM)$ the space of pull-backs of ambient fields to $\gamma$).

In case $n'> 0$, 
$O_\gamma$ defined by (\ref{CS torsion obs}) generally depends on $\ii^*A_{(k)}$ for $0\leq k\leq n'+1$. Also, there is no canonical representative of the class of $O_\gamma$ in $Q_\Sigma$-cohomology and no canonical choice for the gauge-fixing Lagrangian $\LL\subset \FF_\gamma$.
\end{rem}

For $n'=-1$, convergence requirement for the integral over zero-modes (\ref{CS torsion obs}) forces (cf. remark \ref{rem: well-definedness of int over 0-modes}) $\LL_\z=(R[m]\oplus R^*[-1-m])^\mr{odd}$ which yields $O_\gamma=0$.

For $n'=2$, for $\gamma$ a connected closed oriented 3-manifold with non-zero first Betti number, we may take $m=1$ and choose the gauge-fixing for zero-modes as
\be \LL_\z=(R[1]\oplus R^*[1])\otimes (H^0(\gamma)\oplus \underbrace{\lambda^\perp}_{\subset H^1(\gamma)} \oplus \underbrace{\lambda}_{\subset H^2(\gamma)}) \label{CS torsion obs 3D L_z}\ee
where $\lambda$ is any line in $H^1(\gamma)$ and  $\lambda^\perp$ is its orthogonal complement in $H^2(\gamma)$ w.r.t. the Poincar\'e duality.

If the condition $\det\rho\not\equiv 0$ as in remark \ref{rem: CS 1D obs condition on rho} holds, $O_\gamma$ defined with zero-mode gauge-fixing (\ref{CS torsion obs 3D L_z}) is well-defined on an open subset of $\FF_\Sigma|_\gamma$ and is not identically zero.

In case $\dim H^1(\gamma)=1$, Lagrangian (\ref{CS torsion obs 3D L_z}) is the same as in remark \ref{rem: well-definedness of int over 0-modes}, i.e. $\LL_\z=(\FF_\gamma^\z)^\mr{odd}$. In case $\dim H^1(\gamma)> 1$, the Lagrangian $(\FF_\gamma^\z)^\mr{odd}$ yields $O_\gamma$ which is identically zero.

In case of $\gamma$ a rational homology sphere, $H^1(\gamma)=0$, construction (\ref{CS torsion obs 3D L_z}) is not applicable. However, one can take
$$\LL_\z=(R[1]\oplus R^*[1])\otimes H^0(\gamma)$$
which produces an observable $O_\gamma$ of degree $-2 \dim R$.

\begin{rem}
In case $n'=2$ and $\gamma=\Sigma$, $\ii=\id$,
property (\ref{(Theta',Theta')_omega=0}) does not generally hold, so (\ref{CS torsion preobs}) is a pre-observable in the sense of definition \ref{def: class pre-observable}, but $S_\Sigma+S_\gamma$ does not satisfy the classical master equation (\ref{S_Sigma+S_gamma CME}) on the space $\FF_\Sigma\times \FF_\gamma$. On the level of observable $O_\gamma$ this means that $Q_\Sigma O_\gamma=0$, but $S_\Sigma-i\log O_\gamma$ does not satisfy the classical master equation on $\FF_\Sigma$.
\end{rem}

%

In case $n'=1$, for $\gamma$ a connected closed oriented surface of genus $g>0$, we may take $m=0$ and set
$$\LL_\z=R\otimes (H^0(\gamma)\oplus \lambda^\perp) \oplus R^*[1]\otimes (H^0(\gamma)\oplus \lambda)$$
where $\lambda$ is again any line $H^1(\gamma)$ and $\lambda^\perp$ is its orthogonal in $H^1(\gamma)$ w.r.t. the Poincar\'e duality. Assuming $\det\rho\not\equiv 0$, with this choice of $\LL_\z$ we produce an observable $O_\gamma$ of degree $(2g-2)\dim R$ which is well-defined and non-zero on an open subset of $\FF_\Sigma|_\gamma$.

Expression (\ref{CS torsion obs}) also is an observable for $BF$ theory in dimension $D$. In this case $n'$ should satisfy $-1\leq n' \leq D-1$. For $n'=2k$ and $\gamma$ a closed oriented submanifold of $\Sigma $ of dimension $2k+1$, we may take $m=2k+1$ and choose the gauge-fixing $\LL_\z$ as
$$ \LL_\z=(R[2k+1]\oplus R^*[-1])\otimes \LL'$$
where $\LL'\subset H^\bt(\gamma)$ a Lagrangian subspace with the property
\be \dim \LL'_{j}= \dim \LL'_{2k-j} \label{CS torsion obs L' dim condition}\ee where $\LL'_j=\LL'\cap H^j(\gamma)$,  which is equivalent to
\be \dim \LL'_j=B_j-B_{j-1}+\cdots+(-1)^j B_0 \label{CS torsion obs L' dim}\ee
where $B_j=\dim H^j(\gamma)$. For $\LL'$ to exist, we require that for $\gamma$ the combinations of Betti numbers on the r.h.s. of (\ref{CS torsion obs L' dim}) are non-negative.
Assuming $\det\rho\not\equiv 0$, property (\ref{CS torsion obs L' dim condition}) ensures that the integral over zero-modes in (\ref{CS torsion obs}) exists and is non-zero
in an open subset of $\FF_\Sigma|_\gamma$.

\subsection{A codimension 2 (pre-)observable in $BF$ theory}
The following example is due to Cattaneo-Rossi \cite{Cattaneo-Rossi}.
Let $\MM=\g[1]\oplus \g^*[D-2]$ with the structure of degree $D-1$ Hamiltonian $Q$-manifold as in (\ref{BF target}). One constructs a trivial degree $D-3$ Hamiltonian $Q$-bundle over $\MM$ with fiber data
\begin{multline} \label{CR target fiber for BF}
\NN=\g\oplus \g^*[D-3],\quad \A=\langle[\psi,q],\frac{\dd}{\dd q}\rangle+ \langle \ad^*_\psi p ,\frac{\dd}{\dd p} \rangle+ (-1)^D\langle\xi, \frac{\dd}{\dd p}\rangle  ,\\
\omega'=\langle \delta p, \delta q \rangle, \quad \alpha'=\langle p, \delta q\rangle, \quad \Theta'=\langle p, [\psi,q] \rangle+\langle \xi, q\rangle
\end{multline}
Here $q$ is the  $\g$-valued coordinate of degree $0$ on $\g$ and $p$ is the  $\g^*$-valued coordinate of degree $D-3$ on $\g^*[D-3]$; $\psi$ and $\xi$ are coordinates on $\MM$ as in section \ref{sec:AKSZ examples}.

Given a closed oriented $D$-manifold $\Sigma$ and a closed oriented submanifold $\ii:\gamma\hra \Sigma$ of codimension $2$, by the construction of section \ref{sec: AKSZ pre-observable} we get a pre-observable for the $BF$ theory on $\Sigma$ associated to the Hamiltonian $Q$-bundle (\ref{BF target},\ref{CR target fiber for BF}):
\begin{eqnarray}
\label{CR preobs 1}
\FF_\gamma &=& \g\otimes \Omega^\bt(\gamma)\oplus \g^*[D-3]\otimes \Omega^\bt(\gamma), \\
\Omega_\gamma&=&(-1)^D\int_\gamma \langle\delta \mb{p},\delta\mb{q} \rangle,  \label{CR preobs 2} \\
S_\gamma&=&\int_\gamma \langle \mb{p} ,d\mb{q}\rangle + \langle \mb{p}, [\ii^*A,\mb{q}] \rangle+ \langle \ii^*B, \mb{q}\rangle \label{CR preobs 3}
\end{eqnarray}
where $\mb{q}=\sum_{k=0}^{D-2} \mb{q}_{(k)}$ and $\mb{p}=\sum_{k=0}^{D-2} \mb{p}_{(k)}$ are the auxiliary superfields corresponding to $q$ and $p$; $\mb{q}_{(k)}$ and $\mb{p}_{(k)}$ are $k$-forms on $\gamma$ with values in $\g$ and $\g^*$ respectively, having internal degrees $-k$ and $D-3-k$ respectively.

Push-forward of the pre-observable (\ref{CR preobs 1}--\ref{CR preobs 3}) to zero-modes (in the sense of proposition \ref{prop: pushing pre-obs to zero-modes}) is evaluated as in section \ref{sec: torsion observables}:
\begin{multline} \label{CR preobs on 0-modes}
\FF_\gamma^\z=\g\otimes H^\bt(\gamma) \oplus \g^*[D-3]\otimes H^\bt(\gamma),\qquad
\Omega_\gamma^\z= \int_\gamma \langle \delta \mb{p}_\z, \delta \mb{q}_\z \rangle,\\
S_\gamma^\z=\int_\gamma \langle \ii^*B +(-1)^D \ad^*_{\ii^*A}\mb{p}_\z,(\id+ G\; \ad_{\ii^*A})^{-1} \mb{q}_\z \rangle
-i 
\log\mr{tor}(\gamma, \ii^*A,\ad)
\end{multline}
where $\mr{tor}$ is as in (\ref{tor}), for $\rho=\ad$ the adjoint representation of $\g$.

In the case of abelian $BF$ theory, $\g=\RR$, (\ref{CR preobs on 0-modes}) yields
\be e^{i S_\gamma^\z}=e^{i \int_\gamma \langle \ii^*B,\mb{q}_\z \rangle} \label{CR obs abelian}\ee
which is an observable if we understand $\mb{q}_\z\in H^\bt(\gamma)$ as an external parameter (the crucial point here is that by construction (\ref{CR obs abelian}) satisfies the semi-quantum master equation (\ref{pre-observable QME exp form}), but the last term in (\ref{pre-observable QME exp form}) vanishes since $S_\gamma^\z$ depends only on $\mb{q}_\z$).
Taking $\mb{q}_\z=c\cdot {1}\in H^0(\gamma)$, we obtain the observable (\ref{Wilson B surface in abBF}).

In \cite{Cattaneo-Rossi} it is shown that in the case of ``long knots'' $\RR^{D-2}\sim \gamma \subset \Sigma=\RR^D$, embeddings of $\RR^{D-2}$ into $\RR^D$ with prescribed linear asymptotics, the push-forward of pre-observable (\ref{CR preobs 1}--\ref{CR preobs 3}) to zero-modes yields, upon fixing the values of zero-modes to $\mb{q}_\z=c\otimes 1\in \g^*\otimes \Omega^0(\gamma)$ and $\mb{p}_\z=0$,  an observable, generalizing (\ref{Wilson B surface in abBF}) in non-abelian case.

\section{Final remarks}

In this paper we presented a 
two-step construction of classical observables in AKSZ sigma models, which consists of: (i) constructing an extension of the AKSZ theory to a BV theory on a larger space of fields using an extension of the target to a Hamiltonian $Q$-bundle, (ii) using the BV push-forward to the original space of fields to produce an observable. We also provided some examples, and in particular recovered the well-known Wilson loop observable in Chern-Simons theory together with its path integral representation \cite{AFS} and the Cattaneo-Rossi ``Wilson surface'' observable in $BF$ theory.

There are natural questions to this construction not answered here, to which we hope to return in a future publication:
\begin{itemize}
\item Extend the zoo of explicit examples. In particular,
\begin{itemize}
\item give more explicit examples of ``Wilson loops'' in Poisson sigma model, section \ref{sec: Wilson loop in PSM} (which requires a solution of (\ref{PSM Wilson loop eq on F}) as the input for step (i) of the construction and its quantization satisfying (\ref{PSM Wilson loop eq on F hat}) for step (ii)) other than those mentioned in remark \ref{rem: PSM Wilson loop};
\item give an example corresponding to a \textit{non-trivial} Hamiltonian $Q$-bundle over the target of the AKSZ sigma model.
\end{itemize}
\item Calculate the isotopy invariants of embeddings given by expectation values of our observables.
\end{itemize}

\subsection{Extension to source manifolds with boundary.}
Our construction of observables possesses an
extension to the case when the source manifold $\Sigma$ has a boundary $\dd \Sigma$, and the submanifold $\gamma\subset \Sigma$ on which the observable is supported is also allowed to have boundary, $\dd \gamma\subset \dd \Sigma$. We will outline this extension here, treating gauge theories with boundary in ``BV-BFV formalism'' \cite{CMR}. For a more detailed example, Chern-Simons theory on a manifold with boundary with Wilson lines ending on the boundary, the reader is referred to \cite{ABM}.

Recall \cite{CMR} that a gauge theory on manifold $\Sigma$ with boundary $\dd\Sigma$ in BV-BFV formalism is described by the following data:
\begin{itemize}
\item A degree 0 Hamiltonian $Q$-manifold $(\FF_{\dd\Sigma},Q_{\dd\Sigma},\Omega_{\dd\Sigma}=\delta\alpha_{\dd\Sigma},S_{\dd\Sigma})$ --- ``the BFV phase space'' (or ``the space of boundary fields'') associated to the boundary $\dd\Sigma$.
\item A $Q$-manifold $(\FF_\Sigma,Q_\Sigma)$ (``the space of bulk fields'') endowed with a $Q$-morphism $\pi: \FF_\Sigma\ra \FF_{\dd\Sigma}$ (restriction of fields to the boundary), a degree -1 symplectic form $\Omega_\Sigma$ and a degree 0 action $S_\Sigma\in C^\infty(\FF_\Sigma)$ satisfying
    \be \delta S_\Sigma =\iota_{Q_\Sigma}\Omega_\Sigma+ \pi^* \alpha_{\dd\Sigma} \label{BV-BFV eq}\ee
\end{itemize}
Equation (\ref{BV-BFV eq}) replaces the condition that $S_\Sigma$ is the Hamiltonian function for $Q_\Sigma$ in definition \ref{def: class BV theory}, and its consequence $Q_\Sigma S_\Sigma = \pi^*(\iota_{Q_{\dd\Sigma}}\alpha_{\dd\Sigma}- 2 S_{\dd\Sigma})$ or, equivalently,
$$\frac{1}{2}\underbrace{\{S_\Sigma,S_\Sigma\}_{\Omega_\Sigma}}_{:=\iota_{Q_\Sigma}\iota_{Q_\Sigma}\Omega_\Sigma}+ \pi^* S_{\dd\Sigma}=0$$
replaces the classical master equation.

AKSZ sigma models on manifolds with boundary fit naturally into BV-BFV formalism: one takes the standard construction (\ref{mapping space},\ref{Q AKSZ},\ref{AKSZ Omega},\ref{AKSZ S}) for the bulk and sets
\begin{multline*}
\FF_{\dd\Sigma}=\Map(T[1]\dd\Sigma,\MM),\quad Q_{\dd\Sigma}=d_{\dd\Sigma}^\lifted+Q^\lifted,\quad \Omega_{\dd\Sigma}= \tau_{\dd\Sigma}(\omega), \\
\alpha_{\dd\Sigma}=(-1)^{\dim\Sigma-1}\tau_{\dd\Sigma}(\alpha),\quad S_{\dd\Sigma}=(-1)^{\dim\Sigma-1}\left(\iota_{d_{\dd\Sigma}^\lifted}\tau_{\dd\Sigma}(\alpha)+ \tau_{\dd\Sigma}(\Theta)\right)
\end{multline*}
for the boundary; $Q$-morphism $\pi:\Map(T[1]\Sigma,\MM)\ra \Map(T[1]\dd\Sigma,\MM)$ is the pull-back by the inclusion $\dd\Sigma\hra\Sigma$.

A classical observable for a gauge theory in BV-BFV formalism (generalizing definition \ref{def: class observable}) can be defined as the following collection of data.
\begin{itemize}
\item A graded vector space (the auxiliary space of states) $\HH_{\dd\gamma}$ and a degree 1 element
$$\ddd_{\dd \gamma}\in C^\infty(\FF_{\dd\Sigma})\otimes \End(\HH_{\dd\gamma})$$
satisfying
$$Q_{\dd\Sigma} \ddd_{\dd\gamma}+\ddd_{\dd\gamma}^2=0$$
\item A degree 0 $\HH_{\dd\gamma}$-valued function of bulk fields
$$O_{\gamma}\in C^\infty(\FF_\Sigma)\otimes \HH_{\dd\gamma}$$
satisfying
$$(Q_\Sigma+\pi^*\ddd_{\dd\gamma})O_\gamma=0 $$
\end{itemize}
The locality requirement is that $O_\gamma$ only depends on the restrictions of bulk fields to $\gamma$; likewise, $\ddd_{\dd \gamma}$ should only depend on the restrictions of boundary fields to $\dd\gamma$.

We can also define a pre-observable in the BV-BFV context as the following data.
\begin{itemize}
\item A degree 0 Hamiltonian $Q$-bundle over $(\FF_{\dd\Sigma},Q_{\dd\Sigma})$ with fiber data $(\FF_{\dd\gamma},\A_{\dd\gamma},\Omega_{\dd\gamma}=\delta\alpha_{\dd\gamma},S_{\dd\gamma})$.
\item A trivial $Q$-bundle over $(\FF_\Sigma,Q_\Sigma)$ with fiber data $(\FF_{\gamma},\A_\gamma)$, with projection $\bar\pi:\FF_\gamma\ra \FF_{\dd\gamma}$ such that $d\bar\pi( \A_\gamma)=\A_{\dd\gamma}$, with fiber degree -1 symplectic structure $\Omega_\gamma$ and with degree 0 auxiliary action $S_\gamma\in C^\infty(\FF_\Sigma\times \FF_\gamma)$ satisfying
    $$\delta^\ver S_\gamma=\iota_{\A_\gamma}\Omega_\gamma+\bar\pi^*\alpha_{\dd\gamma},\quad
    Q_\Sigma S_\gamma+\frac{1}{2}\underbrace{\{S_\gamma,S_\gamma\}_{\Omega_\gamma}}_{:=\iota_{\A_\gamma}\iota_{\A_\gamma}\Omega_\gamma}+ \pi_\tot^*S_{\dd\gamma} = 0
    $$
    where $\delta^\ver$ is the de Rham differential on $\FF_\gamma$ and $\pi_\tot=\pi\times\bar\pi: \FF_\Sigma\times \FF_\gamma\ra \FF_{\dd\Sigma}\times \FF_{\dd\gamma}$.
\end{itemize}

Passing from a pre-observable to observable is a quantization problem. We construct $\HH_{\dd\gamma}$ as the geometric quantization of the symplectic manifold $(\FF_{\dd\gamma},\Omega_{\dd\gamma})$ and $\ddd_{\dd\gamma}$ as the geometric quantization of $S_{\dd\gamma}$:
\be \HH_{\dd\gamma}=\mr{GeomQuant}(\FF_{\dd\gamma},\Omega_{\dd\gamma}),\qquad \ddd_{\dd\gamma}=i\hat S_{\dd\gamma} \label{H via geomQ}\ee
For simplicity, assume that the Lagrangian polarization of $\FF_{\dd\Sigma}$ used for the geometric quantization is the vertical polarization of a fibration $\p:\FF_{\dd\Sigma}\ra \BB$. Then $O_\gamma$ is given by the following modification of the fiber BV integral:
\be O_\gamma(X,b)=\int_{\LL_b\subset \bar\pi^{-1}\p^{-1}b}\DD Y\; e^{i S_\gamma(X,Y)} \label{O_gamma via fiber BV int}\ee
for $b\in \BB$, $X\in\FF_\Sigma$; $Y$ runs over $\FF_\gamma$; $\LL_b$ is a Lagrangian in $\bar\pi^{-1}\p^{-1}b\subset \FF_\gamma$.

Given an AKSZ theory on $\Sigma$ with target $\MM$ and a Hamiltonian $Q$-bundle over $\MM$, as a first step, we construct a pre-observable with the bulk data given by old formulae (\ref{AKSZ pre-observable}) and boundary data given by the same formulae with $\gamma$ replaced by $\dd\gamma$:
\begin{multline*}
\FF_{\dd\gamma}=\Map(T[1]\dd\gamma,\NN),\quad \A_{\dd\gamma}=d_{\dd\gamma}^\lifted+p_\dd^*\A^\lifted,\quad
\Omega_{\dd\gamma}=\tau_{\dd\gamma}(\omega'),\\
\alpha_{\dd\gamma}=(-1)^{\dim\gamma-1}\tau_{\dd\gamma}(\alpha'), \quad S_{\dd\gamma}=(-1)^{\dim\gamma-1}\left(\iota_{d_{\dd\gamma}^\lifted}\alpha_{\dd\gamma}+p_\dd^*\tau_{\dd\gamma}^\tot(\Theta')\right)
\end{multline*}
where $p_\dd:\Map(T[1]\dd\Sigma,\MM)\ra \Map(T[1]\dd\gamma,\MM)$ is the restriction of ambient boundary fields to $\dd\gamma$.

As the second step, we pass from this pre-observable to an observable using construction (\ref{H via geomQ},\ref{O_gamma via fiber BV int}).


\begin{thebibliography}{99}
\bibitem{AKSZ} M. Aleksandrov, M. Kontsevich, A. Schwarz and O.
Zaboronsky, \textit{The geometry of the master equation and
topological quantum field theory,} Int. J. Mod. Phys. A 12 (1997), 1405--1430

\bibitem{AFS} A. Alekseev, L. Faddeev, S. Shatashvili, \textit{Quantization of symplectic orbits of compact Lie groups by means of the functional integral,} J. Geom. and Phys. 5, 3 (1988) 391--406

\bibitem{1DCS} A. Alekseev, P. Mnev, \textit{One-dimensional Chern-Simons theory}, arXiv:1005.2111 [hep-th]

\bibitem{ABM} A. Alekseev, Y. Barmaz, P. Mnev, \textit{Chern-Simons theory with Wilson lines and boundary in the BV(-BFV) formalism}, in preparation

\bibitem{AS} S. Axelrod, I. M. Singer, \textit{Chern-Simons perturbation theory. I.} Perspectives in mathematical physics, 17--49, Conf. Proc. Lecture Notes Math. Phys., III, Int. Press, Cambridge, MA (1994);
\textit{Chern-Simons perturbation theory. II.} J. Differential Geom. 39, 1 (1994) 173--213

\bibitem{BT} R. Bott and C. Taubes, \textit{On the self-linking of knots}, J. Math. Phys. 35 (1994) 5247--5287

\bibitem{CF_AKSZ} A. S. Cattaneo, G. Felder, \textit{On the AKSZ formulation of the Poisson sigma model,} Lett.
Math. Phys. 56 (2001) 163--179

\bibitem{CMR} A. S. Cattaneo, P. Mnev, N. Reshetikhin, \textit{Classical BV theories on manifolds with boundaries,} arXiv:1201.0290 [math-ph]

\bibitem{Cattaneo-Rossi} A. S. Cattaneo, C. Rossi, \textit{Wilson surfaces and higher dimensional knot invariants}, arXiv:math-ph/0210037

\bibitem{Co_renorm} K. Costello, \textit{Renormalization and effective field theory}, Mathematical Surveys and Monographs, vol. 170 (2011) AMS

\bibitem{Kotov-Strobl} A. Kotov, T. Strobl, \textit{Characteristic classes associated to Q-bundles,} arXiv:0711.4106 [math.DG]

\bibitem{LosevTQFT} A. Losev, \textit{Lectures on topological quantum field theories,} 2008

\bibitem{DiscrBF} P. Mnev, \textit{Discrete $BF$ theory},  arXiv:0809.1160 (hep-th)


\bibitem{Royt} D. Roytenberg, \textit{AKSZ-BV Formalism and Courant Algebroid-induced Topological Field Theories,}
Lett. Math. Phys. 79 (2007) 143--159

\bibitem{Schwarz} A. S. Schwarz, \textit{Geometry of Batalin-Vilkovisky quantization,} Comm. Math. Phys. 155 (1993) 249--260

\bibitem{Witten_Jones} E. Witten,
\textit{Quantum field theory and the Jones polynomial,}
Comm. Math. Phys. 121 (1989), 351--399.

\end{thebibliography}
\end{document}